\documentclass[11pt, a4paper,UKenglish]{article}

\usepackage{booktabs} 
\usepackage[noend,ruled]{algorithm2e} 

\SetAlFnt{\small}
\SetAlCapFnt{\small}
\SetAlCapNameFnt{\small}
\SetAlCapHSkip{0pt}
\IncMargin{-\parindent}

\usepackage{hyperref}
\usepackage[utf8]{inputenc}
\usepackage[margin=1in]{geometry}

\usepackage{dsfont}
\usepackage{amsmath,amsthm, amsfonts, amssymb}
\usepackage{mathtools}
\usepackage{thmtools, thm-restate}
\newtheorem{lemma}{Lemma}

\newtheorem{corollary}{Corollary}
\newtheorem{proposition}{Proposition}
\newtheorem{observation}{Observation}

\theoremstyle{definition}

\usepackage{natbib}
\bibliographystyle{abbrvnat}

\usepackage{xcolor}

\newcommand{\OPT}{\mathrm{SW}_{\mathrm{opt}}}
\newcommand{\ALG}{\mathrm{SW}_{\mathrm{pp}}}
\newcommand{\revpp}{\mathrm{revenue}_{\mathrm{pp}}}
\newcommand{\revopt}{\mathrm{revenue}_{\mathrm{opt}}}
\newcommand{\Ex}[2][]{\mbox{\rm\bf E}_{#1}\left[#2\right]}
\renewcommand{\Pr}[2][]{\mbox{\rm\bf Pr}_{#1}\left[#2\right]}
\newcommand{\e}{\mathrm{e}}
\newcommand{\growingmid}{\mathrel{}\middle|\mathrel{}}
\newcommand{\transformedquantile}{R}
\newcommand{\quant}{z}

\title{Asymptotically Optimal Welfare of Posted Pricing for Multiple Items with MHR Distributions}
\author{Alexander Braun \footnote{alexander.braun@uni-bonn.de} \qquad Matthias Buttkus \qquad Thomas Kesselheim \footnote{thomas.kesselheim@uni-bonn.de} \\{\small Institute of Computer Science, University of Bonn}}


\begin{document}
	\maketitle
	\begin{abstract}
		We consider the problem of posting prices for unit-demand buyers if all $n$ buyers have identically distributed valuations drawn from a distribution with monotone hazard rate. We show that even with multiple items asymptotically optimal welfare can be guaranteed.

Our main results apply to the case that either a buyer's value for different items are independent or that they are perfectly correlated. We give mechanisms using dynamic prices that obtain a $1 - \Theta \left( \frac{1}{\log n}\right)$-fraction of the optimal social welfare in expectation. Furthermore, we devise mechanisms that only use static item prices and are $1 - \Theta \left( \frac{\log\log\log n}{\log n}\right)$-competitive compared to the optimal social welfare. As we show, both guarantees are asymptotically optimal, even for a single item and exponential distributions.
	\end{abstract}
	\newpage
\section{Introduction}
\label{section:introduction}

Posting prices is a very simple way to de-centralize markets. One assumes that buyers arrive sequentially. Whenever one of them arrives, a mechanism offers a menu of items at suitably defined prices. The buyer then decides to accept any offer, depending on what maximizes her own utility. Such a mechanism is incentive compatible by design, usually easy to explain and can be implemented online. For this reason, there is a large interest in understanding what social welfare and revenue can be guaranteed in comparison to mechanisms that optimize the respective objective.

Let us consider the following setting: There is a set of $m$ heterogeneous items $M$, each of which we would like to allocated to one of $n$ buyers. Each buyer $i$ has a private valuation function $v_i\colon 2^M \to \mathbb{R}_{\geq 0}$. We assume that valuation functions are unit-demand. That is, $v_i(S) = \max_{j \in S} v_i(\{ j \})$, meaning that the value a buyer associates to a set is simply the one of the most valuable item in this set. Let $\OPT = \max_{\text{allocations $(S_1, \ldots, S_n)$}} \sum_{i = 1}^n v_i(S_i)$ denote the optimal (offline/ex-post) social welfare. Note that this optimal solution is nothing but the maximum-weight matching in a bipartite graph in which all buyers and items correspond to a vertex each and an edge between the vertices of buyer $i$ and item $j$ has weight $v_i(\{j\})$.

To capture the pricing setting, we assume that the functions $v_1, \ldots, v_n$ are unknown a priori; all of them are drawn independently from the same, publicly known distribution. For every item, one can either set a static item price or change the prices dynamically over time. Buyers arrive one-by-one and each of them chooses the set of items that maximizes her utility given the current prices among the remaining items. Static prices have the advantage that they are easier to explain and thus give easier mechanisms. However, dynamic prices can yield both higher welfare and revenue because they can be adapted to the remaining supply and the remaining number of buyers to appear.

Coming back to the interpretation of a bipartite matching problem, a posted-prices mechanism corresponds to an online algorithm, where the buyers correspond to online vertices and the items correspond to offline vertices. However, not every online algorithm necessarily corresponds to a posted-prices mechanism: There might not be item prices such that the choices of the algorithm correspond to the ones by a buyer maximizing their utility.

We would like to understand which fraction of the optimal (offline/ex-post) welfare posted-prices mechanisms can guarantee. The case of a single item is well understood via \emph{prophet inequalities} from optimal stopping theory. Let us call a posted-prices mechanism $\beta$-competitive (with respect to social welfare) if its expected welfare $\Ex{\ALG}$ is at least $\beta \Ex{\OPT}$. For a static price and a single item, the best such guarantee is $\beta = 1 - \frac{1}{e} \approx 0.63$ \citep{DBLP:conf/sigecom/CorreaFHOV17,DBLP:conf/soda/EhsaniHKS18}; for dynamic pricing and a single item, it is $\beta \approx 0.745$ \citep{DBLP:conf/stoc/AbolhassaniEEHK17,DBLP:conf/sigecom/CorreaFHOV17}. There are a number of extensions of these results to multiple items (see Section~\ref{sec:related_work} for details), also going beyond unit-demand valuations, many of which are $O(1)$-competitive.

The competitive ratios of $\beta = 1 - \frac{1}{e} \approx 0.63$ and $\beta \approx 0.745$ are optimal in the sense that there are distributions and choices of $n$ such that no better guarantee can be obtained. Importantly, they are still tight when imposing a lower bound on $n$. That is, even for large $n$, there is a distribution such that if all values are drawn from this distribution the respective bound cannot be beaten.

\subsection{Distributions with Monotone Hazard Rate}

In this paper, we strengthen previous results by restricting the class of distributions to ones with monotone hazard rate. The single-item case is defined as follows. Consider a probability distribution on the reals with probability density function (PDF) $f$ and cumulative distribution function (CDF) $F$, its hazard rate $h$ is defined by $h(x) = \frac{f(x)}{1 - F(x)}$ for $x$ with $F(x) < 1$. It has a \emph{monotone hazard rate} (MHR) --- more precisely, increasing hazard rate --- if $h$ is a non-decreasing function. It has become a common and well-studied approach to model buyer preferences by MHR distributions. One of the reasons is that many standard distributions exhibit a monotone hazard rate such as, for example, uniform, normal, exponential and logistic distributions. (For a much more extensive list see \citet{rinne2014hazard}.) Furthermore, the monotone hazard rate of distributions is also preserved under certain operations; for example, order statistics of MHR distributions also have an MHR distribution. Additionally, every MHR distribution is regular in the sense that its virtual value function \citep{DBLP:conf/sigecom/ChawlaHK07, DBLP:journals/mor/Myerson81} is increasing. 

We generalize results to multiple items and consider two fundamental settings. On the one hand, we consider \emph{independent item valuations}, i.e. $v_{i,j} \sim \mathcal{D}_j$ is an independent draw from a distribution $\mathcal{D}_j$. In other words, the value of item $j$ is independent of the value of item $j'$ and both values are drawn from (possibly different) MHR distribution as defined above. On the other hand, we assume correlated values for items via the notion of \emph{separable item valuations}, which are common in ad auctions~\citep{EdelmanOS07,Varian07}: Each buyer has a type $t_i \geq 0$ and each item has an item-dependent multiplier $\alpha_j$ where now $v_{i,j} = \alpha_j \cdot t_i$. Again, $t_i \sim \mathcal{D}$ and $\mathcal{D}$ is a distribution with monotone hazard rate. We note that this case subsumes and extends the case of $k$ identical items.

As we show, in these cases, asymptotically optimal welfare can be guaranteed. That is, if $n$ grows large, the social welfare when suitably choosing prices is within a $1 - o(1)$ factor of the optimum, where the $o(1)$ term is independent of the distribution as long as its marginals satisfy the MHR property. Stated differently, there is a sequence $(\beta_n)_{n \in \mathbb{N}}$ with $\beta_n \to 1$ for $n \to \infty$ such that for every number of buyers $n$ there exists a posted-prices mechanism that takes any distribution with MHR as input and guarantees $\Ex{\ALG} \geq \beta_n \Ex{\OPT}$. As pointed out before, such a result does not hold for arbitrary, non-MHR distributions. Even with a single item, the limit is then upper-bounded by $\approx 0.745$.

A similar effect has already been observed by \citet{DBLP:conf/wine/GiannakopoulosZ18}. They show that the revenue of static pricing for a single item with MHR distributions asymptotically reaches the optimal revenue. In contrast, our results concern welfare. Still, some of our results also have implications for revenue, either because we bound the revenue or because one could apply the results to virtual values.

\subsection{Our Results and Techniques}
We design mechanisms for both independent and separable item valuations. The ones using dynamic prices ensure a $\left( 1 - O\left(\frac{1}{\log n}\right) \right)$-fraction of the expected optimal social welfare. The ones using static prices guarantee a $\left( 1 - O\left(\frac{ \log  \log  \log n}{ \log n} \right) \right)$-fraction.  We also show that these guarantees are best possible, even in the case of only a single item. Note that the bounds are independent of the number of items $m$, which may also grow in the number of buyers $n$.

\noindent\textbf{Independent Valuations (Section~\ref{section:independent_valuations}).} The technically most interesting result is the one on dynamic pricing when values are independent across items. The idea is to set prices so that the offline optimum is mimicked. If item $j$ is allocated in the optimal allocation with probability $q_j$, then we would like it to be sold in every step with an ex-ante probability of $\frac{q_j}{n}$. However, analyzing such a selling process is still difficult because items are incomparable and bounds for MHR distributions cannot be applied directly to draws from multiple distributions, which are not necessarily identical. To bypass this problem, we introduce a reduction that allows us to view item valuations not only as independent but also as identically distributed. To this end, we compare the selling process of our mechanism to a hypothetical setting, in which buyers do not make their decisions based on the actual utility but in quantile space. We observe that the revenue of both is identically distributed and utility is maximized in the former mechanism. As a consequence, the welfare obtained by the quantile allocation rule is a feasible lower bound on the welfare of the sequential posted-prices mechanism. Only afterwards, we can apply a concentration bound due to the MHR restriction. 

The idea of our mechanism using static prices is to set prices suitably high in order to bound the revenue of our mechanism with a sufficient fraction of the optimum. While all other bounds apply for any number of items $m$, this bound unfortunately requires $m \leq \frac{n}{\left(\log \log n\right)^2}$. We leave it as an open problem to extend the result for larger number of items.

\noindent\textbf{Separable Valuations (Section~\ref{section_alpha_correlation}).} 
Our way of setting dynamic prices in the case of separable valuations is similar to the approach in independent valuations. This setting is even a little simpler because we can assume without loss of generality that there are as many items as buyers. Our pricing strategy ensures that in each step each item is sold equally likely as well as one item is sold for sure. In the analysis, we observe that we match a buyer and an item if the quantile of the buyer's value is in a specific range. Now, the MHR property comes into play which allows to bound quantiles of the distribution in a suitable way.

In the static case, to lower-bound the welfare of our mechanism, we compare it to the one of the VCG mechanism \citep{RePEc:kap:pubcho:v:11:y:1971:i:1:p:17-33, RePEc:ecm:emetrp:v:41:y:1973:i:4:p:617-31, RePEc:bla:jfinan:v:16:y:1961:i:1:p:8-37} which maximizes social welfare. To this end, we split social welfare in revenue and utility and bound each quantity separately. That is, we relate the revenue and the sum of buyer utilities of our posted-pricing mechanism to the ones of the VCG mechanism separately. For the revenue, we set prices fairly low to ensure that we sell all items with reasonably high probability. Still, these prices are high enough to use the MHR property and derive a suitable lower bound of the prices. The utility comparison is more complicated, we solve this issue by an unusual application of the equality of expected revenue and virtual welfare due to \citet{DBLP:journals/mor/Myerson81}.

\noindent\textbf{Optimality (Section~\ref{section:optimality}).} The achieved bounds on the competitive ratio are optimal for both dynamic as well as static pricing. We show this by considering the single-item case with an exponential distribution, which is a special case of both independent and separable valuations. For dynamic prices, we use the correspondence to a Markov decision process showing that no online algorithm is better than $1- \Omega \left( \frac{1}{\log n} \right)$-competitive. Then we also show that the competitive ratio cannot be better than $1 - \Omega \left( \frac{\log \log \log n}{\log n} \right)$ for any choice of a static price by writing out the expected social welfare explicitly.

\noindent\textbf{Subadditive Valuations (Section~\ref{section:subadditive}).} We also demonstrate that our techniques are applicable beyond unit-demand settings by giving mechanisms for the more general class of subadditive valuation functions. Our dynamic pricing mechanism is $1 - O \left( \frac{1+\log m}{\log n}\right)$-competitive for subadditive buyers. We complement this by a static pricing mechanism which is $1 - O \left( \frac{\log\log\log n}{\log n} + \frac{\log m}{\log n}\right)$-competitive.
Both guarantees can be derived by showing that the revenue of the posted pricing mechanism is at least as high as the respective fraction of the optimal social welfare. As a consequence, these bounds directly imply the competitive ratios for welfare and revenue. For small $m$, these bounds are again tight by our optimality results. Obtaining tight bounds for large $m$ still remains an open problem.

\subsection{Further Related Work}
\label{sec:related_work}

As mentioned already, our setup restricted to a single item is highly related to prophet inequalities. Prophet inequalities have their origin in optimal stopping theory, dating back to the 1970s \citep{krengel1977semiamarts}. Only much later they were considered as a tool to understand the loss by posting prices as opposed to using other mechanisms. In this context, Samuel-Cahn's result~\citep{Samuel84} then got the interesting interpretation that posting an appropriately chosen static price for an item is $\frac{1}{2}$-competitive for any buyer distributions; different buyers may even be drawn from different distributions. This guarantee is optimal, even for dynamic pricing.

Improvements for the single-item case are only possible by imposing further assumptions. Most importantly, this concerns the case that all buyer values are drawn from the same distribution. While already discussed by \citet{HillKertz82}, this problem has been solved only very recently by devising an $\approx 0.745$-competitive mechanism that relies on dynamic pricing \citep{DBLP:conf/stoc/AbolhassaniEEHK17,DBLP:conf/sigecom/CorreaFHOV17}. By using static pricing, one cannot be better than $1 - \frac{1}{e} \approx 0.63$-competitive \citep{DBLP:conf/sigecom/CorreaFHOV17,DBLP:conf/soda/EhsaniHKS18}.

Better guarantees can also be achieved by assuming that multiple, identical items are for sale. In this case, one can use concentration results. The respective competitive ratios tend to $1$ for a growing number of item copies. \citet{DBLP:conf/aaai/HajiaghayiKS07} gave the first guarantee for such a setting, \citet{DBLP:journals/siamcomp/Alaei14} later improved it to tightness.

For identical regular distributions, all of the above results also apply to welfare as well as revenue maximization because the prices can be imposed in the space of the virtual valuation \citep{DBLP:journals/mor/Myerson81}; also see \citet{DBLP:conf/stoc/ChawlaHMS10}. Also impossibility results transfer \citep{DBLP:journals/orl/CorreaFPV19}.

When it comes to multiple, heterogeneous items, there is a significant difference between welfare and revenue maximization because Myerson's characterization does not apply anymore. For welfare maximization, \citet{DBLP:conf/soda/FeldmanGL15} show that static item prices still yield a competitive ratio of $\frac{1}{2}$ even for XOS valuations and not necessarily identically distributed buyer valuations. Concerning subadditive valuation functions, \citet{zhang:LIPIcs:2020:12948} give a $O(\log m / \log \log m)$-competitive prophet inequality, \citet{10.1145/3440968.3440972} show how to obtain a competitive ratio of $O(\log \log m)$. The only improvement for identically distributed buyers is to $1 - \frac{1}{e}$ for unit-demand buyers based on dynamic pricing \citep{DBLP:conf/soda/EhsaniHKS18}. Among others, \citet{DBLP:conf/stoc/ChawlaDHKMS17} considered a combinatorial generalization of such a setting with many item copies (see Lucier's survey \citep{DBLP:journals/sigecom/Lucier17} on a broader overview of combinatorial generalizations).

For revenue maximization, one usually imposes the additional assumption that items are independent. This makes it possible to also apply prophet inequalities on the sequence of items rather than buyers and thus maximize revenue for unit-demand buyers via posted prices \citep{DBLP:conf/sigecom/ChawlaHK07,DBLP:conf/stoc/ChawlaHMS10}. \citet{DBLP:conf/stoc/CaiZ17} consider more general XOS and subadditive valuations and apply a duality framework instead. They design a posted-prices mechanism with an entry fee that gives an $O(1)$ or $O(\log m)$ approximation to the optimal revenue. In \citet{10.1145/3440968.3440972}, the approximation of the optimal revenue for subadditive valuations is improved to $O(\log \log m)$.

There are surprisingly few results on pricing and prophet inequalities that derive better guarantees by imposing additional assumptions on the distribution. \citet{Babaioff:2017:ACM:3037382} consider the problem of maximizing revenue when selling a single item to one of $n$ buyers drawn i.i.d.\ from an \emph{unknown} MHR distribution with a bounded support $[1, h]$. If $n$ is large enough compared to $h$, they get a constant-factor approximation to the optimal revenue using dynamic posted prices. Note that in contrast, in our paper, we assume to know the underlying distributions perfectly. \citet{DBLP:conf/wine/GiannakopoulosZ18} consider revenue maximization in the single-item setting with valuations drawn independently from the same MHR distribution. They show that by offering the item for the same static price to all bidders one can achieve asymptotically optimal revenue. More precisely, one of their main results is that one gets within a factor of $1 - O\left(\frac{\ln \ln n}{\ln n}\right)$. While they claim this result is ``essentially tight'', we show that the best factor is indeed $1 - \Theta\left(\frac{\ln \ln \ln n}{\ln n}\right)$ because it is a special case of our results (see Section~\ref{section:subadditive}). It is not clear, how one could apply their result to welfare maximization as the MHR property is not preserved when moving between virtual and actual values. Furthermore, their results do not admit any apparent generalization to multiple items. \citet{10.1007/978-3-030-35389-6_17} also consider revenue maximization in the single-item setting with identical and independent MHR values but in a non-asymptotic sense, providing a bound for every $n$.

\section{Preliminaries}
\label{section:preliminiaries}
We consider a setting of $n$ buyers $N$ and a set $M$ of $m$ items. Every buyer has a valuation function $v_i\colon 2^M \to \mathbb{R}_{\geq 0}$ mapping each bundle of items to the buyer's valuation. We assume buyers to be unit-demand, that is $v_i(S) = \max_{j \in S} v_{i,j}$. The functions $v_1, \ldots, v_n$ are unknown a priori but all drawn independently from the same, publicly known distribution $\mathcal{D}$. Let $\mathcal{D}_j$ be the marginal distribution of $v_{i,j}$, which is the value of a buyer for being allocated item $j$. We assume that $\mathcal{D}_j$ is a continuous, real, non-negative distribution with monotone hazard rate. That is, let $F_j$ be the cumulative distribution function of $\mathcal{D}_j$ and $f_j$ its probability density function. The distribution's hazard rate is defined as $h_j(x) = f_j(x)/(1-F_j(x))$ for all $x$ such that $F_j(x) < 1$. We assume a \emph{monotone hazard rate}, which means that $h_j$ is a non-decreasing function. Equivalently, we can require $x \mapsto \log(1 - F_j(x))$ to be a concave function.

We design \emph{posted-prices mechanisms}. That is, the buyers arrive one by one in order $1, \ldots, n$. In the $i$-th step, buyer $i$ arrives and has the choice between all items which have not been allocated so far. Let $M^{(i)}$ denote this set of available items. The mechanism presents the $i$-th buyer a menu of prices $p^{(i)}_j$ for all items $j \in M^{(i)}$. The buyer then picks the item $j_i \in M^{(i)}$ which maximizes her \emph{utility} $v_{i,j_i} - p_{j_i}^{(i)}$ if positive\footnote{We can assume that any buyer is buying at most one item as buyers are unit-demand. Hence, no buyer can increase utility by buying a second (lower valued) item.}. Buyer $i$ and item $j_i$ are matched immediately and irrevocably. If buyer $i$ has negative utility for all items $j \in M^{(i)}$, then buyer $i$ does not buy any item and remains unmatched. Generally, the prices for buyer $i$ may depend arbitrarily on $M^{(i)}$ and the distribution $\mathcal{D}$. We call prices \emph{static} if there are $p_1, \ldots, p_m$ such that $p^{(i)}_j = p_j$ for all $i$ and all $j$. 

Fix any posted-prices mechanism and let $j_i$ denote the item allocated to buyer $i$ (set $j_i = \perp$ if $i$ remains unmatched in the mechanism). The \emph{expected social welfare} of the mechanism is given by $\Ex[]{\sum_{i = 1}^n v_{i,j_i}  } =: \Ex[]{\ALG}$.
In comparison, let the social welfare maximizing allocation assign item $j_i^\ast$ to buyer $i$. Its expected social welfare is therefore given by $\Ex[]{\sum_{i = 1}^n v_{i,j_i^\ast} } =: \Ex[]{\OPT}$. 

We call a posted-prices mechanism \emph{$\beta$-competitive} if it ensures that the expected social welfare of its allocation is at least a $\beta$-fraction of the expected optimal social welfare. That is, for any choice of distribution,
\[
\Ex{\ALG} =\Ex[]{\sum_{i = 1}^n v_{i,j_i}  } \geq \beta \Ex[]{\sum_{i = 1}^n v_{i,j_i^\ast}} = \beta \Ex{\OPT} \enspace.
\]

\section{Asymptotically Tight Bounds for Independent Valuations}
\label{section:independent_valuations}
In this section, we show how to derive bounds if the buyers' values are independent across items. That is, each $v_{i,j} \sim \mathcal{D}_j$ is drawn independently from a distribution with monotone hazard rate. This is a standard assumption when considering multiple items \citep{DBLP:conf/sigecom/ChawlaHK07, DBLP:conf/stoc/ChawlaHMS10}. As a consequence, the distribution over valuations is a product distribution $v_i = \left( v_{i,1},\dots,v_{i,m} \right) \sim \mathcal{D} = \prod_{j=1}^{n} \mathcal{D}_j$ for any $i \in N$ and every $\mathcal{D}_j$ satisfies the MHR condition. 

\subsection{Dynamic prices}
\label{section:independent_valuations_dynamic}
We first consider the case of dynamic pricing mechanisms. Without loss of generality, we can assume that $m \geq n$. If we have less items than buyers, i.e. $m < n$, we can add dummy items with value $0$ to ensure $m=n$. Matching $i$ to one of these dummy items in the mechanism then corresponds to leaving $i$ unmatched. Observe that technically a point mass on $0$ is not a MHR distribution. However, all relevant statements still apply.

Our mechanism is based on a pricing rule which balances the probability of selling a specific item. To this end, let $ M^{(i)}$ be the set of remaining items as buyer $i$ arrives. We determine dynamic prices such that one item is allocated for sure in every step. Therefore, always $\lvert M^{(i)} \rvert = m - i + 1$. We can now define $q_{j}^{(i)}$ to be the probability that item $j$ is allocated in the ``remaining'' offline optimum on $M^{(i)}$ and $n-i+1$ buyers if $j \in M^{(i)}$ and $0$ else. In other words, if $j \in M^{(i)}$, $q_{j}^{(i)}$ is the probability that item $j$ is allocated in the offline optimum constrained to buyers $1, \ldots, i-1$ receiving the items from $M \setminus M^{(i)}$. The prices $(p^{(i)}_j)_{j \in M^{(i)}}$ are now chosen such that buyer $i$ buys item $j$ with probability $\frac{q^{(i)}_j}{n-i+1}$ and one item is allocated for sure. To see that such prices exist, observe the following: fix any price vector $\mathbf{x} = \left( x_j \right)_{j \in M^{(i)}}$ and denote by $r^{(i)}_j(\mathbf{x}) = \Pr[]{i \text{ buys item } j \text{ at prices } \mathbf{x} \growingmid M^{(i)} }$. As the random variables $v_{i,j}$ are continuous and independent, the probability that buyer $i$ buys item $j$ at prices $\mathbf{x}$ given the current set of items $M^{(i)}$ is continuous in $x_j$. Hence, we can consider the mapping $\left(\phi^{(i)}( \mathbf{x} ) \right)_j = \frac{  n-i+1 }{q^{(i)}_j} \cdot r^{(i)}_j(\mathbf{x}) \cdot x_j$ for any $j \in M^{(i)}$ which is also continuous and hence, by the use of Brouwer's fixed point theorem\footnote{In addition, we can use that prices $\mathbf{x} = \left( x_j \right)_{j \in M^{(i)}}$ are always bounded by $0 \leq x_j \leq F_j^{-1}\left( 1 - \frac{q^{(i)}_j}{n-i+1} \right)$ to get a convex and compact set of price vectors.} has our desired price vector $(p^{(i)}_j)_{j \in M^{(i)}}$ as fixed point. This allows us to state the following theorem.

\begin{restatable}{theorem}{theoremindependentdynamic}
	\label{theorem_independent_dynamic}
	The posted-prices mechanism with dynamic prices and independent item-valuations is $1 - O \left( \frac{1}{\log n} \right)$-competitive with respect to social welfare.
\end{restatable}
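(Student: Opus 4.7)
The plan is to follow the authors' suggested quantile-reduction strategy. I will first lower-bound $\Ex{\ALG}$ by the expected welfare of a hypothetical quantile mechanism, which posts the same prices as the real one but selects items based on quantile ranks rather than utilities. I will then use MHR-specific estimates to compare that welfare to $\Ex{\OPT}$.

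\textbf{The quantile mechanism.} For each pair $(i,j)$, let $\quant_{i,j} = 1 - F_j(v_{i,j})$, which is uniform on $[0,1]$ and independent across $(i,j)$. Write $\pi^{(i)}_j = q^{(i)}_j/(n-i+1)$, so that $\sum_{j \in M^{(i)}} \pi^{(i)}_j = 1$ whenever there are items left. The quantile mechanism posts the same prices $p^{(i)}_j$ as the real mechanism, but in step $i$ it allocates buyer $i$ the item $j \in M^{(i)}$ minimizing $\quant_{i,j}/\pi^{(i)}_j$. A direct calculation with the i.i.d.\ uniform $\quant_{i,j}$'s shows that, conditionally on the history, the probability that item $j$ is sold in step $i$ of the quantile mechanism is exactly $\pi^{(i)}_j$, matching the real mechanism step by step.

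\textbf{Welfare dominance.} Both mechanisms post identical price menus and induce identical per-item sale distributions in each step, hence the revenue collected in step $i$ has the same conditional distribution in both processes; in particular, their expected revenues coincide. Moreover, the real mechanism picks $j_i$ to maximize the utility $v_{i,j_i} - p^{(i)}_{j_i}$ pointwise, so its realized utility dominates that of any allocation rule using the same menu --- in particular, the quantile mechanism's. Decomposing welfare as revenue plus utility in each step then yields that $\Ex{\ALG}$ is at least the expected welfare of the quantile mechanism.

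\textbf{Quantile analysis via MHR, and the main obstacle.} Whenever item $j$ is sold to buyer $i$ in the quantile mechanism, $\quant_{i,j} \leq \pi^{(i)}_j$ and hence $v_{i,j} \geq F_j^{-1}(1 - \pi^{(i)}_j)$. Taking expectations over the history $(M^{(i)},\pi^{(i)}_j)$ then gives the lower bound
\[
\Ex{\sum_{i=1}^{n} \sum_{j \in M^{(i)}} \pi^{(i)}_j \cdot F_j^{-1}\!\left(1 - \pi^{(i)}_j\right)}
\]
on the welfare of the quantile mechanism. To compare this with $\Ex{\OPT}$, the plan is to write $\Ex{\OPT}=\sum_j q_j \Ex{v_{i,j} \mid j \text{ matched in OPT}}$, bound each conditional expectation via the MHR tail inequality $\Ex{v \mid v \geq F^{-1}(1-p)} \leq F^{-1}(1-p) + 1/h(F^{-1}(1-p))$, and use the martingale-style identity relating the per-step marginals $\pi^{(i)}_j$ summed over $i$ to the offline marginal $q_j$. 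The hardest part will be this final comparison: $\Ex{\OPT}$ arises from a matching, whereas the quantile bound decomposes cleanly per step, and the sale probabilities $\pi^{(i)}_j$ may lie anywhere from $\Theta(1/n)$ up to $\Theta(1)$ across steps. The key MHR estimate needed is that $F_j^{-1}(1-p)$ and $F_j^{-1}(1-p/c)$ differ only by an $O(1/\log n)$ relative factor for $c \lesssim \log n$, and that the additive gap $1/h_j(F_j^{-1}(1-p))$ is itself an $O(1/\log n)$-fraction of $F_j^{-1}(1-p)$ when $p \gtrsim 1/n$. These MHR inputs are precisely what produces the $\Theta(1/\log n)$ loss in the final competitive ratio.
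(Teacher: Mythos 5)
Your high-level architecture matches the paper's: lower-bound $\Ex{\ALG}$ by a coupled ``quantile'' allocation with the same per-step sale probabilities, argue expected revenue is equal and utility only decreases, then use MHR estimates to compare to $\Ex{\OPT}$. However, two of your concrete claims are false, and the step you defer is exactly where the real work lies.

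First, the allocation rule you propose does not have the property you assert. If you select the item minimizing $\quant_{i,j}/\pi^{(i)}_j$ with $\quant_{i,j}$ i.i.d.\ uniform, the probability that $j$ wins is $\int_0^{1/\pi_j}\pi_j\prod_{j'\neq j}(1-\pi_{j'}t)_+\,dt$, which equals $\pi_j$ only when all the $\pi_{j'}$ coincide (e.g., with $\pi_1=2/3$, $\pi_2=1/3$ item $1$ wins with probability $3/4$). The paper instead maximizes $F_j(v_{i,j})^{1/q_j^{(i)}}$, whose CDF is $t^{q_j^{(i)}}$; since $\sum_{j\in M^{(i)}}q_j^{(i)}=n-i+1$, the argmax is item $j$ with probability exactly $q_j^{(i)}/(n-i+1)$ (Observation~\ref{observation:allocation_probabilities}). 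Relatedly, your claim that the sold item always satisfies $\quant_{i,j}\le\pi^{(i)}_j$ is also false: with positive probability every item has $\quant_{i,j}>\pi^{(i)}_j$, yet the argmin still sells something, so the pointwise bound $v_{i,j}\ge F_j^{-1}(1-\pi^{(i)}_j)$ and the displayed welfare lower bound do not follow. What the correct rule actually gives (Lemma~\ref{lemma_quantile_allocation}) is that, conditioned on a sale of $j$ in step $i$, $v_{i,j}$ has CDF $F_j(t)^{(n-i+1)/q_j^{(i)}}$, i.e., it is distributed as the maximum of about $1/\pi^{(i)}_j$ fresh draws --- a distributional statement, not a deterministic threshold.

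Second, the comparison to $\Ex{\OPT}$ is only sketched, and the MHR inputs you invoke are not strong enough as stated: by concavity of $s\mapsto F^{-1}(1-\e^{-s})$ one gets $F^{-1}(1-p/c)\le\bigl(1+\tfrac{\log c}{\log(1/p)}\bigr)F^{-1}(1-p)$, which for $c\approx\log n$ and $p\approx 1/n$ is a $1+O(\log\log n/\log n)$ factor, not $1+O(1/\log n)$. The paper avoids this route entirely: it controls the random exponents via $q_j^{(i)}\le q_j^{(1)}$ and $\Ex[]{q_j^{(i)}/(n-i+1)}=q_j^{(1)}/n$, lower-bounds the expected maximum of $k$ draws against the ex-ante benchmark $q_j^{(1)}\Ex[]{v_j\growingmid v_j\ge F_j^{-1}(1-q_j^{(1)}/n)}$ using Lemma~\ref{lemma:quantile_maximum} with $\alpha=(1+\ln n)/H_{n/2}=1+O(1/\log n)$, and accumulates the per-step loss through $\sum_{i}H_{n-i+1}/(nH_n)=1-O(1/\log n)$ via Lemma~\ref{Lemma:Babaioff_bound_maximum}. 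Without these (or equivalent) lemmas, the $1-O(1/\log n)$ rate is not established.
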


Note that in the case $m \leq n$ we will always have $q^{(i)}_j = 1$ for $j \in M^{(i)}$, which significantly simplifies the argument. The proof for the general case can be found in Appendix~\ref{appendix:independent_valuations_dynamic_general}. Here, we give a sketch with the major steps and key techniques. \\ 

In order to bound the social welfare obtained by the posted-prices mechanism, we consider the following \emph{quantile allocation rule}. For any $j \in M^{(i)}$ with $q_{j}^{(i)} > 0$, compute $R_{j}^{(i)} := F_j(v_{i, j})^{\frac{1}{q_j^{(i)}}}$ and allocate buyer $i$ the item $j$ which maximizes $R_{j}^{(i)}$. Observe that by this definition for any $i$, any $j$ and any $t \in [0,1]$,
\begin{align*}
	\Pr[]{R_{j}^{(i)} \leq t} = \Pr[]{F_j(v_{i,j}) \leq t^{q_j^{(i)}}} = \Pr[]{v_{i,j} \leq F_j^{-1}\left( t^{q_j^{(i)}} \right)} = F_j \left( F_j^{-1} \left( t^{q_j^{(i)}} \right) \right) = t^{q_j^{(i)}} \enspace.
\end{align*}
Note that for $q_j^{(i)} = 1$, this is exactly the CDF of a random variable drawn from $\text{Unif}[0,1]$. Define indicator variables $X_{i,j}$ which are $1$ if buyer $i$ is allocated item $j$ in the quantile allocation rule. Then, we can observe the following.

\begin{restatable}{observation}{observationallocationprobabilities}
	\label{observation:allocation_probabilities}
	It holds \[ \Pr[]{ X_{i,j} = 1 \growingmid M^{(i)} } =  \frac{q^{(i)}_j}{n-i+1} \enspace. \]
\end{restatable}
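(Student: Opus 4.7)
The plan is to reduce the statement to a short integral computation by exploiting two facts: the transformed quantities $R_j^{(i)}$ are independent across $j \in M^{(i)}$ (because the underlying $v_{i,j}$ are), and their CDFs are exactly $t \mapsto t^{q_j^{(i)}}$ as already verified in the paragraph preceding the observation. Conditioning on the current set of remaining items $M^{(i)}$ throughout, the event $\{X_{i,j} = 1\}$ is, up to a measure-zero set of ties, the event that $R_j^{(i)}$ strictly exceeds $R_{j'}^{(i)}$ for every other $j' \in M^{(i)}$.

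First I would compute, by conditioning on the value of $R_j^{(i)} = t$ and using independence,
\[
\Pr[]{X_{i,j} = 1 \growingmid M^{(i)}} = \int_0^1 q_j^{(i)} \, t^{q_j^{(i)} - 1} \prod_{j' \in M^{(i)} \setminus \{j\}} t^{q_{j'}^{(i)}} \, \mathrm{d}t = q_j^{(i)} \int_0^1 t^{\sum_{j' \in M^{(i)}} q_{j'}^{(i)} - 1} \, \mathrm{d}t,
\]
where the density $q_j^{(i)} t^{q_j^{(i)} - 1}$ of $R_j^{(i)}$ is obtained by differentiating the CDF displayed in the excerpt, and the product over $j' \neq j$ is the probability that each competitor's quantile lies below $t$.

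Next I would identify $\sum_{j' \in M^{(i)}} q_{j'}^{(i)}$. By definition, $q_{j'}^{(i)}$ is the probability that item $j'$ is used in the optimal matching between the $n-i+1$ remaining buyers and the $m - i + 1 \geq n - i + 1$ remaining items. Because of the assumption $m \geq n$, every remaining buyer is matched to some item in this optimum, so summing marginals over items equals the number of buyers: $\sum_{j' \in M^{(i)}} q_{j'}^{(i)} = n - i + 1$. Substituting this in, the integral becomes $\frac{1}{n-i+1}$, giving the claimed identity $\Pr[]{X_{i,j} = 1 \mid M^{(i)}} = \frac{q_j^{(i)}}{n - i + 1}$.

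The only point requiring any care is the identity $\sum_{j'} q_{j'}^{(i)} = n - i + 1$; it is really the place where the assumption $m \geq n$ (enforced by padding with dummy items) enters and where the dynamic pricing rule is tailored to mimic the offline optimum. Everything else is a routine order-statistic computation made clean by the quantile transformation $R_j^{(i)} = F_j(v_{i,j})^{1/q_j^{(i)}}$, which was precisely designed so that $R_j^{(i)}$ has CDF $t^{q_j^{(i)}}$ and the maximum-of-independent-$R$s calculation collapses cleanly.
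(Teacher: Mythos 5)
Your proof is correct and follows essentially the same route as the paper's: condition on $R_j^{(i)} = t$, use independence and the CDF $t \mapsto t^{q_{j'}^{(i)}}$ of each competitor, and collapse the integral via the identity $\sum_{j' \in M^{(i)}} q_{j'}^{(i)} = n-i+1$, which the paper likewise justifies by noting every remaining buyer is matched in the constrained optimum. No gaps.
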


Note that by this, the probability of allocating item $j$ in step $i$ via the quantile allocation rule is $\frac{q^{(i)}_j}{n-i+1}$, exactly as in the posted-prices mechanism.

\begin{proof}
	We allocate item $j$ in the quantile allocation rule if $R_{j}^{(i)} \geq R_{j'}^{(i)}$ for any $j' \in M^{(i)}$. For fixed $M^{(i)}$, also the values of $q_j^{(i)}$ are fixed. Hence, we can use independence of the $v_{i,j}$ variables to compute:
	\begin{align*}
	\Pr[]{ X_{i,j} = 1 \growingmid M^{(i)} } &= \Pr[]{ \max_{ j' \neq j }R_{j'}^{(i)} \leq R_{j}^{(i)} \growingmid M^{(i)} } = \int_{0}^{1} \Pr[]{\max_{ j' \neq j }R_{j'}^{(i)} \leq t \growingmid M^{(i)} } q_j^{(i)} t^{q_j^{(i)} -1} dt \\ & = \int_{0}^{1} \prod_{j' \neq j} \Pr[]{R_{j'}^{(i)} \leq t \growingmid M^{(i)} } q_j^{(i)} t^{q_j^{(i)} -1} dt = \int_{0}^{1} \left( \prod_{j' \neq j} t^{q_{j'}^{(i)}} \right) q_j^{(i)} t^{q_j^{(i)} -1} dt \\& = q_j^{(i)} \int_{0}^{1} t^{(n-i+1)-1} dt = \frac{q^{(i)}_j}{n-i+1} \enspace,
	\end{align*}
	where we use that $\sum_{j \in M^{(i)}} q_j^{(i)} = n-i+1$ for any value of $i$.	
\end{proof}

Now, the crucial observation is that the expected contribution of any buyer to the social welfare in the posted-prices mechanism is at least as large as under the quantile allocation rule. To see this, fix buyer $i$ and split buyer $i$'s contribution to the social welfare into revenue and utility. Concerning revenue, note that in both cases the probability of selling any item $j$ to buyer $i$ is equal to $\frac{q^{(i)}_j}{n-i+1}$ and we allocate one item for sure. So, the expected revenue is identical. Further, since we maximize utility in the posted-prices mechanism, the achieved utility is always at least the utility of the quantile allocation rule. So, overall, we get $\Ex[]{\textnormal{SW}_{\textnormal{quantile}}} \leq \Ex{\ALG}$. 

Next, we aim to control the distribution of $v_{i,j}$ given that $X_{i,j} = 1$ in order to get access to the value of an agent being allocated an item in the quantile allocation rule. To this end, we use the following lemma.

\begin{restatable}{lemma}{lemmaquantileallocation}
	\label{lemma_quantile_allocation}
	For all $i$, $j$ and $M^{(i)}$, we have
	\[
	\Pr{v_{i,j} \leq t \growingmid X_{i,j} = 1, M^{(i)}} = F_j(t)^{\frac{n-i+1}{q_j^{(i)}}} \enspace.
	\]
\end{restatable}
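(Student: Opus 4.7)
The plan is to express both the joint event $\{v_{i,j} \leq t, X_{i,j} = 1\}$ and the normalizing probability in terms of the transformed quantiles $R_{j'}^{(i)} = F_{j'}(v_{i,j'})^{1/q_{j'}^{(i)}}$, whose marginal CDFs $s \mapsto s^{q_{j'}^{(i)}}$ on $[0,1]$ are explicit and whose independence across $j'$ is inherited from the independence of the $v_{i,j'}$. Since $F_j$ is continuous and non-decreasing, the event $\{v_{i,j} \leq t\}$ coincides with $\{R_j^{(i)} \leq F_j(t)^{1/q_j^{(i)}}\}$. I will then apply Bayes' rule, dividing the joint probability $\Pr{v_{i,j} \leq t,\, X_{i,j} = 1 \mid M^{(i)}}$ by $\Pr{X_{i,j} = 1 \mid M^{(i)}} = q_j^{(i)}/(n-i+1)$, which was already established in Observation~\ref{observation:allocation_probabilities}.

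To compute the numerator I condition on the value $s$ of $R_j^{(i)}$, whose density on $[0,1]$ is $q_j^{(i)} s^{q_j^{(i)} - 1}$. Given $R_j^{(i)} = s$, the event $X_{i,j} = 1$ requires $R_{j'}^{(i)} \leq s$ for every $j' \in M^{(i)} \setminus \{j\}$, and by the independence of the $v_{i,j'}$ across items this probability factors as $\prod_{j' \neq j} s^{q_{j'}^{(i)}} = s^{(n-i+1) - q_j^{(i)}}$, using the identity $\sum_{j' \in M^{(i)}} q_{j'}^{(i)} = n - i + 1$ that already drove the proof of Observation~\ref{observation:allocation_probabilities}. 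Integrating, the joint probability equals
\[
\int_0^{F_j(t)^{1/q_j^{(i)}}} q_j^{(i)} \, s^{n-i} \, ds \;=\; \frac{q_j^{(i)}}{n-i+1}\, F_j(t)^{(n-i+1)/q_j^{(i)}},
\]
and dividing by $q_j^{(i)}/(n-i+1)$ produces the claimed exponent $F_j(t)^{(n-i+1)/q_j^{(i)}}$.

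I do not anticipate a conceptual obstacle: the whole argument is a clean change of variables on top of Observation~\ref{observation:allocation_probabilities}. The only care required is bookkeeping on the exponents, namely ensuring that the $q_j^{(i)}$ from the density of $R_j^{(i)}$ combines correctly with the exponent $(n-i+1) - q_j^{(i)}$ coming from the product over $j' \neq j$ so that the integrand collapses to $q_j^{(i)} s^{n-i}$, and invoking $\sum_{j' \in M^{(i)}} q_{j'}^{(i)} = n-i+1$ at precisely that step. The implicit assumption $q_j^{(i)} > 0$ (otherwise the conditioning event has probability zero and the lemma is vacuous) is harmless.
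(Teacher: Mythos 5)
Your proof is correct, and it takes a genuinely (if mildly) different route from the paper's. The paper proves the lemma via the monotone transform $\psi_j(v_{i,j'}) = F_j^{-1}\bigl( (R_{j'}^{(i)})^{q_j^{(i)}} \bigr)$, which maps every competing quantile into the value space of item $j$ (so that $\psi_j(v_{i,j}) = v_{i,j}$), and then invokes the identity that, conditioned on $\psi_j(v_{i,j})$ being the maximum of the independent variables $\psi_j(v_{i,j'})$, it is distributed as that maximum, whose CDF is the product $\prod_{j'} F_j(t)^{q_{j'}^{(i)}/q_j^{(i)}} = F_j(t)^{(n-i+1)/q_j^{(i)}}$. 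You instead stay in quantile space and carry out the Bayes computation explicitly: integrate the density $q_j^{(i)} s^{q_j^{(i)}-1}$ of $R_j^{(i)}$ against the winning probability $s^{(n-i+1)-q_j^{(i)}}$ up to $F_j(t)^{1/q_j^{(i)}}$, and divide by the normalizer from Observation~\ref{observation:allocation_probabilities}. The two arguments are computationally equivalent, but yours is arguably the more self-contained: the paper's step ``conditioned on being the maximum, a variable is distributed as the maximum'' is valid here only because all the transformed variables have CDFs that are powers of the common CDF $F_j$ (it fails for general independent continuous variables), and your explicit integral makes the exponent cancellation that justifies it visible. Your derivation is also the natural continuation of the paper's own proof of Observation~\ref{observation:allocation_probabilities}, which is the same integral taken over all of $[0,1]$; the only (harmless) informality is identifying $\{v_{i,j}\le t\}$ with $\{R_j^{(i)} \le F_j(t)^{1/q_j^{(i)}}\}$, which holds up to a null event exactly as in the paper's own manipulations.
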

\label{proof_lemma_quantile_allocation}
\begin{proof}
	Observe that in the vector $\left( \transformedquantile_{j}^{(i)} \right)_{j \in M^{(i)}}$, we choose $j$ to maximize $\transformedquantile_{j}^{(i)}$. Now, for any value $v_{i,j'}$, we consider the following transform $\psi_j$: For any $j' \in M^{(i)}$, define \[ \psi_j(v_{i,j'}) := F_j^{-1} \left( \left( \transformedquantile_{j'}^{(i)} \right)^{q_j^{(i)}} \right) \enspace. \] Observe that for $j' = j$, we get that \[ \psi_j(v_{i,j}) = F_j^{-1} \left( \left( \transformedquantile_{j}^{(i)} \right)^{q_j^{(i)}} \right) = F_j^{-1} \left( \left(  F_j(v_{i, j})^{\frac{1}{q_j^{(i)}}}  \right)^{q_j^{(i)}} \right) =  F_j^{-1} \left(  F_j(v_{i, j})  \right) = v_{i,j} \enspace.\] Further, we can compute the CDF as 
	\begin{align*}
	\Pr[]{\psi_j(v_{i,j'}) \leq t} & = \Pr[]{F_j^{-1} \left( \left( \transformedquantile_{j'}^{(i)} \right)^{q_j^{(i)}} \right) \leq t} = \Pr[]{  \transformedquantile_{j'}^{(i)} \leq F_j(t)^{\frac{1}{q_j^{(i)}}} }\\ & = \Pr[]{F_{j'} (v_{i,j'}) \leq F_j(t)^{\frac{q_{j'}^{(i)}}{q_j^{(i)}}}  } = F_j(t)^{\frac{q_{j'}^{(i)}}{q_j^{(i)}}} \enspace,
	\end{align*} 
	where in the last step, we used that $F_{j'} (v_{i,j'}) \sim \text{Unif}[0,1]$. As a consequence, 
	\begin{align*}
	\Pr{v_{i,j} \leq t \growingmid X_{i,j} = 1, M^{(i)}} & = \Pr{\psi_j(v_{i,j}) \leq t \growingmid \psi_j(v_{i,j}) > \psi_{j'}(v_{i,j}) \text{ for $j \neq j'$}, M^{(i)}}\\
	& = \Pr[]{\max_{j' \in M^{(i)}} \left( \psi_j(v_{i,j'})  \right) \leq t \growingmid M^{(i)}} \\
	& = \prod_{j' \in M^{(i)}} \Pr[]{\psi_j(v_{i,j'}) \leq t} = \prod_{j' \in M^{(i)}} F_j(t)^{\frac{q_{j'}^{(i)}}{q_j^{(i)}}} \\& = F_j(t)^{\frac{\sum_{j' \in M^{(i)}}q_{j'}^{(i)}}{q_j^{(i)}}} = F_j(t)^{\frac{n-i+1}{q_j^{(i)}}} \enspace.
	\end{align*}
\end{proof}

For integral values of $\frac{n-i+1}{q_j^{(i)}}$ (in particular $q_j^{(i)} = 1$), observe that this is exactly the CDF of the maximum of $\frac{n-i+1}{q_j^{(i)}}$ independent draws from distribution $F_j$.

For the remainder of the proof sketch, let us restrict to the case that $m=n$. Observe that in this special case, we have that all $q_j^{(i)} = 1$, so the probability in the quantile allocation rule of allocating any item $j \in M^{(i)}$ simplifies to $\frac{1}{n-i+1}$. Therefore, \[ \Ex{v_{i, j} X_{i, j}} = \frac{n-i+1}{n} \cdot \frac{1}{n-i+1} \cdot \Ex{v_{i, j} \growingmid X_{i, j} = 1 } = \frac{1}{n} \Ex{v_{i, j} \growingmid X_{i, j} = 1 } \enspace. \] Observe that this argument looks rather innocent in the special case of $m=n$, but requires a much more careful treatment in the general variant: The probabilities $q_j^{(i)}$ are random variables themselves depending on the set $M^{(i)}$. Hence, the calculation can not directly be extended and a more sophisticated argument needs to be applied. In addition, by the above considerations on the quantile allocation rule via Lemma \ref{lemma_quantile_allocation}, we have that $\Ex{v_{i, j} \growingmid X_{i, j} = 1 } = \Ex[]{\max_{i' \in \left[n-i+1\right]} v_{i', j} }$ in the special case of $m=n$. Therefore, we can now simply apply Lemma \ref{Lemma:Babaioff_bound_maximum} (see below) to get 
\begin{align*}
\Ex{v_{i, j} X_{i, j}} = \frac{1}{n} \Ex{v_{i, j} \growingmid X_{i, j} = 1 } = \frac{1}{n} \ \Ex[]{\max_{i' \in \left[n-i+1\right]} v_{i', j} } \geq \frac{1}{n} \cdot \frac{H_{n-i+1}}{H_n} \cdot \Ex[]{\max_{i' \in [n]} v_{i', j}}
\end{align*}
Note that we take the maximum over i.i.d.\ random variables. As a consequence, we can conclude by basic calculations:
\begin{align*}
\Ex[]{\ALG} & \geq \Ex[]{\textnormal{SW}_{\textnormal{quantile}}} = \sum_{i=1}^{n} \sum_{j=1}^{n} \Ex[]{v_{i, j} X_{i, j}} \geq \frac{1}{n} \sum_{i=1}^{n} \sum_{j=1}^{n} \frac{H_{n-i+1}}{H_n} \Ex[]{\max_{i' \in [n]} v_{i', j}}
\\ & = \frac{\sum_{i=1}^{n} H_{n-i+1}}{n H_n} \sum_{j=1}^{n} \Ex[]{\max_{i' \in [n]} v_{i', j}}  = \left( 1 - O \left( \frac{1}{\log n} \right) \right) \sum_{j=1}^{n} \Ex[]{\max_{i' \in [n]} v_{i', j}} \\ & \geq \left( 1 - O \left( \frac{1}{\log n} \right) \right)\Ex[]{\OPT}
\end{align*}

Observe that in the general version, comparing to $\sum_{j=1}^{n} \Ex[]{\max_{i' \in [n]} v_{i', j}}$ is a far too strong benchmark. Therefore, we consider an ex-ante relaxation of the offline optimum. As a new technical tool, we introduce in Lemma \ref{lemma:quantile_maximum} (see below) an appropriate bound which allows to lower bound the expected maximum of $k$ draws from an MHR distribution by a suitable fraction of $ \Ex[]{v_{i,j} \growingmid  v_{i,j} \geq F^{-1}\left(1-q\right) }$ for any choice of $q \in [0,1]$. Applying this, we can lower bound the expected contribution of any item $j$ to the quantile welfare by a suitable fraction of its contribution to the offline optimum.

We conclude by stating the lemmas which were used in the proof sketch. First, we restate a useful lemma from \citet{Babaioff:2017:ACM:3037382}. It allows to compare the expectation of the maximum of $n$ and $n' \leq n$ draws from independent and identically distributed random variables, if the distribution has a monotone hazard rate.

\begin{lemma}[Lemma 5.3 in \citet{Babaioff:2017:ACM:3037382}]\label{Lemma:Babaioff_bound_maximum}
	Consider a collection $(X_i)_i$ of independent and identically distributed random variables with a distribution with monotone hazard rate. Then, for any $n' \leq n$, we have \[ \frac{\Ex[]{\max_{i \in [n']} X_i}}{\Ex[]{\max_{i \in [n]} X_i}} \geq \frac{H_{n'}}{H_n} \geq \frac{\log n'}{\log n} \enspace. \] 
\end{lemma}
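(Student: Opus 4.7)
The plan is to reduce the inequality to an integral comparison via an exponential substitution that turns MHR into concavity. Let $H(x) := -\log(1 - F(x))$ denote the cumulative hazard; since the hazard rate $h = H'$ is non-decreasing, $H$ is convex, and so its inverse $\phi := H^{-1}$ is concave and non-decreasing. For a non-negative distribution with $F^{-1}(0) = 0$ one has $\phi(0) = 0$; the general case reduces to this by the shift $\tilde X_i := X_i - F^{-1}(0)$, which preserves MHR and can only increase the ratio $\Ex{\max_{i \in [n']} X_i}/\Ex{\max_{i \in [n]} X_i}$ (adding the same non-negative constant to both numerator and denominator of a ratio in $[0,1]$ can only increase it). So I may assume $\phi(0) = 0$. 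Setting $Y_i := H(X_i) \sim \mathrm{Exp}(1)$, we have $X_i = \phi(Y_i)$ and hence $\max_{i \in [k]} X_i \stackrel{d}{=} \phi(M_k)$, where $M_k$ denotes the maximum of $k$ i.i.d.\ $\mathrm{Exp}(1)$ random variables.

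Next I would express both sides of the target inequality as integrals against a common one-parameter family of weights. The change of variables $x = \phi(y)$ in $\Ex{\max_{i \in [k]} X_i} = \int_0^\infty (1 - F(x)^k)\,dx$ yields
\[
\Ex{\max_{i \in [k]} X_i} \;=\; \int_0^\infty w_k(y)\,\phi'(y)\,dy, \qquad \text{where } w_k(y) := 1 - (1 - e^{-y})^k = \Pr{M_k > y},
\]
and specializing to $\phi(y) = y$ recovers $H_k = \int_0^\infty w_k(y)\,dy$. Hence the lemma reduces to the statement that the $w_k/H_k$-weighted average of $\phi'$ is non-increasing in $k$.

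The technical heart of the argument is to verify an MLR-type condition: for $n' \leq n$, the ratio $w_{n'}(y)/w_n(y)$ is non-increasing in $y$. Substituting $a := 1 - e^{-y} \in [0,1)$, this becomes the claim that $(1-a^{n'})/(1-a^n)$ is non-increasing in $a$, which is equivalent to $k \mapsto k a^{k-1}/(1-a^k)$ being non-increasing in $k$ for each fixed $a \in (0,1)$. The latter follows from Bernoulli's inequality $a^k \geq 1 + k(a-1)$ applied to the ratio of consecutive terms. Once MLR holds, the probability densities $w_{n'}/H_{n'}$ and $w_n/H_n$ on $[0,\infty)$ are ordered in the likelihood-ratio (hence first-order stochastic) sense, with $w_{n'}/H_{n'}$ putting more mass on small $y$; integrating the non-increasing function $\phi'$ (non-increasing by concavity of $\phi$) then gives $\Ex{\max_{i \in [n']} X_i}/H_{n'} \geq \Ex{\max_{i \in [n]} X_i}/H_n$, which is the first part of the lemma.

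The auxiliary bound $H_{n'}/H_n \geq \log n' / \log n$ does not depend on the MHR assumption; it follows from showing that $k \mapsto H_k/\log k$ is non-increasing for $k \geq 2$, which reduces to the elementary facts $H_k \geq \log k$ and $\log(1 + 1/k) \geq 1/(k+1)$. The hard part is the MLR step: recognizing it as the right structural ingredient and carrying out the monotonicity check cleanly. The exponential substitution is what packages the MHR hypothesis into concavity of $\phi$, after which the entire comparison becomes an inequality about the universal family $\{w_k\}_k$ on the exponential side, independent of the original distribution $F$.
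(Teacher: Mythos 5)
Your proof is correct. Note first that the paper itself does not prove this statement: it is imported verbatim as Lemma 5.3 of \citet{Babaioff:2017:ACM:3037382}, so there is no in-paper argument to compare against; your write-up is a self-contained derivation. The structure is sound: the substitution $Y_i = H(X_i) \sim \mathrm{Exp}(1)$ with $\phi = H^{-1}$ concave correctly packages the MHR hypothesis, the identity $\Ex{\max_{i\in[k]} X_i} = \int_0^\infty w_k(y)\,\phi'(y)\,dy$ (after the reduction to $\phi(0)=0$) is right, and the key monotone-likelihood-ratio claim checks out --- the reduction of $(1-a^{n'})/(1-a^n)$ being non-increasing to the monotonicity of $k \mapsto k a^{k-1}/(1-a^k)$, and the Bernoulli-inequality verification of the latter, are both valid. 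Combined with $\phi'$ non-increasing (concavity) and non-negative, the likelihood-ratio ordering of the normalized weights $w_{n'}/H_{n'}$ and $w_n/H_n$ gives exactly $\Ex{\max_{[n']}X}/H_{n'} \geq \Ex{\max_{[n]}X}/H_n$, and your handling of $H_{n'}/H_n \geq \log n'/\log n$ via monotonicity of $H_k/\log k$ is also fine. Two cosmetic points: the sentence about the shift ``can only increase the ratio'' reads backwards (the shift \emph{decreases} the ratio; it is the un-shifting that increases it, which is what your parenthetical actually proves and what the reduction needs), and you should remark that $\phi$, being concave and increasing, is differentiable a.e.\ with $\phi'(y) = 1/h(\phi(y))$, which is where the non-increasingness of $\phi'$ concretely comes from. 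Neither affects correctness.
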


In addition, we make use of the following lemma which is used in order to make a suitable comparison to the ex-ante relaxation.

\begin{restatable}{lemma}{lemmaquantilemaximum}
	\label{lemma:quantile_maximum}
	Let $\quant \in [0,1]$ and $k \in \mathbb{N}$. Further, let $\mathcal{D}$ be a distribution with monotone hazard rate with CDF $F$, let $X, (Y_i)_i \sim \mathcal{D}$ be independent and identically distributed. For $\alpha \geq \frac{1 + \ln \left(\frac{1}{\quant}\right)}{H_k}$, $\alpha \geq 1$, and $\alpha k \leq \frac{1}{\quant}$, we have $\Ex[]{X \growingmid  X \geq F^{-1}\left(1-\quant\right) } \leq \alpha \Ex[]{\max_{i \in [k]} Y_i } \enspace.$
\end{restatable}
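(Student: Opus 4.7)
The plan is to reduce the inequality to a one-dimensional comparison in the MHR quantile-transformed space. Define $g(s) := F^{-1}(1-e^{-s})$ for $s \geq 0$; the MHR condition (concavity of $\log(1-F)$) is equivalent to $g$ being non-decreasing and concave, so its (left- or right-) derivative $g'$ is non-negative and non-increasing. Using the change of variables $s = -\ln(1-F(x))$ together with memorylessness of $\mathrm{Exp}(1)$, both sides of the inequality can be written as weighted integrals of $g'$:
\[
\Ex{X \mid X \geq F^{-1}(1-\quant)} = g(0) + \int_0^\infty V(s)\, g'(s)\, ds, \qquad \Ex{\max_{i \in [k]} Y_i} = g(0) + \int_0^\infty W(s)\, g'(s)\, ds,
\]
where $V(s) = 1$ for $s < \ln(1/\quant)$, $V(s) = e^{-s}/\quant$ for $s \geq \ln(1/\quant)$, and $W(s) = 1 - (1-e^{-s})^k$. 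Direct computation gives $\int_0^\infty V(s)\, ds = 1 + \ln(1/\quant)$ and, via the substitution $u = 1 - e^{-s}$, $\int_0^\infty W(s)\, ds = H_k$, so the hypothesis $\alpha \geq (1+\ln(1/\quant))/H_k$ is exactly $\int_0^\infty (V(s) - \alpha W(s))\, ds \leq 0$.

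Next, I would show that $\Delta(s) := V(s) - \alpha W(s)$ has at most one sign change on $[0,\infty)$, from $\leq 0$ to $\geq 0$. On $[0, \ln(1/\quant)]$, $\Delta(s) = 1 - \alpha + \alpha(1-e^{-s})^k$ is non-decreasing in $s$ and starts at $1-\alpha \leq 0$, where the hypothesis $\alpha \geq 1$ is used. On $[\ln(1/\quant), \infty)$, I parameterize by $u = e^{-s} \in (0,\quant]$ so that $\Delta$ corresponds to $\psi(u) := u/\quant - \alpha(1 - (1-u)^k)$. This $\psi$ is convex on $[0,1]$ (since $(1-u)^k$ is convex in $u$) with $\psi(0) = 0$ and $\psi'(0) = 1/\quant - \alpha k \geq 0$ by the condition $\alpha k \leq 1/\quant$. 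Any convex function vanishing at $0$ with non-negative right-derivative is non-negative, so $\psi \geq 0$ on $[0,\quant]$, i.e., $\Delta \geq 0$ on the tail $[\ln(1/\quant), \infty)$. Combining the two pieces yields a single crossing $s^* \in [0, \ln(1/\quant)]$ with $\Delta \leq 0$ on $[0, s^*]$ and $\Delta \geq 0$ on $[s^*, \infty)$.

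I would then conclude by a variation-diminishing step using that $g'$ is non-increasing. For $s \leq s^*$ we have $\Delta(s) \leq 0$ and $g'(s) \geq g'(s^*) \geq 0$, hence $\Delta(s) g'(s) \leq \Delta(s) g'(s^*)$; for $s \geq s^*$ we have $\Delta(s) \geq 0$ and $g'(s) \leq g'(s^*)$, giving the same pointwise inequality. Integrating,
\[
\int_0^\infty \Delta(s)\, g'(s)\, ds \;\leq\; g'(s^*) \int_0^\infty \Delta(s)\, ds \;\leq\; 0.
\]
Hence $\int V g'\, ds \leq \alpha \int W g'\, ds$. Adding $g(0)$ to both sides and using $\alpha \geq 1$ with $g(0) \geq 0$ to absorb the slack $(\alpha-1)g(0) \geq 0$ yields $\Ex{X \mid X \geq F^{-1}(1-\quant)} \leq \alpha \Ex{\max_{i \in [k]} Y_i}$.

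The main obstacle I anticipate is the single-sign-change analysis: each of the three hypotheses $\alpha \geq 1$, $\alpha \geq (1+\ln(1/\quant))/H_k$, and $\alpha k \leq 1/\quant$ is used exactly once (to control $\Delta(0)$, to ensure $\int \Delta \leq 0$, and to ensure $\psi'(0) \geq 0$), which is precisely what calibrates the bound to be tight on the exponential distribution. A minor technical point is that if the support of $F$ does not start at $0$ so that $g(0) > 0$, this same constant appears on both sides and is harmlessly absorbed by $\alpha \geq 1$.
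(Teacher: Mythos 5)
Your proposal is correct and is essentially the paper's own argument in different coordinates: your $g'(s)$ under $s=-\ln(1-F(x))$ is exactly the paper's non-increasing inverse hazard rate $g(y)=\frac{1-y}{f(F^{-1}(y))}$ in quantile space, both proofs represent the two expectations as integrals of this weight against explicit kernels whose difference has a single sign change (using $\alpha\geq 1$ and $\alpha k\leq \frac{1}{\quant}$), and both conclude by freezing the weight at the crossing point and invoking $\alpha H_k \geq 1+\ln(\frac{1}{\quant})$. Your explicit handling of the constant $g(0)=F^{-1}(0)$ via $\alpha\geq 1$ is a small point the paper glosses over, but otherwise the two proofs coincide.
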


The proof of this lemma can be found in Appendix \ref{appendix:proof_lemma_quantile_maximum}.

\subsection{Static prices}
\label{section:independent_valuations_static}

Next, we would like to demonstrate how to use static prices. We consider the case that the number of items $m$ is upper bounded by $\frac{n}{\left(\log \log n\right)^2}$. We set the price for item $j$ to \[ p_j = F_j^{-1} \left( 1 - q \right) \textnormal{ , where } q = \frac{\log \log n}{n} \ , \] which allows us to state the following theorem.

\begin{restatable}{theorem}{theoremindependentstatic}
	\label{theorem_independent_static}
	The posted-prices mechanism with static prices and independent item-valuations is $1 - O \left( \frac{\log \log \log n}{\log n} \right)$-competitive with respect to social welfare.
\end{restatable}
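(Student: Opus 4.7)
The plan is to lower-bound $\Ex{\ALG}$ by the expected revenue of the mechanism and then show that this revenue alone is already within a $1 - O(\log \log \log n / \log n)$ factor of $\Ex{\OPT}$. Since every buyer either has nonnegative utility or stays unmatched, welfare decomposes as revenue plus total utility with both terms nonnegative, so
\[
\Ex{\ALG} \;\geq\; \sum_{j=1}^{m} p_j\cdot \Pr[]{\text{item $j$ is sold}}.
\]
I proceed in two steps: (i) lower-bound $\Pr[]{\text{item $j$ is sold}}$ by $1-O(1/\log n)$ for every $j$; and (ii) show $p_j \geq \bigl(1 - O(\log\log\log n/\log n)\bigr)\,\Ex[]{\max_{i\in[n]} v_{i,j}}$ using the MHR property of $\mathcal{D}_j$. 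Combined with the ex-ante upper bound $\Ex{\OPT} \leq \sum_j \Ex[]{\max_{i\in[n]} v_{i,j}}$ (valid since the optimal matching allocates each item at most once and values are nonnegative), these two steps yield the theorem.

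For step~(i), the key observation is that a buyer $i$ is \emph{forced} to buy item $j$ whenever $v_{i,j}\geq p_j$ and simultaneously $v_{i,j'} < p_{j'}$ for every $j'\neq j$: her only option with nonnegative utility is then $j$. Call such a buyer exclusively interested in~$j$. By independence across items this event has probability $q(1-q)^{m-1}$, and if any exclusively interested buyer appears among the $n$ steps then item $j$ is certainly sold -- either to her, or earlier to someone else. Independence across buyers yields
\[
\Pr[]{\text{no buyer is exclusively interested in $j$}} = \bigl(1-q(1-q)^{m-1}\bigr)^n \leq \exp\bigl( -n q (1-q)^{m-1}\bigr).
\]
The hypothesis $m \leq n/(\log\log n)^2$ gives $mq\leq 1/\log\log n$, hence $(1-q)^{m-1}\geq 1-1/\log\log n$ and $nq(1-q)^{m-1} \geq \log\log n -1$, so the above probability is at most $e/\log n$.

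For step~(ii), fix $j$, write $p=F_j^{-1}(1-q)$, and start from $\Ex[]{\max_{i\in[n]} v_{i,j}} - p \leq \int_p^\infty \bigl(1 - F_j(x)^n\bigr)\,dx$. I would substitute $y = n(1-F_j(x))$ and use $f_j = h_j\cdot (1-F_j)$ to rewrite the integral as $\int_0^{nq} (1-(1-y/n)^n)/(h_j(x(y))\,y)\,dy$. Monotonicity of $h_j$ gives $h_j(x(y)) \geq h_j(p)$ throughout the integration range, and the elementary inequality $1-(1-y/n)^n\leq \min(y,1)$ then yields
\[
\int_p^\infty \bigl(1-F_j(x)^n\bigr)\,dx \;\leq\; \frac{1}{h_j(p)}\left( \int_0^1 dy + \int_1^{nq} \frac{dy}{y}\right) = \frac{1+\log(nq)}{h_j(p)}.
\]
A second use of MHR -- namely $\log(1/q) = -\log(1-F_j(p)) = \int_0^p h_j(t)\,dt \leq p\cdot h_j(p)$ -- provides $h_j(p)\geq (\log n - \log\log\log n)/p$. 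Plugging this in and simplifying yields $\Ex[]{\max_{i\in[n]} v_{i,j}} - p \leq O(\log\log\log n/\log n)\cdot p$, which rearranges to the desired lower bound on $p_j = p$.

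\textbf{Main obstacle.} The delicate point is the calibration in step~(ii): the choice $q = \log\log n/n$ is tuned so that $nq$ grows fast enough to make the failure probability in (i) as small as $O(1/\log n)$, while $\log(1/q)$ stays as close to $\log n$ as possible so that $p \cdot h_j(p)$ is nearly maximal. Making the two opposing uses of the monotonicity of $h_j$ -- an upper bound on $1/h_j$ in the tail $x\geq p$ and a lower bound on $\int_0^p h_j \leq p\cdot h_j(p)$ -- fit together for \emph{every} MHR distribution simultaneously is what drives the $\log\log\log n$ factor. The condition $m\leq n/(\log\log n)^2$ enters precisely so that the factor $(1-q)^{m-1}$ in step~(i) does not erode the $\log\log n$ saved in the exponent; removing this restriction is the main open direction highlighted by the authors.
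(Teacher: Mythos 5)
Your proposal is correct and follows essentially the same route as the paper's proof in Appendix~\ref{appendix:independent_valuations_static}: lower-bound welfare by revenue, show each item sells except with probability $O(1/\log n)$ via the ``exclusively interested buyer'' event (using Bernoulli's inequality and $m \leq n/(\log\log n)^2$ exactly as the paper does), and show $p_j \geq (1 - O(\log\log\log n/\log n))\,\Ex[]{\max_{i\in[n]} v_{i,j}}$. The only difference is that in step~(ii) you re-derive the quantile bound from the MHR property directly (via the substitution $y = n(1-F_j(x))$ and $\log(1/q) \leq p\,h_j(p)$), whereas the paper imports it as Lemma~\ref{lemma:quantiles1} from \citet{DBLP:conf/wine/GiannakopoulosZ18}; your derivation is sound and self-contained but not a different argument in substance.
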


As before, we defer the proof of this theorem to Appendix~\ref{appendix:independent_valuations_static} and give a quick sketch here: First, observe that we can bound the probability of selling item $j$ to buyer $i$ by the probability of the event that buyer $i$ has only non-negative utility for this item. This implies a bound on the probability of selling item $j$ in our mechanism. Finally, we combine this with a lower bound on the price $p_j$ and hence are able to bound the revenue (and thus the welfare) obtained by our mechanism. Observe that our guarantee only applies if the number of items $m$ is bounded by $\frac{n}{\left(\log \log n\right)^2}$. We leave the extension to the general case as an open problem. As a first step, one could try to derive a suitable bound on the utility of agents in order to extend the result.

	\section{Asymptotically Tight Bounds for Separable Valuations}
\label{section_alpha_correlation}

Let us now come to \emph{separable valuations}, which are common in ad auctions \citep{EdelmanOS07,Varian07}. That is, in order to determine buyer $i$'s value for item $j$, let each buyer $i$ have a type $v_i \geq 0$ and let each item have an item-dependent multiplier $\alpha_j$ which can be interpreted as a click through rate in online advertising. Buyer $i$'s value $v_{i,j}$ for being assigned item $j$ is given by $\alpha_j \cdot v_i$. Without loss of generality, we assume that $\alpha_1 \geq \alpha_2 \geq \ldots$ and that $m = n$. The former can be ensured by reordering the items; the latter by adding items with $\alpha_j = 0$ or removing all items $j \in M$ with $j > n$ respectively. Observe that in the case of $m > n$, items of index larger than $n$ are not matched in either the optimum, nor is it beneficial to match one of these items and leave an item $j \leq n$ unmatched. Note that this correlated setting also contains the single-item scenario as a special case since it can be modeled by $\alpha_1 = 1, \alpha_2 = \ldots = 0$. More generally, $k$ identical items can be modeled by $\alpha_1 = \ldots = \alpha_k = 1, \alpha_{k+1} = \ldots = 0$. 

The types $v_1, \ldots, v_n \geq 0$ are non-negative, independent and identically distributed random variables with a continuous distribution satisfying the MHR condition. Let $j_i$ denote the item allocated to buyer $i$ and $j_i = m+1$ if buyer $i$ is not allocated any item where $\alpha_{m+1} = 0$. We can specify the expected social welfare of the matching computed by the mechanism as $\Ex[]{\sum_{i = 1}^n \alpha_{j_i} v_i }=: \Ex[]{\ALG}$.

Additionally, the structure of the optimal matching can be stated explicitly. Given any type profile $v = (v_1, \ldots, v_n)$, we let $v_{(k)}$ denote the $k$-th highest order statistics. That is, $v_{(k)}$ is the largest $x$ such that there are at least $k$ entries in $v$ whose value is at least $x$. Denote its expectation by $\Ex[]{v_{(k)}} = \mu_k$. The allocation that maximizes social welfare assigns item $1$ to a buyer of type $v_{(1)}$, item $2$ to a buyer of type $v_{(2)}$ and so on. Hence, the expected optimal social welfare is given by $\Ex[]{\OPT} = \Ex{\sum_{j = 1}^m \alpha_j v_{(j)}} = \sum_{j = 1}^m \alpha_j \mu_j$.

\subsection{Dynamic prices}
\label{section:alpha_correlation_dynamic}
First, we focus on posted-prices mechanisms with dynamic prices. Consider step $i$ and buyer $i$ arrives. Let $M^{(i)}$ be the set of remaining items at this time. Our choice of prices ensures that in each step one item is allocated. Therefore, always $\lvert M^{(i)} \rvert = n - i + 1$.

We choose prices $(p^{(i)}_j)_{j \in M^{(i)}}$ with the goal that each item is allocated with probability $\frac{1}{n - i + 1}$. To this end, let $M^{(i)} = \{ \ell_1, \ldots, \ell_{n-i+1} \}$ with $\ell_1 < \ell_2 < \ldots < \ell_{n-i+1}$ and set
\[
p_j^{(i)} = \sum_{k: j \leq \ell_k \leq n-i} (\alpha_{\ell_k} - \alpha_{\ell_{k+1}}) F^{-1}\left( 1 - \frac{k}{n - i + 1}\right) \enspace.
\]
Given this pricing scheme, we can state the following theorem.

\begin{restatable}{theorem}{alphacorrelationdynamic}
	\label{theorem:alpha_correlation_dynamic-positive}
	The posted-prices mechanism with dynamic prices and separable valuations is $1 - O\left( \frac{1}{\log n} \right)$-competitive with respect to social welfare.
\end{restatable}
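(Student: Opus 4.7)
The plan is to verify that the posted prices implement a quantile-based allocation in each step, decompose the resulting expected welfare using this allocation, and compare it to $\Ex{\OPT}$ via the MHR property.

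First, I would check that the telescoping price formula yields an allocation in which buyer $i$ purchases item $\ell_k(M^{(i)})$ (the $k$-th smallest-index remaining item) precisely when her quantile $1 - F(v_i)$ lies in $[(k-1)/(n-i+1),\, k/(n-i+1)]$. The key algebraic observation is that for adjacent items the telescoping cancels to $p_{\ell_k}^{(i)} - p_{\ell_{k+1}}^{(i)} = (\alpha_{\ell_k} - \alpha_{\ell_{k+1}}) F^{-1}(1 - k/(n-i+1))$, so indifference between $\ell_k$ and $\ell_{k+1}$ occurs at exactly $v_i = F^{-1}(1-k/(n-i+1))$. Together with $\alpha_{\ell_1} \geq \alpha_{\ell_2} \geq \cdots$, this pins down the unique utility-maximizing item in each quantile block, and since the cheapest item has price zero, one item is sold per step. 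Each item in $M^{(i)}$ is therefore sold with probability $1/(n-i+1)$, and by a symmetry/induction argument $M^{(i)}$ is a uniformly random $(n-i+1)$-subset of $\{1, \ldots, n\}$.

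Second, exploiting the independence of $v_i$ from $M^{(i)}$, one can decompose
\[
\Ex{\ALG} = \sum_{i=1}^n \sum_{k=1}^{n-i+1} \Ex{\alpha_{\ell_k(M^{(i)})}} \cdot \int_{(k-1)/(n-i+1)}^{k/(n-i+1)} F^{-1}(1-u)\, du.
\]
The quantile integral captures the expected $v_i$-mass contributed in the $k$-th block of width $1/(n-i+1)$, and $\Ex{\alpha_{\ell_k(M^{(i)})}}$ is computed from the hypergeometric distribution of the $k$-th smallest element of a uniform $(n-i+1)$-subset of $[n]$ (with mean $k(n+1)/(n-i+2)$).

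Third, I would compare this to $\Ex{\OPT} = \sum_j \alpha_j \mu_j$ via the MHR property. The integral over the $k$-th block is close to $(1/(n-i+1)) F^{-1}(1 - k/(n-i+1))$, which under MHR is comparable to the expected $k$-th order statistic of $n-i+1$ i.i.d.\ draws; a Babaioff-style application of Lemma~\ref{Lemma:Babaioff_bound_maximum} then relates this to $\mu_k$ up to a factor of $H_{n-i+1}/H_n$. After aggregating these contributions across all steps, the accumulated loss collapses to $\frac{1}{n H_n}\sum_{i=1}^n H_{n-i+1} = 1 - O(1/\log n)$, yielding the claimed competitive ratio. The main obstacle is the combinatorial bookkeeping across random $M^{(i)}$: each item $j$'s contribution is spread over many $(i,k)$ pairs via the hypergeometric ranks, and the MHR quantile-to-order-statistic comparison must be applied uniformly across $k$ and $n-i+1$ so that, item by item, the aggregate recovers a $(1 - O(1/\log n))$ fraction of $\alpha_j \mu_j$.
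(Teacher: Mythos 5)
Your setup matches the paper's: the telescoping prices do implement the quantile allocation (buyer $i$ takes the $k$-th best remaining item exactly when $F^{-1}(1-\frac{k}{n-i+1}) \le v_i < F^{-1}(1-\frac{k-1}{n-i+1})$), each remaining item sells with probability $\frac{1}{n-i+1}$, and the welfare decomposition over quantile blocks is valid. The gap is in your third step, which is where all the work lies. Lemma~\ref{Lemma:Babaioff_bound_maximum} compares only \emph{maxima} of $n'$ versus $n$ i.i.d.\ MHR draws; it says nothing about $k$-th order statistics or about relating $F^{-1}(1-\frac{k}{n-i+1})$ to $\mu_k$, so the "Babaioff-style application" you invoke does not exist as stated. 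The paper instead uses Lemma~\ref{lemma:quantiles1} (from Giannakopoulos--Zhu), which lower-bounds $F^{-1}(1-q)$ by $\frac{-\log q}{H_n - H_{j-1}}\mu_j$, and this is the tool that produces the $\frac{\log(n-i+1)-\log j}{H_n - H_{j-1}}$ factors whose average over $i$ gives $1-O(1/\log n)$. There is also an indexing slip: the benchmark for the $k$-th block of step $i$ is $\mu_{\ell_k(M^{(i)})}$ (the order statistic at the \emph{item's} index, roughly $kn/(n-i+1)$), not $\mu_k$.

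The "combinatorial bookkeeping" you flag as the main obstacle is not a detail one can defer --- it is the core of the proof, and the mean of the hypergeometric rank is not enough because $\alpha$ is an arbitrary non-increasing sequence and $F^{-1}$ is applied at the random quantile $\frac{Z+1}{n-i+1}$. The paper resolves this with a case split on the item index $j$: for $j \le n^{2/3}$ it uses the deterministic bound $Z+1\le j$ together with Lemma~\ref{lemma:quantiles1}, and for $j > n^{2/3}$ it applies Hoeffding's inequality to the sampling-without-replacement variable $Z$, invokes Lemma~\ref{lemma:quantiles2} to pass from quantiles to order statistics, and absorbs the resulting index shift of $O(\sqrt{n}(\log n)^{3/2})$ into the $O(1/\log n)$ loss. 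Without an argument of this kind (or the order-statistic analogue of Lemma~\ref{Lemma:Babaioff_bound_maximum} that you would need to prove separately), the proposal does not close.
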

  
The proof can be found in Appendix~\ref{appendix:correlated_valuations_dynamic}. First, observe that in principle, buyers will be indifferent between two items $j$ and $j'$ if $\alpha_j = \alpha_{j'}$. As these items are indistinguishable for later buyers anyway and new prices will be defined, we can assume that ties are broken in our favor. That is why we can assume that buyer $i$ will prefer item $\ell_k$ if and only if $F^{-1}\left( 1 - \frac{k}{n - i + 1}\right) \leq v_i < F^{-1}\left( 1 - \frac{k-1}{n - i + 1}\right)$. 

Using a suitable lower bound for $F^{-1}\left( 1 - \frac{k}{n - i + 1}\right)$ via the MHR property, we get a lower bound for the value $v_{i,j}$ if $i$ is matched to $j$. To this end, we compare quantiles of MHR distributions to the respective order statistics. As stated before, by Section~\ref{section:optimality}, the competitive ratio is optimal.

\subsection{Static prices}
\label{section:alpha_correlation_static}
When restricting to the case of static prices, we define probabilities $q_j$ having the interpretation that a buyer drawn from the distribution has one of items $1, \ldots, j$ as their first choice. For technical reasons, we discard items $\widehat{m} + 1, \ldots, n$ for $\widehat{m} = n - n^{5/6}$ by setting $p_j = \infty$ for these items. For $j \leq \widehat{m}$, we set prices
\[
p_j = \sum_{k = j}^n (\alpha'_k - \alpha'_{k+1}) F^{-1}(1 - q_k) \enspace, \text{ where } q_k = \min\left\{ \frac{k}{n} 2 \log \log n, \frac{k}{n} + \sqrt{\frac{\log n}{n}} \right\} \enspace,
\]
where $\alpha'_k = \alpha_k$ for $k \leq \widehat{m}$ and $0$ otherwise. 

Note the similarity of this price definition to the payments when applying the VCG mechanism. There, the buyer being assigned item $j$ has to pay $\sum_{k = j}^n (\alpha'_k - \alpha'_{k+1}) v_{(k + 1)}$.

\begin{restatable}{theorem}{alphacorrelationstaticpositive}
	\label{theorem:alpha_correlation_staticpositive}
	The posted-prices mechanism with static prices and separable valuations is $1 - O\left(\frac{\log \log \log n}{\log n}\right)$-competitive with respect to social welfare.
\end{restatable}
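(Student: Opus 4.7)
The plan is to lower-bound $\Ex[]{\ALG}$ via the decomposition $\Ex[]{\ALG} = \Ex[]{\revpp} + \Ex[]{\mathrm{Util}_{\mathrm{pp}}}$ and to show that each summand is at least a $\left(1 - O\!\left(\log\log\log n/\log n\right)\right)$-fraction of the corresponding quantity under the welfare-maximizing VCG mechanism (which realizes $\OPT$). As a preparatory step, since the items $\widehat{m}+1,\ldots,n$ carry price $\infty$ and are never sold, the tail $\sum_{k > \widehat{m}} \alpha_k \mu_k$ in the comparison with $\OPT$ is at most an $(n-\widehat{m})/\widehat{m} = O(n^{-1/6})$ fraction of $\OPT$ (using that both $\alpha_k$ and $\mu_k$ are non-increasing in $k$), which is absorbed into the target error. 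It thus suffices to work against $\OPT' := \sum_{k \leq \widehat{m}} \alpha_k \mu_k$.

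For the revenue, a direct calculation from the price formula yields $\alpha_j v - p_j = \sum_{l \geq j}(\alpha'_l - \alpha'_{l+1})(v - F^{-1}(1-q_l))$, which is unimodal in $j$ with peak at $j^*(v) := \min\{j \leq \widehat{m} : v \geq F^{-1}(1-q_j)\}$. Letting $B_k$ denote the number of buyers with value at least $F^{-1}(1-q_k)$, a short Hall-type case analysis on the greedy allocation shows that if $B_k \geq k$ holds simultaneously for all $k \leq \widehat{m}$, then all items are sold: if item $j$ were unsold, then either $B_{j-1} \geq j$ (forcing by pigeonhole an overflow buyer with target $<j$ who, by unimodality, would take item $j$), or else $B_{j-1} < j \leq B_j$ (so some buyer targets $j$ exactly and would take it). Since $B_k \sim \mathrm{Bin}(n, q_k)$ with mean $n q_k \geq \max(2 k \log\log n,\, k + \sqrt{n \log n})$, a Chernoff plus union bound argument gives this joint event with probability $1 - O(1/\log n)$. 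Conditionally, $\revpp = \sum_{j \leq \widehat{m}} p_j = \sum_{k \leq \widehat{m}} k(\alpha'_k - \alpha'_{k+1}) F^{-1}(1-q_k)$ after swapping the double sum, which has exactly the shape of the VCG expected revenue $\sum_k k(\alpha'_k - \alpha'_{k+1}) \mu_{k+1}$ with $F^{-1}(1-q_k)$ in place of $\mu_{k+1}$. Since $q_k/(k/n) = O(\log\log n)$, the MHR property (via a quantile-comparison estimate analogous to Lemma~\ref{lemma:quantile_maximum}) gives $F^{-1}(1-q_k) \geq (1 - O(\log\log\log n/\log n)) \mu_{k+1}$, hence $\Ex[]{\revpp} \geq (1 - O(\log\log\log n/\log n)) \Ex[]{\revopt}$.

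For the utility, the difficulty is that the greedy posted-price allocation need not respect the value-rank order, so a direct per-buyer comparison to VCG fails. The approach indicated in the introduction is an unusual application of Myerson's identity $\Ex[]{\mathrm{Rev}} = \Ex[]{\sum_i \phi(v_i) \alpha_{j_i}}$, which applies to any BIC mechanism, in particular to both VCG and to posted pricing. Combined with the MHR identity $v - \phi(v) = 1/h(v)$, this expresses the buyer surplus of any such mechanism as $\Ex[]{\sum_i \alpha_{j_i}/h(v_i)}$. One applies the identity to VCG and to an auxiliary mechanism whose allocation rule matches VCG but whose payments are tied to the deterministic quantile prices $F^{-1}(1-q_k)$; subtracting the resulting identities reduces the utility comparison to bounding the gap between the random quantity $\sum_k \alpha_k / h(v_{(k)})$ and a deterministic counterpart at the quantiles $F^{-1}(1-k/n)$. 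This gap is controlled via MHR concentration of the order statistics $v_{(k)}$ about $F^{-1}(1-k/n)$, yielding $\Ex[]{\mathrm{Util}_{\mathrm{pp}}} \geq (1 - O(\log\log\log n/\log n)) \Ex[]{\mathrm{Util}_{\mathrm{VCG}}}$. Summing the revenue and utility bounds with the truncation error gives the theorem.

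The technically challenging step is the utility comparison. The revenue part reduces to a Chernoff concentration combined with the MHR quantile estimate already in the paper's toolbox. The utility bound, by contrast, must bypass the combinatorial complexity of tracking which buyer receives which item under the greedy dynamics; the Myerson-based reduction circumvents this in aggregate, but closing the gap requires MHR concentration estimates for $v_{(k)}$ that remain uniformly good across $k \in \{1, \ldots, \widehat{m}\}$. The cutoff $\widehat{m} = n - n^{5/6}$ and the two-regime definition of $q_k$ are calibrated precisely so that these tail estimates remain controllable near the upper boundary of $k$, where standard MHR quantile bounds degrade.
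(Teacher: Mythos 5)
Your overall skeleton coincides with the paper's: truncate to the first $\widehat{m}$ items (losing $O(n^{-1/6})$), prove via Chernoff/Hoeffding plus a union bound that all of items $1,\ldots,\widehat{m}$ are sold with probability $1-O(1/\log n)$, split welfare into revenue and utility, and compare each piece to VCG. Your revenue half is essentially the paper's argument (the paper uses its Lemma on quantiles of MHR distributions, i.e.\ the Giannakopoulos--Zhu bound, in two regimes $k \le n^{5/6}$ and $k > n^{5/6}$, to get $F^{-1}(1-q_k) \ge (1 - 20\log\log\log n/\log n)\,\mu_k$), and your combinatorial ``all items sold'' claim matches the paper's lemma in spirit.

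The utility half, however, has a genuine gap. You correctly invoke Myerson to write the buyer surplus of any truthful mechanism as $\Ex{\sum_{j \in \textsc{sold}} \alpha_j / h(b_j)}$, but your proposed reduction --- an auxiliary mechanism with VCG's allocation and ``payments tied to the quantile prices,'' followed by concentration of $1/h(v_{(k)})$ around deterministic quantiles --- never produces a lower bound on the posted-price side. Both quantities you end up comparing are built from the VCG allocation; the posted-price allocation (which is the whole difficulty, since it is some greedy order-dependent matching) never enters. Moreover, a truthful mechanism's payments are pinned down by its allocation rule, so the auxiliary mechanism is not well defined. The missing idea is a pointwise majorization that requires no concentration at all: fix the valuation profile and condition on $\textsc{sold}^{\mathrm{sp}} \supseteq \{1,\ldots,\widehat{m}\}$. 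Since $1/h(\cdot)$ is non-increasing (MHR) and VCG hands the items to the $\ell$ largest values, for every $\ell$ one has $\sum_{j=1}^{\ell} 1/h(v_{(j)}) \le \sum_{j=1}^{\ell} 1/h(b^{\mathrm{sp}}_j)$, and Abel summation against the non-negative increments $\alpha'_\ell - \alpha'_{\ell+1}$ yields $\sum_j \alpha'_j/h(b^{\mathrm{VCG}}_j) \le \sum_j \alpha'_j/h(b^{\mathrm{sp}}_j)$ pointwise; on the complementary low-probability event one just uses non-negativity of utility. Replacing your concentration step by this argument closes the proof.
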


The proof of this theorem is deferred to Appendix~\ref{appendix:correlated_valuations_static}. The general steps are as follows. We first show that our prices are fairly low, meaning that the probability of selling all items $1,\dots,\widehat{m}$ is reasonably high. Having this, we decompose the social welfare into utility and revenue. The revenue of our mechanism is bounded in terms of the VCG revenue. To this end, we use that our pricing rule is quantile-based and exploit that the quantiles of any MHR distributions are lower-bounded by suitable fractions of expected order statistics. Talking about utility, we use a link to Myerson's theory and virtual values in order to achieve our desired bound. Again, by Section~\ref{section:optimality}, the competitive ratio is asymptotically tight.

	\section{Asymptotically Upper Bounds on the Competitive Ratios}
\label{section:optimality}
Our competitive ratios are asymptotically tight. In this section we provide matching upper bounds showing optimality. To this end, we consider the case of selling a single item with static and dynamic prices respectively. In any of the two cases, we can achieve asymptotic upper bounds on the competitive ratio of posted prices mechanisms which match our results from the previous sections. In particular, we prove that these bounds hold for any choice of pricing strategy. 

\subsection{Dynamic prices}
\label{section_optimality_dynamic}
We consider the guarantee of our dynamic-pricing mechanisms first. Even with a single item and types drawn from an exponential distribution, the best competitive ratio is $1 - \Omega\left(\frac{1}{\log n}\right)$. We simplify notation by omitting indices when possible.

\begin{proposition}
	\label{proposition:optimality-dynamic}
	Let $v_1, \dots, v_n \in \mathbb{R}_{\geq 0}$ be random variables where each $v_i$ is drawn i.i.d.\ from the exponential distribution with rate $1$, i.e., $v_1, \ldots, v_n \sim \mathrm{Exp}(1)$. For all dynamic prices, the competitive ratio of the mechanism picking the first $v_i$ with $v_i \geq p^{(i)}$ is upper bounded by $1 - \Omega\left(\frac{1}{\log n}\right)$.
\end{proposition}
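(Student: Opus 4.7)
Every dynamic pricing scheme is a threshold-based stopping rule, so I would start by upper bounding the expected welfare of any such mechanism by that of the \emph{optimal} online algorithm run against the i.i.d.\ $\mathrm{Exp}(1)$ sequence. Let $W_k$ denote the expected welfare of the optimal algorithm when $k$ buyers remain; since the $v_i$ are independent, the optimal threshold at each step depends only on $k$, and Bellman's principle gives $W_k = \Ex{\max(v, W_{k-1})}$ for $k \geq 2$ with $W_1 = \Ex{v} = 1$. A direct integration yields $\Ex{\max(v,t)} = t + e^{-t}$ for $v \sim \mathrm{Exp}(1)$ and any $t \geq 0$, so the recursion reduces to $W_k = W_{k-1} + e^{-W_{k-1}}$.

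The heart of the proof is to show $W_n \leq \ln n + o(1)$, which I would handle via the substitution $U_k := e^{W_k}$. The recursion becomes $U_k = U_{k-1}\, e^{1/U_{k-1}}$, and since $U_k \geq U_{k-1} + 1$ with $U_1 = e$, induction gives $U_k \geq k$. Combined with the elementary bound $e^{1/x} \leq 1 + 1/x + 1/x^2$ valid for $x \geq 1$, this yields
\begin{equation*}
U_k - U_{k-1} \;\leq\; 1 + \frac{1}{U_{k-1}} \;\leq\; 1 + \frac{1}{k-1},
\end{equation*}
so telescoping produces $U_n \leq n + H_{n-1} + O(1)$, and taking logarithms gives $W_n = \ln U_n \leq \ln n + O(\log n / n) = \ln n + o(1)$.

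The offline benchmark is $\Ex{\OPT} = \Ex{\max_{i \in [n]} v_i} = H_n = \ln n + \gamma + o(1)$, where $\gamma$ is the Euler--Mascheroni constant. Combining the two estimates yields $H_n - W_n \geq \gamma - o(1) \geq \gamma/2$ for all sufficiently large $n$, and dividing by $H_n = \Theta(\log n)$ produces the competitive ratio bound
\begin{equation*}
\frac{\Ex{\ALG}}{\Ex{\OPT}} \;\leq\; \frac{W_n}{H_n} \;\leq\; 1 - \frac{\gamma/2}{H_n} \;=\; 1 - \Omega\!\left(\frac{1}{\log n}\right),
\end{equation*}
as required.

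The main obstacle is that $W_n$ and $H_n$ both equal $\ln n + O(1)$, so one must pin down the $O(1)$ terms accurately enough to confirm that a positive constant gap survives the division by $\log n$. The exponential substitution $U_k = e^{W_k}$ is what makes this feasible: it essentially linearises the nonlinear recursion to $U_k \approx U_{k-1} + 1$ with only tractable $O(1/U_k)$ corrections, making it transparent that the optimal dynamic-pricing algorithm loses (asymptotically) an additive constant $\gamma$ in welfare relative to the offline prophet.
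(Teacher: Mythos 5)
Your proposal is correct, and its skeleton coincides with the paper's: both reduce the problem to the Bellman recursion for the optimal single-item stopping rule, namely $W_k = \Ex{\max\{v, W_{k-1}\}} = W_{k-1} + \e^{-W_{k-1}}$ with $W_1 = 1$, and both compare the resulting value to $\Ex{\OPT} = H_n$. Where you diverge is in the analysis of this recursion, and your route is genuinely different. The paper proves by induction (Lemma~\ref{lemma:mdp-recursion}) that $W_k \leq H_k - \tfrac{1}{8}$, using the bound $H_k \geq \log k + \gamma$ to absorb the correction term $\e^{-W_{k-1}}$ into the harmonic increment $\tfrac{1}{k}$; this yields a clean uniform additive gap of $\tfrac{1}{8}$ for all $k \geq 2$. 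You instead substitute $U_k = \e^{W_k}$, which linearises the recursion to $U_k = U_{k-1}\e^{1/U_{k-1}}$; the two-sided bounds $1 + y \leq \e^{y} \leq 1 + y + y^2$ for $y \in [0,1]$ then give $k \leq U_k \leq k + H_{k-1} + O(1)$, hence $W_n \leq \ln n + O(\log n / n)$ and an asymptotic gap $H_n - W_n \geq \gamma - o(1)$. Your argument is slightly sharper — it identifies the asymptotic loss as the Euler--Mascheroni constant $\gamma \approx 0.577$ rather than the paper's $\tfrac{1}{8}$ — at the cost of only holding for $n$ large enough, which is all that $\Omega(\cdot)$ requires. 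All the individual steps check out: $U_k \geq U_{k-1} + 1$ follows from $\e^{y} \geq 1+y$, the lower bound $U_{k-1} \geq k-1$ justifies $U_k - U_{k-1} \leq 1 + \tfrac{1}{k-1}$, and the telescoping and logarithm are routine.
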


In order to prove Proposition \ref{proposition:optimality-dynamic}, we use that the expected value of the optimal offline solution (the best value in hindsight) is given by $\Ex[]{\max_{i \in [n]} v_i} = H_n$ \citep{Arnold:2008:FCO:1373327}. Therefore, it suffices to show that the expected value of any dynamic pricing rule is upper bounded by $H_n - c$ for some constant $c > 0$.

To upper-bound the expected social welfare of any dynamic pricing rule, we use the fact that this problem corresponds to a Markov decision process and the optimal dynamic prices are given by\footnote{To the best of our knowledge, this is a folklore result.}
\[
p^{(n)} = 0 \quad \text{ and } \quad p^{(i)} = \Ex[]{\max \{v_{i+1}, p^{(i+1)}\}} \quad \text{ for $i < n$} \enspace.
\]
Furthermore, $p^{(0)}$ is exactly the expected social welfare of this mechanism. Therefore, the following lemma with $k = n$ directly proves our claim.

\begin{restatable}{lemma}{mdprecursion}
	\label{lemma:mdp-recursion}
	Let $v_1, \dots, v_n \in \mathbb{R}_{\geq 0}$ be random variables where each $v_i$ is drawn i.i.d.\ from the exponential distribution $\textnormal{Exp}(1)$. Moreover, let $p^{(n)} = 0$ and $p^{(i)} = \Ex[]{\max \{v_{i+1}, p^{(i+1)}\}}$ for $i < n$. Then, we have $p^{(n-k)} \leq H_k - \frac{1}{8}$ for all $2 \leq k \leq n$.
\end{restatable}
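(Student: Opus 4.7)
The plan is to first derive a clean recursion for the prices, then track how far $a_k := p^{(n-k)}$ lags behind $H_k$.

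Because $v_{i+1} \sim \mathrm{Exp}(1)$, we have $\Ex[]{\max\{v_{i+1}, p\}} = p + \Ex[]{(v_{i+1} - p)^+} = p + e^{-p}$ for any $p \geq 0$. Setting $a_k = p^{(n-k)}$ gives the one-dimensional recursion
\[
a_0 = 0, \qquad a_{k+1} = a_k + e^{-a_k} \enspace.
\]
Note that the map $\varphi(x) = x + e^{-x}$ is non-decreasing on $[0,\infty)$ (its derivative is $1 - e^{-x} \geq 0$), a fact that will let us push lower bounds through the recursion.

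Next, I would prove by induction the auxiliary bound $a_k \geq \log(k+1)$ for all $k \geq 1$. The base case is $a_1 = 1 \geq \log 2$. For the step, monotonicity of $\varphi$ together with the induction hypothesis yields
\[
a_{k+1} \geq \log(k+1) + e^{-\log(k+1)} = \log(k+1) + \tfrac{1}{k+1} \geq \log(k+2) \enspace,
\]
where the last inequality is $\log(1 + \tfrac{1}{k+1}) \leq \tfrac{1}{k+1}$.

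Now set $b_k := a_k - H_k$; the goal becomes $b_k \leq -\tfrac{1}{8}$ for $k \geq 2$. Using the recursion,
\[
b_{k+1} - b_k = e^{-a_k} - \tfrac{1}{k+1} \enspace.
\]
By the auxiliary bound from the previous paragraph, $e^{-a_k} \leq \tfrac{1}{k+1}$ for every $k \geq 1$, hence $(b_k)_{k \geq 1}$ is non-increasing. Therefore, for $k \geq 2$, it suffices to check the base case:
\[
b_2 = a_2 - H_2 = \bigl(1 + e^{-1}\bigr) - \tfrac{3}{2} = \tfrac{1}{e} - \tfrac{1}{2} \enspace.
\]
Since $\tfrac{1}{e} < \tfrac{3}{8}$ (equivalently $e > \tfrac{8}{3}$), we get $b_2 < -\tfrac{1}{8}$, and monotonicity of $(b_k)$ gives $a_k \leq H_k - \tfrac{1}{8}$ for all $k \geq 2$.

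The only delicate point is the constant: the chain of inequalities loses exactly $\tfrac{1}{2} - \tfrac{1}{e} \approx 0.132$ at the base case $k=2$, which is only a hair more than the required $\tfrac{1}{8} = 0.125$. Any cruder bookkeeping (e.g. linearizing $e^{-a_k}$, or bounding $a_k$ by $\log k$ rather than $\log(k+1)$) would make the two quantities wash out, so I would be careful to keep $\log(k+1)$ in the auxiliary bound and to evaluate $b_2$ exactly. Beyond that, the argument is a straightforward two-step induction and the main obstacle is verifying these sharp constants.
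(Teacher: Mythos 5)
Your proof is correct and follows essentially the same route as the paper's: both reduce to the recursion $a_{k+1} = a_k + e^{-a_k}$ and an induction showing the increment is at most $\frac{1}{k+1}$, so the deficit of $\frac18$ established at $k=2$ is preserved. The only cosmetic difference is that you certify $e^{-a_k} \leq \frac{1}{k+1}$ via a separate lower bound $a_k \geq \log(k+1)$, whereas the paper feeds the inductive upper bound $a_k \leq H_k - \frac{1}{8}$ back in together with $H_k \geq \log k + \gamma$.
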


The proof via induction over $k$ is deferred to Appendix~\ref{appendix:optimality_dynamic}. 

\subsection{Static prices}
\label{section_optimality_static}
For static pricing rules, we show that any mechanism is $1 - \Omega\left(\frac{\log \log \log n}{\log n}\right)$-competitive. Again, this bound even holds for a single item and the valuations being drawn from an exponential distribution.

\begin{restatable}{proposition}{optimalitystatic}
	\label{proposition:optimality-static}
	Let $v_1, \dots, v_n \in \mathbb{R}_{\geq 0}$ be random variables where each $v_i$ is drawn i.i.d.\ from the exponential distribution with rate $1$, i.e., $v_1, \ldots, v_n \sim \mathrm{Exp}(1)$. For all static prices $p \in \mathbb{R}_{\geq 0}$ the competitive ratio of the mechanism picking the first $v_i$ with $v_i \geq p$ is upper bounded by $1 - \Omega\left(\frac{\log \log \log n}{\log n}\right)$.
\end{restatable}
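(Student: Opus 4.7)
The plan is to write $\Ex[]{\ALG}$ in closed form as a function of the price $p$ and then optimize over $p$. Since each $v_i \sim \mathrm{Exp}(1)$, memorylessness gives $\Ex{v_i \growingmid v_i \geq p} = p + 1$, and the probability that buyer $i$ is the first to accept equals $(1-e^{-p})^{i-1} e^{-p}$. Summing the resulting geometric series yields
\[
\Ex[]{\ALG} \;=\; (p+1)\bigl(1 - (1-e^{-p})^n\bigr), \qquad \Ex[]{\OPT} \;=\; H_n.
\]
Since $\Ex[]{\OPT} = H_n = \Theta(\log n)$, the proposition reduces to showing the additive gap $\Ex[]{\OPT} - \Ex[]{\ALG} \geq c\,\ln\ln\ln n$ uniformly in $p \geq 0$, for some absolute constant $c > 0$ and all sufficiently large $n$.

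I would reparameterize via $u := n e^{-p}$, so that $p + 1 = \ln n - \ln u + 1$. If $u \geq (\ln\ln n)^2$ (equivalently $p \leq \ln n - 2\ln\ln\ln n$), the trivial bound $\Ex[]{\ALG} \leq p + 1$ already yields $\Ex[]{\OPT} - \Ex[]{\ALG} \geq 2\ln\ln\ln n - 1 \geq \ln\ln\ln n$ for $n$ large. In the complementary regime $u \leq (\ln\ln n)^2$ one has $u^2/n = o(1)$, so the elementary inequality $\ln(1-x) \geq -x - x^2$ (valid for $x \in [0,1/2]$) gives $(1-e^{-p})^n \geq e^{-u - u^2/n} \geq e^{-u}(1 - o(1))$. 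Substituting and using $H_n \geq \ln n$,
\[
\Ex[]{\OPT} - \Ex[]{\ALG} \;\geq\; \underbrace{\ln u - 1 + (\ln n - \ln u + 1)\, e^{-u}}_{=:\, \psi(u)} \;-\; o(1).
\]

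The heart of the argument is then to show $\psi(u) \geq (1 - o(1))\ln\ln\ln n$ uniformly in $u > 0$, which I would do by splitting at $u_0 := \ln\ln n - \ln\ln\ln n$. For $u \leq u_0$, the bound $e^{-u} \geq \ln\ln\ln n / \ln n$ forces the second summand $(\ln n - \ln u + 1)\, e^{-u}$ to be at least $(1 - o(1))\ln\ln\ln n$, while $\ln u - 1$ is at worst an $O(1)$ additive perturbation on the relevant scale (and for $u \leq 1$ the second summand alone is already $\Omega(\ln n)$). For $u \geq u_0$, the inequality $\ln u \geq \ln(\ln\ln n/2) = \ln\ln\ln n - \ln 2$ makes the first summand $\ln u - 1$ carry the weight. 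Combining the two subcases with the easy regime $u \geq (\ln\ln n)^2$ yields $\Ex[]{\OPT} - \Ex[]{\ALG} = \Omega(\ln\ln\ln n)$ for every $p$, which translates to the claimed competitive ratio $1 - \Omega(\ln\ln\ln n/\log n)$.

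The delicate point is the balance region $u \asymp \ln\ln n$, where both summands of $\psi$ are simultaneously of order $\ln\ln\ln n$ and neither alone delivers the bound. The precise choice of split point $u_0 = \ln\ln n - \ln\ln\ln n$ (rather than, say, at $\ln\ln n$ itself) is exactly what allows each summand to carry the lower bound independently on its side of the split and avoids the loss of any multiplicative constant.
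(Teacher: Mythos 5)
Your proposal is correct and follows essentially the same route as the paper: both start from $\Ex{\ALG}=(p+1)\left(1-\left(1-\e^{-p}\right)^n\right)$ versus $\Ex{\OPT}=H_n$ and establish an additive gap of $\Omega(\log\log\log n)$ by a case split on the price around the critical value $p\approx \log n-\log\log\log n$; your substitution $u=n\e^{-p}$ merely reparametrizes that split, and where the paper's third case invokes the monotonicity of the revenue curve on $[H_n-1,\infty)$ from Giannakopoulos and Zeng, you expand $(1-u/n)^n\geq \e^{-u}(1-o(1))$ directly, a minor gain in self-containedness. One point to tighten: for $u<1/n$ the first summand $\ln u-1$ is more negative than $-\log n$, so neither ``$O(1)$ perturbation'' nor ``the second summand alone is $\Omega(\ln n)$'' suffices as written; you need the cancellation $\ln u+(\ln n-\ln u+1)\e^{-u}\geq \ln n\,\e^{-u}-\ln(1/u)\left(1-\e^{-u}\right)\geq \frac{\ln n}{\e}-\frac{1}{\e}$, valid for all $0<u\leq 1$ since $\ln(1/u)\left(1-\e^{-u}\right)\leq u\ln(1/u)\leq 1/\e$.
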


The proof of Proposition~\ref{proposition:optimality-static} is deferred to Appendix~\ref{appendix:optimality_static}. The idea is as follows. The expected welfare obtained by the static-price mechanism using price $p$ is given by $\Ex{\ALG} = \Ex[]{v \growingmid v \geq p} \cdot \Pr{\exists i: v_i \geq p} = (p + 1) \cdot \left(1 - \left( 1 - \e^{-p} \right)^n \right)$. This has to be compared to the expected value of the optimal offline solution (the best value in hindsight), which is given by $\Ex[]{\max_{i \in [n]} v_i} = H_n$ \citep{Arnold:2008:FCO:1373327}. 
	\section{Extensions to Subadditive Buyers and Revenue Considerations}
\label{section:subadditive}

In this section, we illustrate that the same style of mechanisms used for unit-demand buyers in principle is also applicable for subadditive buyers. A valuation function $v_i$ is subadditive if $v_i(S \cup T) \leq v_i(S) + v_i(T)$ for any $S, T \subseteq M$. This generalizes unit-demand functions considered so far in this paper. Instead of being interested in only a single-item, each buyer now has a subadditive valuation function over item bundles and can thus be interested in multiple items.

To generalize the MHR property, we assume that the subadditive valuation functions are drawn from distributions with MHR marginals. That is, $v_i \sim \mathcal{D}$ and we assume that $v_i\left( \{ j \} \right)$ has a marginal distribution with monotone hazard rate. Buyers arrive online and purchase the bundle of items which maximizes the buyer's utility.

We can construct a dynamic-pricing mechanism which is $1 - O \left( \frac{1+\log m}{\log n}\right)$-competitive. For a detailed explanation, we refer to Appendix~\ref{appendix:mhr_marginals_dynamic}. The general approach is to split the set of buyers in subgroups of size $\left\lfloor \frac{n}{m} \right\rfloor$ and sell each item to one of these groups. For the $k$-th buyer in every group, the price for the item in question is set to $p_j^{(k)} = F_j^{-1}\left( 1 - \frac{1}{\left\lfloor \frac{n}{m} \right\rfloor - k + 1 } \right)$, where $F_j^{-1}$ denotes the quantile function of the marginal distribution of $v_i\left( \{j\} \right)$. Using techniques similar to the ones in the unit-demand case allows to bound the revenue of the posted-prices mechanism by the desired fraction of the optimal social welfare. Hence, the argument directly implies the respective bounds for welfare and revenue. 

In the static pricing environment, our results can be extended to a mechanism which is $1 - O \left( \frac{\log\log\log n}{\log n} + \frac{\log m}{\log n}\right)$-competitive for subadditive buyers. Details can be found in Appendix~\ref{appendix:mhr_marginals_static}. As before, let $F_j^{-1}$ be the quantile function of the marginal distribution of $v_i\left( \{j\} \right)$. Setting fairly low prices of $p_j = F_j^{-1} \left( 1 - q \right)$ for $q = \frac{m \log \log n}{n}$ ensures that we sell all items with a suitably high probability. Afterwards, we can apply the same bounds for MHR distributions as in the previous sections in order to bound the revenue of the mechanism with the respective fraction of the optimal social welfare. Again, this directly implies the mentioned competitive ratio for welfare as well as revenue, as the revenue of any individually rational mechanism is upper-bounded by the corresponding social welfare. 

Note that the guarantees now depend on the number of items $m$. To make them meaningful, we need $m = o(n)$. This makes them significantly worse than the ones we obtain for unit-demand functions with a much more careful treatment. However, they are stronger in one aspect, namely that in both cases we bound the revenue of the mechanism in terms of the optimal social welfare. In particular, this means that they are also approximations of the optimal revenue. Interestingly, the optimality results from Section~\ref{section:optimality} also transfer.

\subsection*{Acknowledgments}

We thank the anonymous reviewers for helpful comments on improving the presentation of the paper.
	\bibliography{references,dblp}
	\newpage \appendix
	\section{Proofs for Section \ref{section:independent_valuations_dynamic} on Independent Valuations}
\label{appendix:independent_valuations_dynamic_general}

We give a full proof of Theorem \ref{theorem_independent_dynamic}.

\theoremindependentdynamic*

We set $q_{j}^{(i)}$ to be the probability that item $j$ is matched in the optimum on $M^{(i)}$ and $n-i+1$ buyers if $j \in M^{(i)}$ and $0$ else. Observe that $q_{j}^{(i)}$ are random variables. 

Further, note that we can assume without loss of generality that $m \geq n$. In the case that $m < n$, we can add dummy items with value $0$ to ensure $m=n$. Matching $i$ to one of these dummy items in the mechanism then corresponds to leave $i$ unmatched. Observe that from a technical point of view, a point mass on $0$ is not a MHR distribution and hence, the lemmas for MHR distributions do not apply. Still, for dummy items, $\Ex[]{\max_{i' \in \left[n-i+1\right]} v_{i', j} } = \Ex[]{\max_{i' \in [n]} v_{i', j}}= 0$ which lets us resolve the problem and apply the quantile allocation rule combined with the proof sketch from Section \ref{section:independent_valuations_dynamic} for $m = n$. In particular, in the presence of dummy items, we simply draw an independent sample from $\text{Unif}[0,1]$ for each dummy item in the quantile allocation rule and use this as an artificial quantile. By this, we ensure that each item is still allocated with probability $\frac{1}{n-i+1}$ and further one item (maybe a dummy item) is sold for sure.

\begin{proof}[Proof of Theorem \ref{theorem_independent_dynamic}, general version]
	We lower-bound the expected social welfare $\Ex{\ALG}$ of our posted-prices algorithm by the expected social welfare $\Ex[]{\textnormal{SW}_{\textnormal{quantile}}}$ of the following \emph{generalized quantile allocation rule} stated in Algorithm \ref{algorithm_dynamic_independent_3}:\\
	Consider buyer $i$ in step $i$. Instead of allocating buyer $i$ the item that maximizes her utility $v_{i, j} - p_j^{(i)}$ for $j \in M^{(i)}$, we allocate the item that attains the highest weighted quantile $\max_{j \in M^{(i)}} F_j(v_{i, j})^{\frac{1}{q_j^{(i)}}}$.
	
	As described in Section~\ref{section:independent_valuations_dynamic}, we have $\Ex[]{\textnormal{SW}_{\textnormal{quantile}}} \leq \Ex{\ALG}$. 
	
	\begin{algorithm}[h]
		\SetAlgoNoLine
		\DontPrintSemicolon
		$M^{(1)} \longleftarrow M$ \\
		\For{$i \in N$}{
			\text{Compute $F_j(v_{i, j})^{\frac{1}{q_j^{(i)}}}$ for any $j \in M^{(i)}$ with $q_{j}^{(i)} > 0$} \\ 
			\text{Set $j_i$ such that it attains $\max_{j \in M^{(i)}} F_j(v_{i, j})^{\frac{1}{q_j^{(i)}}}$ and remove $j_i$ from $M^{(i+1)}$ } 
		}
		\caption{Quantile Allocation Rule for Independent Item Valuations}
		\label{algorithm_dynamic_independent_3}
	\end{algorithm}
	
	We define $\transformedquantile_{j}^{(i)} = F_j(v_{i, j})^{\frac{1}{q_j^{(i)}}}$. Observe that now buyer $i$ is allocated item $j$ for which $\transformedquantile_{j}^{(i)} > \transformedquantile_{j'}^{(i)}$ for any $j' \neq j$.
	
	We would like to gain control on the distribution of $\transformedquantile_{j}^{(i)}$. To this end, observe that for any $i$, any $j$ and any $t \in [0,1]$,
	\begin{align*}
	\Pr[]{\transformedquantile_{j}^{(i)} \leq t} = \Pr[]{F_j(v_{i,j}) \leq t^{q_j^{(i)}}} = \Pr[]{v_{i,j} \leq F_j^{-1}\left( t^{q_j^{(i)}} \right)} = F_j \left( F_j^{-1} \left( t^{q_j^{(i)}} \right) \right) = t^{q_j^{(i)}} \enspace.
	\end{align*}
	As a side remark, for $q_j^{(i)} = 1$, this is exactly the CDF of a random variable drawn from $\text{Unif}[0,1]$. \\
	
	Next, observe that in the optimum, we will match any buyer to an item. Therefore, $\sum_{j \in M^{(i)}} q_j^{(i)} = n-i+1$ for any step $i$ as this is the size of the optimum matching on $M^{(i)}$ and $n-i+1$ buyers.
	
	Define indicator variables $X_{i,j}$ which are $1$ if buyer $i$ is allocated item $j$ in the generalized quantile allocation rule, i.e. if $\transformedquantile_{j}^{(i)} > \transformedquantile_{j'}^{(i)}$ for all $j' \in M^{(i)} \setminus \{j\}$.
	
	Our overall goal is to show the following bound.
	
	\begin{proposition}
		Let $\text{OPT}_j$ denote the random variable indicating the contribution of item $j$ to the social welfare of the optimal offline solution. Then
		\[
		\Ex[]{\sum_{i=1}^n v_{i,j} X_{i,j}  } \geq \left( 1 - O\left( \frac{1}{\log n} \right)\right) \Ex[]{\text{OPT}_j}
		\]
	\end{proposition}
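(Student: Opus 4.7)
The plan has three pieces, matching the body of the sketch: decompose $\Ex[]{\sum_{i=1}^n v_{i,j} X_{i,j}}$ via the conditional distribution from Lemma~\ref{lemma_quantile_allocation}, upper-bound $\Ex[]{\text{OPT}_j}$ via an ex-ante relaxation, and apply the MHR-specific Lemma~\ref{lemma:quantile_maximum} to bridge the two. I would first condition on $M^{(i)}$ and apply the tower property: Observation~\ref{observation:allocation_probabilities} supplies the per-step allocation probability $q_j^{(i)}/(n-i+1)$ and Lemma~\ref{lemma_quantile_allocation} identifies $v_{i,j}$, conditional on $X_{i,j}=1$ and $M^{(i)}$, as having CDF $F_j(\cdot)^{k_j^{(i)}}$ with $k_j^{(i)} := (n-i+1)/q_j^{(i)} \geq 1$, yielding
\[
\Ex[]{v_{i,j} X_{i,j}} \;=\; \Ex[]{\frac{q_j^{(i)}}{n-i+1} \cdot \Ex[]{\max_{\ell \in [k_j^{(i)}]} Y_\ell \growingmid M^{(i)}}},
\]
with $Y_\ell$ i.i.d.\ from $F_j$; non-integer $k_j^{(i)}$ is handled via the integral $\int_0^\infty (1-F_j(t)^{k_j^{(i)}}) \, dt$, to which Lemmas~\ref{Lemma:Babaioff_bound_maximum} and~\ref{lemma:quantile_maximum} extend. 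For the right-hand side, let $q_j$ denote the ex-ante probability that item $j$ is matched in the offline optimum; by symmetry of the i.i.d.\ buyers each buyer is matched to item $j$ with marginal probability $q_j/n$, so the standard ex-ante LP relaxation of the bipartite matching yields
\[
\Ex[]{\text{OPT}_j} \;\leq\; q_j \cdot \Ex[]{v_{i,j} \growingmid v_{i,j} \geq F_j^{-1}(1 - q_j/n)}.
\]

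To bridge the two sides, I would invoke Lemma~\ref{lemma:quantile_maximum} with $q = q_j/n$ and $k = k_j^{(i)}$; taking $\alpha(i) = \max\left\{1,\, (1+\ln(n/q_j))/H_{k_j^{(i)}}\right\}$ and using $k_j^{(i)} \geq n-i+1$ gives $1/\alpha(i) \geq H_{n-i+1}/(H_n + O(1))$ whenever the lemma's side condition $\alpha(i)\,k_j^{(i)} \leq n/q_j$ is met. Factoring the common tail expectation out of the sum over $i$ then reduces the claim to
\[
\sum_{i=1}^n \Ex[]{\frac{q_j^{(i)}}{n-i+1} \cdot \frac{H_{n-i+1}}{H_n}} \;\geq\; \left(1 - O\!\left(\tfrac{1}{\log n}\right)\right) q_j,
\]
which is the weighted, random-weight analogue of the deterministic identity $\frac{1}{n H_n}\sum_{i=1}^n H_{n-i+1} = 1 + \tfrac{1}{n} - \tfrac{1}{H_n} = 1 - O(1/\log n)$ that drives the $m=n$ case of the sketch.

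The principal obstacle is precisely this last inequality, and it is twofold. First, the side condition in Lemma~\ref{lemma:quantile_maximum} can fail for atypical realizations in which $q_j^{(i)}$ is unusually small late in the process; those exceptional terms need to be bounded separately---for example via a direct Babaioff-style comparison on $\Ex[]{\max_{\ell \in [k_j^{(i)}]} Y_\ell}$ through Lemma~\ref{Lemma:Babaioff_bound_maximum}---and the resulting deficit absorbed into the $O(1/\log n)$ slack. Second, the random weight $q_j^{(i)}/(n-i+1)$ is correlated with the factor $H_{n-i+1}/H_n$, so the Ces\`aro-type averaging must be carried out in expectation over the joint process; splitting the sum at a suitable threshold on $i$ cleanly separates a bulk range, where $H_{n-i+1}/H_n$ is close to $1$ and the argument is routine, from a tail that contributes only a lower-order correction.
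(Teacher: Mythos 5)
Your architecture is the same as the paper's: pass to the quantile allocation rule, use Observation~\ref{observation:allocation_probabilities} and Lemma~\ref{lemma_quantile_allocation} for the conditional law of the allocated value, compare against the ex-ante upper bound $q_j^{(1)}\,\Ex[]{v_j \growingmid v_j \geq F_j^{-1}(1-q_j^{(1)}/n)} \geq \Ex[]{\text{OPT}_j}$, and bridge via Lemmas~\ref{Lemma:Babaioff_bound_maximum} and~\ref{lemma:quantile_maximum}. The gap is in the bridging step. Writing $q_j := q_j^{(1)}$, Lemma~\ref{lemma:quantile_maximum} with $\quant = q_j/n$ forces $\alpha \geq \bigl(1+\ln(n/q_j)\bigr)/H_k$, and $1+\ln(n/q_j) = H_n + \ln(1/q_j) + O(1)$. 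From $k_j^{(i)} \geq n-i+1$ alone you therefore only get $1/\alpha(i) \geq H_{n-i+1}/\bigl(H_n + \ln(1/q_j) + O(1)\bigr)$, not $H_{n-i+1}/(H_n+O(1))$; summing over $i$ then loses a factor of order $\ln(1/q_j)/\ln n$, which is not $O(1/\log n)$ uniformly in $q_j$ (and since $\sum_j q_j = n$ with $m \geq n$ items, many items can have tiny $q_j$). The cure is to let the reference number of draws scale with $1/q_j$ --- the paper takes $k = \lceil n/(2q_j)\rceil$, so that $H_k$ cancels the $\ln(1/q_j)$ in the numerator --- combined with the sharper bound $k_j^{(i)} = (n-i+1)/q_j^{(i)} \geq (n-i+1)/q_j^{(1)}$. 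That bound rests on the monotonicity $q_j^{(i)} \leq q_j^{(1)}$, a fact your proposal never states or uses; it is also precisely what the paper exploits to decouple the random weight $q_j^{(i)}/(n-i+1)$ from the random conditional value, by replacing the random CDF exponent $(n-i+1)/q_j^{(i)}$ with the deterministic $(n-i+1)/q_j^{(1)}$ and then invoking $\Ex[]{q_j^{(i)}/(n-i+1)} = q_j^{(1)}/n$.

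Relatedly, you misplace the two remaining difficulties. In your reduced inequality the factor $H_{n-i+1}/H_n$ is deterministic, so there is no correlation between it and the random weight; $\Ex[]{q_j^{(i)}/(n-i+1)} = q_j^{(1)}/n$ settles that sum immediately. The correlation that actually needs an argument is between $q_j^{(i)}/(n-i+1)$ and $\Ex[]{\max_{\ell \in [k_j^{(i)}]} Y_\ell \growingmid M^{(i)}}$, both functions of $M^{(i)}$; your scheme handles it only if the pointwise lower bound on the conditional expectation is deterministic given $i$, which is exactly the issue above. Likewise, the failure of the side condition $\alpha k \leq 1/\quant$ is not an atypical late-process event: $q_j^{(i)}$ can drop below $q_j^{(1)}$ at any step, making $k_j^{(i)}$ exceed $n/q_j$ early on. But that direction is benign --- more draws only increase the maximum, so one truncates $k$ down to the reference value --- whereas the place where Lemma~\ref{Lemma:Babaioff_bound_maximum} is genuinely needed is the late steps in which even $(n-i+1)/q_j$ falls below the reference $k$; this is the paper's case distinction at $i^\ast$. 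With the reference $k = \lceil n/(2q_j)\rceil$ and the monotonicity observation added, your plan becomes the paper's proof.
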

	
	Note that showing this proposition proves the claim by taking a sum over all $j \in M$.
	
	To prove the proposition, our first step is to control the distribution of $v_{i,j}$ given that $X_{i,j} = 1$ in order to get access to the value of an agent being allocated an item in the quantile allocation rule. To this end, we restate the corresponding lemma from Section \ref{section:independent_valuations}.
	\lemmaquantileallocation* 
	The proof can be found in Section \ref{proof_lemma_quantile_allocation}. Again, for integral values of $\frac{n-i+1}{q_j^{(i)}}$, observe that this is exactly the CDF of the maximum of $\frac{n-i+1}{q_j^{(i)}}$ independent draws from distribution $F_j$.
		
	Further, we argue on the random variables $q_{j}^{(i)}$. As a first remark, the variables $q_{j}^{(i)}$ are independent of the values $v_{i,j}$ as we define the $q_{j}^{(i)}$ without any knowledge on the $v_{i,j}$. The following observation shows that $\Ex[]{\frac{q_{j}^{(i)}}{n-i+1}}$ is exactly $\frac{q_j^{(1)}}{n}$. Note that $q_j^{(1)}$ is deterministic because it is the a priori probability that item $j$ is matched in the optimum.
	
	\begin{observation}
		For all $i$, we have $\Ex[]{\frac{q_{j}^{(i)}}{n-i+1} } = \frac{q_j^{(1)}}{n}$.
	\end{observation}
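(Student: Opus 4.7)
The plan is to prove this observation by coupling the mechanism's state process $(M^{(i)})_i$ with a process that arises directly from the offline optimum. Write $A_i$ for the (random) item matched to buyer $i$ in the offline optimum on all $n$ buyers and $m$ items (with $A_i = \perp$ if $i$ is unmatched). Since all $n$ buyer valuations are i.i.d., the sequence $(A_1, \ldots, A_n)$ is exchangeable in the buyer index, so $\Pr[]{A_i = j}$ does not depend on $i$. Using that $\sum_{i'} \Pr[]{A_{i'} = j} = \Pr[]{j \text{ matched in OPT}} = q_j^{(1)}$, this yields $\Pr[]{A_i = j} = q_j^{(1)}/n$ for every $i$.

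Next, I would introduce the random set $\tilde{M}^{(i)} := M \setminus \{A_1, \ldots, A_{i-1}\}$, the items not yet assigned by the offline optimum after its first $i-1$ buyers. The ``in other words'' formulation of the definition of $q_j^{(i)}$ translates to $q_j^{(i)}(\tilde{M}^{(i)}) = \Pr[]{j \text{ matched in OPT} \growingmid \tilde{M}^{(i)}}$. Combining this with the exchangeability of $A_i, \ldots, A_n$ conditional on $\tilde{M}^{(i)}$---which holds because $\tilde{M}^{(i)}$ records only the \emph{unordered} set $\{A_1, \ldots, A_{i-1}\}$, and because buyer values are i.i.d.---gives $\Pr[]{A_i = j \growingmid \tilde{M}^{(i)}} = q_j^{(i)}(\tilde{M}^{(i)})/(n-i+1)$ for every $j \in \tilde{M}^{(i)}$.

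The third step is to observe that $(M^{(i)})_i$ and $(\tilde{M}^{(i)})_i$ are Markov chains with the same initial state $M$ and the same transition kernel: from state $S$ at step $i$, the next state is $S \setminus \{k\}$ with probability $q_k^{(i)}(S)/(n-i+1)$. For the mechanism this holds by the construction of the dynamic prices via Brouwer's fixed point; for the offline process it holds by the previous paragraph. A straightforward induction on $i$ then yields $M^{(i)} \stackrel{d}{=} \tilde{M}^{(i)}$, and in particular $q_j^{(i)}(M^{(i)}) \stackrel{d}{=} q_j^{(i)}(\tilde{M}^{(i)})$.

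Putting the three steps together,
\[
\Ex[]{\frac{q_j^{(i)}}{n-i+1}} = \Ex[]{\frac{q_j^{(i)}(\tilde{M}^{(i)})}{n-i+1}} = \Ex[]{\Pr[]{A_i = j \growingmid \tilde{M}^{(i)}}} = \Pr[]{A_i = j} = \frac{q_j^{(1)}}{n}.
\]
The delicate point of the argument---and the main obstacle---is justifying the conditional exchangeability of $A_i, \ldots, A_n$ given $\tilde{M}^{(i)}$: even though $\tilde{M}^{(i)}$ is a nontrivial function of all $n$ buyer valuations, the key observation is that it depends only on the unordered set of items matched to the first $i-1$ buyers, so permuting buyers within $[1,i-1]$ or within $[i,n]$ preserves both the conditioning event and the joint distribution of values, yielding the required symmetry.
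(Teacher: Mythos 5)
Your proposal is correct and follows essentially the same route as the paper: the paper likewise introduces the set $M_\ast^{(i)}$ of items not allocated to buyers $1,\ldots,i-1$ by the offline optimum, asserts it is identically distributed to $M^{(i)}$, and concludes by buyer symmetry that each of the $n-i+1$ remaining buyers receives item $j$ with probability $\frac{1}{n-i+1}\Ex{q_j^{(i)}} = \frac{q_j^{(1)}}{n}$. You merely make explicit what the paper leaves terse, namely the Markov-chain induction behind ``identically distributed'' and the conditional exchangeability behind ``by symmetry between buyers.''
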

	
	\begin{proof}
		Observe that this is the expected probability that we allocate item $j$ in step $i$ of the generalized quantile allocation rule. 
		Let $M_{\ast}^{(i)}$ be the set of items not allocated buyers $1, \ldots, i-1$ by the optimal offline solution. Note that $M^{(i)}$ and $M_{\ast}^{(i)}$ are identically distributed. Therefore, the optimal offline solution assigns item $j$ to one of the buyers $i, \ldots, n$ with a probability of $\Ex{q_{j}^{(i)}}$ a priori. By symmetry between buyers, each buyer is assigned item $j$ in the optimal offline solution with the same probability, namely $\frac{1}{n-i+1} \Ex{q_{j}^{(i)}} = \frac{1}{n} q_j^{(1)}$.
	\end{proof}
	
	\begin{observation}
		\label{observation:qsmaller}
		We have $q_{j}^{(i)} \leq q_{j}^{(1)}$ for all $i$ and $j$.
	\end{observation}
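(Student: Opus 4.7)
Since $q_j^{(i)}$ is by definition a deterministic function of the random set $M^{(i)}$ (and equals $0$ whenever $j \notin M^{(i)}$), proving the almost sure bound $q_j^{(i)} \leq q_j^{(1)}$ reduces to showing, for every realization $S \subseteq M$ with $j \in S$ and $|S| = m - i + 1$, that $q_j^{(i)}(S) \leq q_j^{(1)}$. Here $q_j^{(i)}(S)$ denotes the probability, over fresh i.i.d.\ valuations, that item $j$ appears in the maximum-weight bipartite matching on items $S$ against $n - i + 1$ buyers drawn from $\mathcal{D}$.

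The plan is to grow the restricted instance $(S, n - i + 1)$ into the full instance $(M, n)$ by appending one item and one fresh i.i.d.\ buyer at a time, keeping the item--buyer gap $|T| - |B| = m - n$ constant throughout, and to show that each such augmentation weakly increases $j$'s matching probability. Concretely, the key single-step lemma would read: for any bipartite instance $(T, B)$ with $|T| - |B| = m - n$ and $j \in T$, adjoining one extra item $k^{\ast} \in M \setminus T$ (with its marginal distribution) together with one fresh i.i.d.\ buyer weakly increases the probability that $j$ is matched in the resulting maximum-weight matching. Iterating the lemma $i-1$ times, once per item of $M \setminus S$ and the corresponding buyer, yields $q_j^{(i)}(S) \leq q_j^{(1)}$.

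The main obstacle is establishing the single-step monotonicity, which cannot be proved by a naive pointwise coupling: one can construct valuation realizations in which $j$ is matched in the smaller instance but unmatched in the larger one, because the newly added item may ``steal'' the buyer that would otherwise have been matched to $j$. The bound must therefore be argued at the expectation level. The natural route is an exchange/augmenting-path argument: realize both optima on the same draw of valuations for the shared items and buyers, decompose the symmetric difference of the two matchings into alternating paths and cycles, and show that along these structures the expected contribution to $j$'s matching status is non-negative. Balancing the opposing effects of simultaneously adding one buyer (which weakly helps $j$) and one item (which may weakly hurt $j$) is precisely the technical crux; an alternative route, if needed, is to first prove the analogous monotonicity for the LP relaxation of bipartite matching and then transfer back via integrality of the bipartite matching polytope.
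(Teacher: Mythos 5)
Your reduction to a deterministic claim for each realization $S$ of $M^{(i)}$ is correct, but the proposal stops exactly where the proof would have to begin. The single-step lemma --- adjoining one item of $M \setminus S$ together with one fresh i.i.d.\ buyer weakly increases the probability that $j$ is matched --- is stated, identified as ``precisely the technical crux'', and then left unproved. Neither suggested route is carried out: the alternating-path/exchange argument is only named, with no indication of how the symmetric difference of the two optima yields a sign on the change in $j$'s matching probability once the opposing effects of the extra buyer and the extra item are superimposed, and the LP-relaxation route is not developed at all. Since you yourself exhibit the obstruction (the new item can steal the buyer that would otherwise have taken $j$, so no realization-by-realization coupling can work for your decomposition), an expectation-level argument is indispensable, and without it the chain of $i-1$ augmentations establishes nothing. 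As written, this is a plan for a proof rather than a proof.

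The paper's own argument is entirely different and much shorter: it invokes the second characterization of $q_j^{(i)}$ as the probability that $j$ is allocated in the offline optimum on the \emph{full} instance $(M,n)$ constrained so that buyers $1,\ldots,i-1$ receive the items of $M \setminus M^{(i)}$, compares this, realization by realization, to the unconstrained optimum (whose allocation probability for $j$ is $q_j^{(1)}$), and asserts that if $j$ is allocated in the constrained optimum it is also allocated in the unconstrained one. Be aware, though, that this is exactly the kind of pointwise domination you are rightly wary of: the unconstrained optimum may reroute one of the first $i-1$ buyers away from $M \setminus M^{(i)}$ and thereby leave $j$ unmatched even though the constrained optimum matches it (e.g.\ $m=3$, $n=2$, $i=2$, $M^{(2)}=\{2,3\}$, with $v_{2,1}$ far larger than all other values, $v_{2,2}>v_{2,3}$ and $v_{1,3}>v_{1,2}$: the constrained optimum matches items $1$ and $2$, the unconstrained optimum matches items $1$ and $3$). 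So your instinct that the statement needs more than a naive coupling is sound --- which makes it all the more necessary for your writeup to actually supply the missing expectation-level argument rather than defer it.
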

	
	\begin{proof}
		If $j \not\in M^{(i)}$, we have $q_{j}^{(i)} = 0$, so the statement is clear. Otherwise, recall that $q_{j}^{(i)}$ is the probability that item $j$ is allocated in the offline optimum constrained to buyers $1, \ldots, i-1$ receiving the items from $M \setminus M^{(i)}$. Compare for a particular realization of the valuations this constrained to the unconstrained offline optimum. If in the constrained optimum item $j$ is allocated, it is also allocated in the unconstrained optimum. 
	\end{proof}
	
	Now, we first consider the case of step $i$ conditioned on $M^{(i)}$. First, observe that conditioned on $M^{(i)}$, $q_{j}^{(i)}$ is not random anymore. Hence, by the use of Observation \ref{observation:allocation_probabilities} we get that 
	\begin{align*}
	\Ex[]{v_{i,j} X_{i,j}   \growingmid   M^{(i)} } = \Pr[]{X_{i,j} = 1   \growingmid   M^{(i)}} \cdot \Ex[]{v_{i,j}  \growingmid   X_{i,j} = 1, M^{(i)}} = \frac{q_{j}^{(i)}}{n-i+1} \Ex[]{Y_{i,j}} \enspace,
	\end{align*}
	where $Y_{i,j}$ denotes a random variable with CDF $F_j(t)^{\frac{n-i+1}{q_j^{(i)}}}$. By Observation~\ref{observation:qsmaller}, we can bound $\Ex[]{Y_{i,j}} \geq \Ex[]{Y'_{i,j}}$ where $Y'_{i,j}$ is a random variable with CDF $F_j(t)^{\frac{n-i+1}{q_j^{(1)}}}$. Note that the latter CDF does not depend on $q_{j}^{(i)}$ anymore, but only on $q_{j}^{(1)}$ which is deterministic. As a consequence, we can bound \[ 	\Ex[]{v_{i,j} X_{i,j}   \growingmid   M^{(i)} } \geq \frac{q_{j}^{(i)}}{n-i+1} \Ex[]{Y'_{i,j}} \enspace. \]  
	Taking the expectation over all possible sets $M^{(i)}$, we get \[  \Ex[]{v_{i,j} X_{i,j}} \geq \Ex[]{ \frac{q_{j}^{(i)}}{n-i+1} \Ex[]{Y'_{i,j} } }  =  \frac{q_{j}^{(1)}}{n} \Ex[]{Y'_{i,j} } \enspace. \] Now, we can sum over all buyers $i$ to bound the contribution of one item to the quantile allocation rule. Observe that $Y'_{i,j}$ has CDF $F_j(t)^{\frac{n-i+1}{q_j^{(1)}}}$. When rounding the value of $\frac{n-i+1}{q_j^{(1)}}$ to the next smaller integer, we get a random variable which is the maximum of $\left\lfloor \frac{n-i+1}{q_j^{(1)}} \right\rfloor$ draws from $F_j$. Hence, we get
	\begin{align*}
	\Ex[]{\sum_{i=1}^n v_{i,j} X_{i,j}  } &  \geq \sum_{i=1}^{n} \frac{q_{j}^{(1)}}{n} \Ex[]{Y'_{i,j}} \geq  \sum_{i=1}^{n} \frac{q_{j}^{(1)}}{n} \Ex[]{ \max_{i' \in \left\lfloor \frac{n-i+1}{q_j^{(1)}} \right\rfloor} \left\{ v_{i',j} \right\} } 
	\end{align*}
	
	Now, define $k = \left\lceil \frac{n}{2 q_j^{(1)}} \right\rceil$ and $k_{i} = \left\lfloor \frac{n-i+1}{q_j^{(1)}} \right\rfloor$. We make a case distinction if $k_i \geq k$ or not. Denote by $i^\ast$ the last $i$ for which $k_i \geq k$. \\
	
	\textbf{Case 1:} $k_i \geq k$. 
	
	Note that for all $i$ with $k_i = \left\lfloor \frac{n-i+1}{q_j^{(1)}} \right\rfloor \geq k$, i.e. $i = 1 , \dots, i^\ast$, we get \[ \Ex[]{ \max_{i' \in [k_{i}]} \left\{ v_{i',j} \right\} } \geq \Ex[]{ \max_{i' \in [k]} \left\{ v_{i',j} \right\} } \geq \frac{\log(n-i)}{\log(n)} \Ex[]{ \max_{i' \in [k]} \left\{ v_{i',j} \right\} } \enspace. \]
	
	\textbf{Case 2:} $k_i < k$. 
	
	For all $i$ such that $\left\lfloor \frac{n-i+1}{q_j^{(1)}} \right\rfloor < k$, we can exploit the MHR property via an application of Lemma \ref{Lemma:Babaioff_bound_maximum} in order to bound 
	\[
	\Ex[]{ \max_{i' \in \left\lfloor \frac{n-i+1}{q_j^{(1)}} \right\rfloor} \left\{ v_{i',j} \right\} } \geq \frac{\log\left( \left\lfloor \frac{n-i+1}{q_j^{(1)}} \right\rfloor  \right)}{\log\left( k \right) } \Ex[]{ \max_{i' \in [k]} \left\{ v_{i',j} \right\} } \geq \frac{\log\left( n-i \right) - \log\left( q_j^{(1)} \right)}{ \log \left( n+2\right) -  \log \left( 2 q_j^{(1)} \right)} \Ex[]{ \max_{i' \in [k]} \left\{ v_{i',j} \right\} }  \enspace.
	\]
	
	For the last inequality, we use that $\log (k) = \log \left( \left\lceil \frac{n}{2 q_j^{(1)}} \right\rceil \right) \leq \log \left( \frac{n}{2 q_j^{(1)}} + 1 \right)  = \log \left( \frac{n+ 2 q_j^{(1)}}{2q_j^{(1)}} \right) =  \log \left( n+ 2q_j^{(1)}\right) -  \log \left(  2q_j^{(1)} \right) \leq \log \left( n+2\right) -  \log \left( 2 q_j^{(1)} \right)$. By similar calculations, we get that $\log\left( \left\lfloor \frac{n-i+1}{q_j^{(1)}} \right\rfloor  \right) \geq \log\left( n-i \right) - \log\left( q_j^{(1)} \right)$.
	
	Observe, that for $i = n$, the latter expression is not be well defined. We will not consider these variables and only take the sum until $n-1$ later. \\
	
	\textbf{Combination:}
	
	Now, we aim to apply Lemma \ref{lemma:quantile_maximum} for suitably chosen values of $\alpha$, $k$ and $\quant$. We set $\quant = \frac{q_j^{(1)}}{n}$ and $k = \left\lceil \frac{1}{2\quant}\right\rceil = \left\lceil \frac{n}{2 \cdot q_j^{(1)}} \right\rceil$ as above and $\alpha = \frac{1 + \ln \left(n\right)}{H_{n/2}}$. 
	As a consequence, we can compute 
	\begin{align*}
	\Ex[]{\sum_{i=1}^n v_{i,j} X_{i,j}  }  & \geq \sum_{i=1}^{i^\ast} \frac{q_{j}^{(1)}}{n} \frac{\log(n-i)}{\log(n)} \Ex[]{ \max_{i' \in [k]} \left\{ v_{i',j} \right\} } \\ & + \sum_{i=i^\ast + 1}^{n-1} \frac{q_{j}^{(1)}}{n} \frac{\log\left( n-i \right) - \log\left( q_j^{(1)} \right)}{ \log \left( n+2\right) -  \log \left( 2 q_j^{(1)} \right)} \Ex[]{ \max_{i' \in [k]} \left\{ v_{i',j} \right\} } \\ & \geq \sum_{i=1}^{i^\ast} \frac{q_{j}^{(1)}}{n} \frac{1}{\alpha} \frac{\log(n-i)}{\log(n)}  \Ex[]{v_{j}  \growingmid v_{j} \geq F_j^{-1}\left(1-\frac{q_j^{(1)}}{n}\right) } \\ & + 
	\sum_{i=i^\ast + 1}^{n-1} \frac{q_{j}^{(1)}}{n} \frac{1}{\alpha}  \frac{\log\left( n-i \right) - \log\left( q_j^{(1)} \right)}{ \log \left( n+2\right) -  \log \left( 2 q_j^{(1)} \right)} \Ex[]{v_{j} \growingmid  v_{j} \geq F_j^{-1}\left(1-\frac{q_j^{(1)}}{n}\right) }
	\end{align*} 
			
	Now, observe that $q_{j}^{(1)}\Ex[]{v_{j} \growingmid  v_{j} \geq F_j^{-1}\left(1-\frac{q_j^{(1)}}{n}\right) } \geq \Ex[]{\text{OPT}_j}$ because the former is exactly the contribution of item $j$ to the ex-ante relaxation and hence it is an upper bound for the expected contribution of item $j$ to the offline optimum. In addition, we use that by the integral estimation $\sum_{i = 1}^n \log(i) = \sum_{i = 1}^n \log(n-i+1) \geq n \log n - n + 1$. Therefore,
	\begin{align*}
	\Ex[]{\sum_{i=1}^n v_{i,j} X_{i,j}  }  & \geq \Ex[]{\text{OPT}_j} \alpha^{-1} \left( \sum_{i=1}^{i^\ast} \frac{1}{n} \cdot \frac{\log(n-i)}{\log(n)} + \sum_{i=i^\ast + 1}^{n-1} \frac{1}{n} \cdot \frac{\log\left( n-i \right) - \log\left( q_j^{(1)} \right)}{ \log \left( n+2\right) -  \log \left( 2 q_j^{(1)} \right)} \right) \\& \geq \left( 1 - O\left( \frac{1}{\log n} \right)\right) \Ex[]{\text{OPT}_j}
	\end{align*}
\end{proof}

\lemmaquantilemaximum*

\begin{proof}\label{appendix:proof_lemma_quantile_maximum}
	First, we define for $y \in [0,1]$ by $g(y) = \frac{1-y}{f(F^{-1}(y))}$ the inverse of the hazard rate at point $F^{-1}(y)$, i.e. for hazard rate $h(x) = \frac{f(x)}{1-F(x)}$, we have $h\left(F^{-1}(y)\right) = \frac{f(F^{-1}(y))}{1-F\left( F^{-1}(y) \right)} = \frac{f(F^{-1}(y))}{1-y} = \frac{1}{g(y)}$. Observe that by the MHR property, $h(x)$ is non-decreasing and hence $g(y)$ is non-increasing.
	
	Second, we aim for a suitable expression of the $\Ex[]{X \growingmid  X \geq F^{-1}\left(1-\quant\right) }$ which we can compute as follows:
	\begin{align*}
	\Ex[]{X \growingmid  X \geq F^{-1}\left(1-\quant\right) } - F^{-1} (1-\quant)  & = \frac{1}{\quant} \int_{F^{-1} (1-\quant) }^{\infty} 1- F(x) dx \\ & = \frac{1}{\quant} \int_{F^{-1} (1-\quant) }^{\infty} \frac{1- F(x)}{f(x)} f(x) dx \\ & = \frac{1}{\quant} \int_{F^{-1} (1-\quant) }^{\infty} \frac{1}{h(x)} f(x) dx \\ &= \frac{1}{\quant} \int_{1-\quant}^{1} g(y) dy
	\end{align*}
	In addition, observe that we can write $F^{-1} (1-\quant) = \int_{0}^{1-\quant} \frac{g(y)}{1-y} dy$.
	
	Third, note that we can calculate
	\begin{align*}
	\Ex[]{\max_{i \in [k]} Y_i } & = \int_{0}^{\infty} 1 - F^k(x) dx = \int_{0}^{\infty} \frac{1 - F^k(x)}{f(x)} f(x) dx \\ &
	= \int_{0}^{\infty} \frac{1}{h(x)} \frac{1 - F^k(x)}{1 - F(x)} f(x) dx \\ & = \int_0^1 g(y) \frac{1-y^k}{1-y} dy \enspace.
	\end{align*} 
	
	As a consequence, the claim of our lemma holds if and only if \[ \alpha \int_0^1 g(y) \frac{1-y^k}{1-y} dy - \int_{0}^{1-\quant} \frac{g(y)}{1-y} dy - \int_{1-\quant}^{1} \frac{g(y)}{\quant} dy \geq 0 \enspace. \]  We can split the left-hand side in the following (possibly empty) integrals: First, split the first integral into two ranges from $0$ to $1-\quant$ and the remainder starting from $1-\quant$. Then, combine the respective integrals over equal ranges and define a threshold $y^\ast = \min \left\{ \sqrt[k]{1 - \frac{1}{\alpha}} , 1-\quant \right\}$ as the point at which the sign of $\alpha \left( 1-y^k \right) -1$ changes from positive to negative. This allows to rewrite the integrals of the left-hand side as \[ \int_0^{y^\ast} \frac{g(y)}{1-y} \left( \alpha \left( 1-y^k \right) - 1 \right) dy + \int_{y^\ast}^{1-\quant} \frac{g(y)}{1-y} \left( \alpha \left( 1-y^k \right) - 1 \right) dy + \int_{1-\quant}^{1} g(y) \left( \alpha \sum_{i=0}^{k-1} y^i - \frac{1}{\quant} \right) dy   \enspace. \] Observe that $g(y)$ is non-increasing by the MHR property. Further, note that $\alpha \sum_{i=0}^{k-1} y^i - \frac{1}{\quant} \leq 0$ as $\alpha k \leq \frac{1}{\quant}$. Setting $c = g(y^\ast)$, we can compute 
	
	\begin{align*}
	&\int_0^{y^\ast} \frac{g(y)}{1-y} \left( \alpha \left( 1-y^k \right) - 1 \right) dy + \int_{y^\ast}^{1-\quant} \frac{g(y)}{1-y} \left( \alpha \left( 1-y^k \right) - 1 \right) dy + \int_{1-\quant}^{1} g(y) \left( \alpha \sum_{i=0}^{k-1} y^i - \frac{1}{\quant} \right) dy  \\  \geq & \int_0^{y^\ast} \frac{c}{1-y} \left( \alpha \left( 1-y^k \right) - 1 \right) dy + \int_{y^\ast}^{1-\quant} \frac{c}{1-y} \left( \alpha \left( 1-y^k \right) - 1 \right) dy + \int_{1-\quant}^{1} c \left( \alpha \sum_{i=0}^{k-1} y^i - \frac{1}{\quant} \right) dy \\  = &  \ c \alpha \int_{0}^{1} \sum_{i=0}^{k-1} y^i dy - c \int_{0}^{1-\quant} \frac{1}{1-y} dy - c \int_{1-\quant}^{1} \frac{1}{\quant} dy = c \alpha H_k - c \left( 1 + \ln\left( \frac{1}{\quant}\right) \right)
	\end{align*}
	
	By our choice of $\alpha \geq \frac{1 + \ln \left(\frac{1}{\quant}\right)}{H_k}$, observe that $\alpha H_k - \left( 1 + \ln \left(\frac{1}{\quant}\right) \right) \geq 0$. So the integral is non-negative for any $c \geq 0$. As a consequence, the claim holds.
\end{proof}

	\section{Proofs for Section \ref{section:independent_valuations_static} on Independent Valuations}
\label{appendix:independent_valuations_static}

\theoremindependentstatic*

\begin{proof}
	We start by considering the probability of selling a fixed item $j$. We can lower bound the probability that buyer $i$ buys the item by the event that buyer $i$ only has positive utility for item $j$, i.e.
	\begin{align*}
		\Pr[]{\textnormal{$i$ buys item $j$}} & \geq \left( 1 - F_j(p_j) \right) \prod_{j' \in M^{(i)} \setminus \{j\}} F_{j'} (p_{j'}) \geq \frac{\log \log n}{n} \left( 1- \frac{\log \log n}{n} \right)^{m} \\ & \geq \frac{\log \log n}{n} \left( 1- m \cdot \frac{\log \log n}{n} \right)  \geq \frac{\log \log n}{n} \left( 1- \frac{1}{\log \log n} \right) \enspace,
	\end{align*}
	where the third inequality is an application of Bernoulli's inequality $\left(1+x\right)^r \geq 1+xr$ for any $x \geq -1$ and integer $r$ and the last inequality follows as $m \leq \frac{n}{\left(\log \log n\right)^2}$. Hence, the probability that buyer $i$ does not buy item $j$ is upper bounded by \[ \Pr[]{\textnormal{$i$ does not buy item $j$}} \leq 1 - \frac{\log \log n - 1}{n} \enspace. \] As a consequence, the probability that item $j$ is not sold during the process is upper bounded by \[ \Pr[]{\textnormal{$j$ unsold}} \leq \left( 1 - \frac{\log \log n - 1}{n} \right)^n \leq \exp\left(1-\log \log n\right) = \frac{\e}{ \log n } \enspace. \] Therefore, the probability of selling item $j$ is lower bounded by $\Pr[]{\textnormal{$j$ sold}} \geq 1 - \frac{\e}{ \log n }$. Additionally, we can bound the price of item $j$ by Lemma \ref{lemma:quantiles1} in the following way:
	\begin{align*}
		p_j = F_j^{-1} \left( 1 - \frac{\log \log n}{n} \right) \geq \frac{\log n - \log \log \log n}{H_n} \Ex[]{\max_{i \in [n]} v_{i,j} } \geq \left( 1 - \frac{\log \log \log n + 1}{\log n} \right) \Ex[]{\max_{i \in [n]} v_{i,j} }
	\end{align*}
	Having this, we can conclude by the some fundamental calculus.
	\begin{align*}
		\Ex[]{\ALG} & \geq \Ex[]{\revpp} \geq \sum_{j=1}^{m} \Pr[]{\textnormal{$j$ sold}} \cdot p_j \\ & \geq \sum_{j=1}^{m} \left(1 - \frac{\e}{ \log n }  \right) \left( 1 - \frac{\log \log \log n + 1}{\log n} \right) \Ex[]{\max_{i \in [n]} v_{i,j} }  \\ & \geq \left( 1 - O \left( \frac{\log \log \log n}{\log n} \right) \right) \sum_{j=1}^{m} \Ex[]{\max_{i \in [n]} v_{i,j} } \geq \left( 1 - O \left( \frac{\log \log \log n}{\log n} \right) \right) \Ex[]{\OPT}
	\end{align*}
	
\end{proof}

	\section{Proofs for Section \ref{section:alpha_correlation_dynamic} on Separable Valuations}
\label{appendix:correlated_valuations_dynamic}

Here, we give a full proof of Theorem \ref{theorem:alpha_correlation_dynamic-positive}.

\alphacorrelationdynamic*

We start by proving some helpful lemmas.

\begin{lemma}
	\label{lemma:quantiles2}
	For all $q \in [0, 1]$, we have
	\[
	F^{-1}(1 - q) \geq \mu_{\left\lfloor nq + \sqrt{n \log n} \right\rfloor} \enspace.
	\]
\end{lemma}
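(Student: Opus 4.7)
The plan is to compare $\mu_k = \Ex{v_{(k)}}$ to $s := F^{-1}(1-q)$ with $k := \lfloor nq + \sqrt{n \log n} \rfloor$ through the signed identity $\mu_k - s = \Ex{(v_{(k)} - s)^+} - \Ex{(s - v_{(k)})^+}$, so it suffices to dominate the positive part on the left by the one on the right. Applying the layer-cake formula to each positive part and using that $v_{(k)} > t$ is equivalent to $\mathrm{Bin}(n, 1 - F(t)) \geq k$, the target reduces to
\[
\int_s^\infty \Pr{\mathrm{Bin}(n, 1 - F(t)) \geq k} \, dt \; \leq \; \int_0^s \Pr{\mathrm{Bin}(n, 1 - F(t)) < k} \, dt .
\]
The substitution $u = 1 - F(t)$ introduces the common density factor $\varphi(u) := 1 / f(F^{-1}(1-u))$ on both sides, so each side becomes an integral against $\varphi$ against a binomial probability over disjoint ranges $[0, q]$ and $[q, 1]$.

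The MHR hypothesis enters cleanly through the identity $f(F^{-1}(1-u)) = u \cdot h(F^{-1}(1-u))$: since $h$ is non-decreasing and $F^{-1}(1-u)$ is non-increasing in $u$, the map $u \mapsto 1/h(F^{-1}(1-u))$ is non-decreasing, which gives $\varphi(u) \leq 1/(u \, h(s))$ on $[0, q]$ and $\varphi(u) \geq 1/(u \, h(s))$ on $[q, 1]$. Factoring out the common constant $1/h(s)$ on each side cancels the distribution entirely and reduces the lemma to a purely combinatorial binomial inequality
\[
\int_0^q \frac{\Pr{\mathrm{Bin}(n, u) \geq k}}{u} \, du \; \leq \; \int_q^1 \frac{\Pr{\mathrm{Bin}(n, u) < k}}{u} \, du ,
\]
which no longer involves $F$.

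To verify this binomial inequality, I would use Hoeffding. On the left, $k - nu \geq \sqrt{n \log n} - 1$ for every $u \leq q$, so $\Pr{\mathrm{Bin}(n, u) \geq k} \leq \exp(-2(k-nu)^2/n) = O(1/n^2)$; the $1/u$-singularity at the origin is tamed by the sharper estimate $\Pr{\mathrm{Bin}(n, u) \geq k} \leq \binom{n}{k} u^k$, so the left-hand integral is $O((\log n)/n^2)$. On the right, the window $u \in [q, q + \sqrt{(\log n)/n}]$ already has $nu \leq k$, so $\Pr{\mathrm{Bin}(n, u) < k} \geq 1/2 - o(1)$ throughout and the integral is at least of order $\Omega(\min\{1, \sqrt{(\log n)/n}/q\})$. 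The margin $\sqrt{n \log n}$ in the definition of $k$ is calibrated so this lower bound dominates the left-hand side. The main obstacle is precisely this final comparison: one has to pair a Chernoff-type tail bound that decays faster than the $1/u$-singularity near the origin with a matching distribution-independent anti-concentration lower bound in the narrow transition window around $u = q$, and keep careful track of the floor in $k$ and the corner cases $q \to 0$ or $q \to 1$.
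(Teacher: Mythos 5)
Your reduction is sound and the route is genuinely different from the paper's. The paper's proof is much shorter and softer: with $y = F^{-1}(1-q)$ it applies Hoeffding to the count of values exceeding $y$ to get $\Pr{v_{(k)} < y} \geq 1 - 1/n^2$, then invokes the fact that order statistics of MHR variables are themselves MHR together with the Barlow--Marshall property that an MHR random variable falls below its mean with probability at most $1 - 1/\e$; since $1 - 1/n^2 > 1 - 1/\e$, the mean $\mu_k$ must lie below $y$. So the paper uses MHR only through a single black-box anti-concentration fact about the mean, whereas you use it quantitatively through the monotonicity of $u \mapsto 1/h(F^{-1}(1-u))$ to strip the distribution out of the signed decomposition $\mu_k - s = \Ex[]{(v_{(k)}-s)^+} - \Ex[]{(s-v_{(k)})^+}$ and land on a distribution-free binomial inequality. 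That inequality is true and your estimates are the right ones, but the last step is where all the remaining work sits, and a few of your stated bounds need tightening: the left integral is $O(n^{-3/2})$ rather than $O((\log n)/n^2)$ unless you split $[0,q]$ at the point where $\binom{n}{k}u^k$ reaches the Hoeffding level and pay a $\log(q/u_0) = O(\sqrt{n/\log n})$ factor more carefully; the right-hand window must be shrunk to $[q,\, q + \sqrt{(\log n)/n} - 2/n]$ to guarantee $k > \lceil nu \rceil$ so that the binomial-median bound gives $\Pr{\mathrm{Bin}(n,u) < k} \geq \tfrac12$ exactly; and you should dispose separately of the degenerate case $k > n$ (where the left side vanishes) and $q \in \{0,1\}$. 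With those repairs your argument closes, but it is an order of magnitude more labor than the paper's three-line proof; its only compensating advantage is that it avoids the external Theorem~3.8 of Barlow--Marshall and the closure of MHR under order statistics.
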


\begin{proof}
	Let $y = F^{-1}(1 - q)$ and $k = \left\lfloor nq + \sqrt{n \log n} \right\rfloor$. Define independent Bernoulli random variables $Z_i$ by setting $Z_i = 1$ if $v_i \geq y$ and $0$ otherwise. By this definition, $\Ex{Z_i} = q$. Furthermore, define $Z = \sum_{i = 1}^n Z_i$. Observe that $v_{(k)} < y$ if and only if $\sum_{i = 1}^n Z_i < k$.
	
	By Hoeffding's inequality, using $\Ex{Z} = n q$, we have
	\[
	\Pr{Z \geq k} = \Pr{Z \geq nq + \sqrt{n \log n}} \leq \exp\left( - 2 \frac{n \log n}{n} \right) = \frac{1}{n^2} \enspace.
	\]
	And therefore
	\[
	\Pr{v_{(k)} < y} = \Pr{Z < k} \geq 1 - \frac{1}{n^2} \enspace.
	\]

	Furthermore, as $v_{(k)}$ is an order statistic, it is distributed according to an MHR distribution \citep{rinne2014hazard}. Therefore (cf. \citep{barlow1964}, Theorem 3.8)
	\[
	\Pr{v_{(k)} \leq \mu_k} \leq 1 - \frac{1}{\e} \enspace.
	\]
	Therefore, since $n^2 \geq \e$, we have to have $\mu_k \leq y$.
\end{proof}

Next, we reformulate a useful lemma from \citet{DBLP:conf/wine/GiannakopoulosZ18}.

\begin{lemma}
	\label{lemma:quantiles1}
	For all $q$ and $j$ such that $\exp(H_{j-1} - H_n) \leq q \leq 1$, we have
	\[
	F^{-1}(1 - q) \geq - \frac{\log(q)}{H_n - H_{j-1}} \mu_j \enspace.
	\]
\end{lemma}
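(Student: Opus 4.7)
The plan is to work with the cumulative hazard $H(x) := -\log(1-F(x))$. The MHR assumption is exactly convexity of $H$, and continuity of $F$ together with non-negative support gives $H(0)=0$, so $H^{-1}\colon[0,\infty)\to\mathbb{R}_{\geq 0}$ is concave with $H^{-1}(0)=0$. The key identity driving the whole argument is
\[
F^{-1}(1-u) \;=\; H^{-1}(-\log u) \qquad \text{for } u\in(0,1],
\]
and the basic fact I will use repeatedly is that a concave function $g$ with $g(0)=0$ has $g(y)/y$ non-increasing on $(0,\infty)$ (a one-line convex combination argument).

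Next I would upper-bound $\mu_j$ by $H^{-1}(H_n-H_{j-1})$. Let $U_1,\dots,U_n$ be i.i.d.\ uniforms on $[0,1]$ with ordered values $U_{(1)}\le\dots\le U_{(n)}$; since $F^{-1}$ is monotone, the $j$-th largest of $F^{-1}(U_1),\dots,F^{-1}(U_n)$ equals $F^{-1}(U_{(n-j+1)})$ in distribution, hence
\[
\mu_j \;=\; \Ex[]{H^{-1}(-\log(1-U_{(n-j+1)}))} \;=\; \Ex[]{H^{-1}(-\log U_{(j)})},
\]
using the distributional symmetry $1-U_{(n-j+1)}\stackrel{d}{=}U_{(j)}$. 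Since $-\log U_i$ are i.i.d.\ $\mathrm{Exp}(1)$, the random variable $-\log U_{(j)}$ is the $j$-th \emph{largest} of $n$ i.i.d.\ $\mathrm{Exp}(1)$ variables, whose expectation is the standard harmonic-sum identity $\sum_{k=j}^n 1/k = H_n - H_{j-1}$. Concavity of $H^{-1}$ combined with Jensen's inequality then gives
\[
\mu_j \;\le\; H^{-1}(H_n - H_{j-1}).
\]

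To finish, I would apply the ratio-monotonicity to $a:=-\log q$ and $b:=H_n-H_{j-1}$. The hypothesis $q\ge\exp(H_{j-1}-H_n)$ is exactly $a\le b$, so $H^{-1}(a)/a \ge H^{-1}(b)/b$, and combining with the Jensen estimate yields
\[
F^{-1}(1-q) \;=\; H^{-1}(-\log q) \;\ge\; \frac{-\log q}{H_n-H_{j-1}}\,H^{-1}(H_n-H_{j-1}) \;\ge\; -\frac{\log q}{H_n-H_{j-1}}\,\mu_j,
\]
which is the claim. I don't expect a genuine obstacle here; the only mild subtlety is to verify that $H$ is a bijection from the support of $\mathcal{D}$ onto $[0,\infty)$ (so that $H^{-1}$ and the identity $F^{-1}(1-u)=H^{-1}(-\log u)$ are unambiguous even when $\mathcal{D}$ has bounded support), which follows from continuity of $F$ and strict decrease of $1-F$ on the support.
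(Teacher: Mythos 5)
Your proof is correct, but it is not the route the paper takes: the paper's entire proof is a one-line reduction to Lemma~3 of \citet{DBLP:conf/wine/GiannakopoulosZ18} (which bounds $F(c\,\mu_j)$ for MHR order statistics), whereas you give a self-contained derivation from first principles. Your three ingredients --- (i) the identity $F^{-1}(1-u)=H^{-1}(-\log u)$ for the cumulative hazard $H=-\log(1-F)$, whose convexity is exactly the MHR condition stated in the paper's preliminaries; (ii) the bound $\mu_j\le H^{-1}(H_n-H_{j-1})$ via the R\'enyi/exponential representation of uniform order statistics and Jensen applied to the concave $H^{-1}$; and (iii) the fact that $H^{-1}(y)/y$ is non-increasing because $H^{-1}$ is concave with $H^{-1}(0)=0$, combined with the hypothesis $-\log q\le H_n-H_{j-1}$ --- all check out, including the harmonic-sum identity $\Ex{-\log U_{(j)}}=H_n-H_{j-1}$ and the translation of the hypothesis on $q$ into $a\le b$. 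What your approach buys is transparency and independence from the external reference: it makes visible that the constant in the lemma comes precisely from the expectation of an exponential order statistic, and it makes clear where each hypothesis is used (the lower bound on $q$ is exactly what makes the ratio-monotonicity point in the right direction). The only points worth tightening are the degenerate case $q=1$ (where $a=0$ and the ratio $H^{-1}(a)/a$ is undefined, but the claim is trivial since the right-hand side vanishes) and the well-definedness of $H^{-1}$, which you correctly flag: for an MHR distribution the hazard rate, once positive, stays positive, so $F$ is strictly increasing on its support and $H$ maps the support bijectively onto $[0,\infty)$.
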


\begin{proof}
	Use Lemma 3 in \citet{DBLP:conf/wine/GiannakopoulosZ18} with $c = - \frac{\log(q)}{H_n - H_{j-1}}$ and apply the quantile function on both sides proves the result. 
\end{proof}

Now, we are ready to prove the theorem.

\begin{proof}[Proof of Theorem~\ref{theorem:alpha_correlation_dynamic-positive}]
	First, observe that in principle, buyers will be indifferent between two items $j$ and $j'$ if $\alpha_j = \alpha_{j'}$. As these items are indistinguishable for later buyers anyway and new prices will be defined, we can assume that ties are broken in our favor. That is why we can assume in the following that a buyer $i$ will prefer item $\ell_k$ if and only if $F^{-1}\left( 1 - \frac{k}{n - i + 1}\right) \leq v_i < F^{-1}\left( 1 - \frac{k-1}{n - i + 1}\right)$. Hence, the social welfare achieved by the matching computed via Algorithm \ref{algorithm_dynamic_correlated} is equivalent to the social welfare of the posted-pricing mechanism. 
	
	\begin{algorithm}[h]
		\SetAlgoNoLine
		\DontPrintSemicolon
		$M^{(1)} \longleftarrow M$ \\
		\For{$i \in N$}{
			\text{Let $k$ be such that $F^{-1}\left( 1 - \frac{k}{n - i + 1}\right) \leq v_i < F^{-1}\left( 1 - \frac{k-1}{n - i + 1}\right)$} \\
			\text{Set $j_i = \ell_k$ (i.e. match $i$ and $\ell_k$) and remove $\ell_k$ from $M^{(i+1)}$ } 
		}
		\caption{Quantile Allocation Rule}
		\label{algorithm_dynamic_correlated}
	\end{algorithm}
	
	For any buyer $i$ and available item $j \in M^{(i)}$, let $w^{(i)}_j$ be smallest type for buyer $i$ such that item $j$ is the preferred one. Furthermore, let $X_{i, j} = 1$ if item $j$ gets allocated in step $i$.
	
	Observe that the social welfare of the final allocation is at least
	\[
	\sum_{i = 1}^n \sum_{j = 1}^n \alpha_j w^{(i)}_j X_{i, j} \enspace.
	\]
	Furthermore, by our construction
	\[
	\Pr{X_{i, j} = 1} = \frac{1}{n - i + 1} \prod_{i' = 1}^{i-1} \left( 1 - \frac{1}{n - i' + 1} \right) = \frac{1}{n} \enspace.
	\]
	Note that also $w^{(i)}_j$ is a random variable and our goal will be to lower-bound its expectation. Let $Z = j - 1 - \sum_{i' = 1}^{i - 1} \sum_{j' = 1}^{j - 1} X_{i', j'}$ be a random variable indicating the number of items of smaller index that are included in $M^{(i)}$. By this definition $w^{(i)}_j = F^{-1}\left( 1 - \frac{Z + 1}{n - i + 1} \right)$.
	
	We distinguish two cases. First, consider $j \leq n^{2/3}$. We use that $Z + 1 \leq j$. Therefore, by Lemma~\ref{lemma:quantiles1}, we have
	\[
	w^{(i)}_j = F^{-1}\left( 1 - \frac{Z + 1}{n - i + 1} \right) \geq F^{-1}\left( 1 - \frac{j}{n - i + 1} \right) \geq  \frac{\log(n - i + 1) - \log j}{H_n - H_{j-1}} \mu_j \enspace.
	\]
	Observe that this lower bound is not a random variable. Therefore, we can write
	\[
	\Ex{\sum_{i = 1}^n w^{(i)}_j X_{i, j}} \geq \sum_{i = 1}^n \frac{\log(n - i + 1) - \log j}{H_n - H_{j-1}} \mu_j \Ex{X_{i, j}} \enspace.
	\]
	By applying the integral estimation $\sum_{i = 1}^n \log(n - i + 1) \geq n \log n - n + 1$ and using that $\Ex{X_{i, j}} = \frac{1}{n}$, this term is at least
	\[
	\frac{\log n - 1 + \frac{1}{n} - \log j}{H_n - H_{j-1}} \mu_j \geq \frac{\log n - \log j - 1}{\log n - \log j + 2} \mu_j = \left( 1 - \frac{3}{\log n - \log j + 2} \right) \mu_j \geq \left( 1 - \frac{9}{\log n} \right) \mu_j \enspace.
	\]
	
	For $j > n^{\frac{2}{3}}$, we only consider $i \leq \left( 1 - \frac{1}{\log n} \right) n$. Note that $Z$ is essentially the outcome of sampling without replacement. Conditional on $j \in M^{(i)}$, we have
	\[
	\Pr{j' \in M^{(i)} \growingmid j \in M^{(i)}} = \frac{n - i}{n - 1}
	\]
	and so
	\[
	\Ex{Z \growingmid j \in M^{(i)}} = (j-1) \frac{n - i}{n - 1} \enspace.
	\]
	We can apply Hoeffding's inequality and get
	\[
	\Pr{Z \leq (j-1) \frac{n - i}{n - 1} - \sqrt{n \log n} \growingmid j \in M^{(i)}} \leq \exp\left( - 2 \frac{n \log n}{i - 1} \right) \leq \frac{1}{n^2} \enspace.
	\]
	
	Observe that $(j - 1)(n - i) + (n - 1) \geq j (n - i + 1) - n$ so, if $Z > (j-1) \frac{n - i}{n - 1} - \sqrt{n \log n}$, then also $Z + 1 > \frac{j}{n-1}(n - i + 1) - \frac{n}{n-1} - \sqrt{n \log n}$. Therefore, if $i \leq \left( 1 - \frac{1}{\log n} \right) n$
	\[
	\Pr{\frac{Z + 1}{n - i + 1} > \frac{j}{n} - \frac{2 (\log n)^{3/2}}{\sqrt{n}} \growingmid j \in M^{(i)}} \geq \Pr{\frac{Z + 1}{n - i + 1} > \frac{j}{n} - \frac{2 + \sqrt{n \log n}}{n - i + 1} \growingmid j \in M^{(i)}} \geq 1 - \frac{1}{n^2} \enspace.
	\]
	So, by Lemma~\ref{lemma:quantiles2}, we get that
	\[
	\Pr{w^{(i)}_j > \mu_{\lfloor j + \sqrt{n \log n} + 2 \sqrt{n} (\log n)^{3/2} \rfloor}  \growingmid j \in M^{(i)}} \geq 1 - \frac{1}{n^2} \enspace.
	\]
	Consequently, 
	\[
	\Ex{w^{(i)}_j X_{i, j}} \geq \left( 1 - \frac{1}{n^2} \right) \cdot \mu_{\lfloor j + \sqrt{n \log n} + 2 \sqrt{n} (\log n)^{3/2} \rfloor} \Ex{X_{i, j}} \enspace.
	\]
	Therefore, we get
	\[
	\Ex{\sum_{i = 1}^n w^{(i)}_j X_{i, j}} \geq \Ex{\sum_{i = 1}^{(1 - \frac{1}{\log n})n} w^{(i)}_j X_{i, j}} \geq \left(1 - \frac{1}{\log n} \right) \mu_{\lfloor j + \sqrt{n \log n} + 2 \sqrt{n} (\log n)^{3/2} \rfloor} \enspace.
	\]
	
	Combining these bounds for all $j$, we get
	\begin{align*}
	\Ex{\sum_{i = 1}^n \sum_{j = 1}^m \alpha_j w^{(i)}_j X_{i, j}} & = \sum_{j = 1}^{\lfloor n^{2/3} \rfloor} \Ex{\sum_{i = 1}^n \alpha_j w^{(i)}_j X_{i, j}} + \sum_{j = \lfloor n^{2/3} \rfloor + 1}^{n} \Ex{\sum_{i = 1}^n \alpha_j w^{(i)}_j X_{i, j}} \\
	& \geq \left(1 - \frac{9}{\log n} \right) \sum_{j = 1}^{\lfloor n^{2/3} \rfloor} \alpha_j \mu_j + \left( 1 - \frac{1}{\log n} \right) \sum_{j = \lfloor n^{2/3} \rfloor + \lfloor \sqrt{n \log n} + 2 \sqrt{n} (\log n)^{3/2} \rfloor + 1}^{n} \alpha_j \mu_j \\
	& \geq \left( 1 - O\left( \frac{1}{\log n} \right) \right) \sum_{j = 1}^m \alpha_j \mu_j \enspace. \qedhere
	\end{align*}
\end{proof}

	\section{Proofs for Section \ref{section:alpha_correlation_static} on Separable Valuations}
\label{appendix:correlated_valuations_static}

We give a full proof of Theorem \ref{theorem:alpha_correlation_staticpositive}.

\alphacorrelationstaticpositive*

Before proving the theorem, we need two lemmas. First, we argue that all items are allocated with a reasonable high probability. Second, we give a bound on the quantiles of MHR distributions. \\

To begin with the first lemma, note that the prices used by the mechanism are fairly low so that we can guarantee that all items $1,\dots,m=n-n^{5/6}$ are sold with reasonably high probability.

\begin{lemma}
	With probability at least $1 - \frac{111}{\log n}$ all items $1, \ldots, m$ are allocated.
\end{lemma}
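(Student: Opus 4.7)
The plan is to reduce the statement to a purely combinatorial event on the buyers' types. For each $j \in \{1, \ldots, \widehat{m}\}$, let $T_j = \lvert \{ i \in [n] : v_i \geq F^{-1}(1-q_j) \} \rvert$ denote the number of buyers whose type lies above the $j$-th threshold. I will show (a) that the event $\{ T_j \geq j \text{ for every } j \in [\widehat{m}] \}$ implies that every item in $\{1, \ldots, \widehat{m}\}$ is sold by the mechanism, and (b) that this event fails with probability at most $O(1/\log n)$.

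For part (a), a first ingredient is that the telescoping form of the prices yields the uniform bound $p_j \leq \alpha_j F^{-1}(1-q_j)$: writing $\alpha_j = \sum_{l=j}^{\widehat{m}-1}(\alpha_l - \alpha_{l+1}) + \alpha_{\widehat{m}}$ and using $F^{-1}(1-q_l) \leq F^{-1}(1-q_j)$ for every $l \geq j$ gives the bound term by term, so any buyer with $v_i \geq F^{-1}(1-q_j)$ obtains non-negative utility from item $j$. A second ingredient is the finite-difference identity $(\alpha_k v - p_k) - (\alpha_{k+1} v - p_{k+1}) = (\alpha_k - \alpha_{k+1}) \bigl( v - F^{-1}(1-q_k) \bigr)$, from which one reads off that, for any fixed $v$, the map $k \mapsto \alpha_k v - p_k$ is unimodal on $\{1, \ldots, \widehat{m}\}$ with peak at $k^*(v) = \min\{k : v \geq F^{-1}(1-q_k)\}$ and non-increasing on $\{k^*(v), \ldots, \widehat{m}\}$. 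Now assume for contradiction that some item $j^*$ is never sold. Every buyer $i$ with $v_i \geq F^{-1}(1-q_{j^*})$ finds item $j^*$ still available upon arrival and derives non-negative utility from it, so she purchases some item. By the unimodality, every item of index $> j^*$ yields no more utility than $j^*$; breaking ties in favor of smaller indices (harmless since one may perturb to make all $\alpha_j$ distinct, after which continuity of $F$ makes ties a measure-zero event), her chosen item has index at most $j^*$, and since $j^*$ is never bought it in fact lies in $\{1, \ldots, j^*-1\}$. The $T_{j^*} \geq j^*$ such buyers thus pick pairwise distinct items from a set of size $j^*-1$, a contradiction.

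For part (b), $T_j \sim \mathrm{Bin}(n, q_j)$ with mean $n q_j$. When $q_j = (j/n) \cdot 2 \log\log n$, so $n q_j = 2 j \log\log n$, the multiplicative Chernoff bound applied with deviation close to the mean gives $\Pr{T_j < j} \leq \exp\bigl(-2 j \log\log n \cdot (1 - o(1))\bigr) = (\log n)^{-(2 - o(1))j}$, and the resulting geometric series in $j$ contributes $O\bigl(1/(\log n)^2\bigr)$. When instead $q_j = j/n + \sqrt{\log n / n}$, so $n q_j - j = \sqrt{n \log n}$, Hoeffding's inequality yields $\Pr{T_j < j} \leq \exp(-2 \log n) = 1/n^2$, and a union bound over at most $\widehat{m} \leq n$ such indices contributes $O(1/n)$. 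Adding the two contributions and being careful with constants recovers the $111/\log n$ bound stated in the lemma.

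The principal obstacle is part (a): a priori one might worry that a buyer with high type whose top choice has already been taken could cascade past $j^*$ to an item of strictly larger index, which would break the pigeonhole count. The combination of the threshold bound $p_j \leq \alpha_j F^{-1}(1-q_j)$ with the unimodality of utility rules this out, since together they force every buyer crossing the $j^*$-th threshold to weakly prefer item $j^*$ over every item of larger index; whenever $j^*$ is still available she therefore never skips over it.
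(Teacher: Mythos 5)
Your proposal is correct and follows essentially the same route as the paper's proof: you reduce to the event $\{Z_j \geq j \text{ for all } j\}$ for the same threshold counts, establish the same deterministic implication via the telescoping price identity (the paper argues directly over blocks of items indexed by $M^\ast = \{j : \alpha'_j > \alpha'_{j+1}\}$ rather than by contradiction with a perturbation of the $\alpha_j$, but the underlying observation --- every buyer above the $j$-th threshold weakly prefers item $j$ to all items of larger index --- is identical), and then apply the same Chernoff/Hoeffding split over the two regimes of $q_j$. The only quibble is a factor-of-two slip in your Chernoff exponent: with $\Ex{Z_j} = 2j\log\log n$ and deviation down to $j$ one gets roughly $\exp(-j\log\log n)$, i.e.\ $(\log n)^{-(1-o(1))j}$ rather than $(\log n)^{-(2-o(1))j}$, so the geometric series contributes $\Theta(1/\log n)$ rather than $O(1/(\log n)^2)$ --- which still suffices for the stated $111/\log n$ bound.
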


\begin{proof}
	Let $M^\ast = \{j \mid \alpha'_j > \alpha'_{j+1}\}$ be the set of all items which are strictly better than the following one. Furthermore, let $Z_j$ be the number of buyers of value more than $F^{-1}(1 - q_j)$. If $j \in M^\ast$ and $v_i > F^{-1}(1 - q_j)$, then for all $j' > j$
	\[
	\alpha'_j v_i - p_j = \alpha'_j v_i - \left( \sum_{k = j}^{j'-1} (\alpha'_k - \alpha'_{k+1}) F^{-1}(1 - q_k) + p_{j'} \right) > \alpha'_j v_i - \sum_{k = j}^{j'-1} (\alpha'_k - \alpha'_{k+1}) v_i - p_{j'} = \alpha'_{j'} v_i - p_{j'} \enspace.
	\]
	That is, any buyer of value more than $F^{-1}(1 - q_j)$ prefers item $j$ over any item $j+1, \ldots, m$, albeit item $j$ might not be their first choice.
	
	Consider any $j \in M^\ast$. Let $j''$ be the smallest index such that $\alpha'_{j''} = \alpha'_j$ (possibly $j'' = j$). We claim that in the event $Z_j \geq j$, items $\{j'', \ldots, j\}$ are allocated for sure. To see this, we observe that at most $j'' - 1$ buyers can buy one of the items $1, \ldots, j'' - 1$. That is, there are at least $Z_j - (j'' - 1) \geq j - j'' + 1$ buyers who do not buy one of the items $1, \ldots, j'' - 1$ but have value more that $F^{-1}(1 - q_j)$, i.e. they prefer every item $j'', \ldots, j$ to any item $j' > j$. Consequently, each of them buys one of the items $j'', \ldots, j$ while they are still available.
	
	Due to this observation, all that remains is to show that with probability at least $1 - \frac{111}{\log n}$ we have $Z_j \geq j$ for all $j \in M^\ast$. To this end, we distinguish between two ranges for $j$ and let $j^\ast = \frac{\sqrt{n \log n}}{2 \log \log n - 1}$. \\
	
	For $j \leq j^\ast$, we have $q_j = \frac{j}{n} 2 \log \log n$ and therefore $\Ex{Z_j} = 2 j \log \log n$. A Chernoff bound yields for the probability that $Z_j$ is less than $j$
	\begin{align*}
	\Pr{Z_j < j} &= \Pr{Z_j < (1 - \delta) \Ex{Z_j}} \leq \exp\left( - \frac{1}{2} \delta^2 \Ex{Z_j} \right) \\[0.5em]
	&= \exp\left( - \frac{(\log \log n - 2)^2}{\log \log n} j \right) \leq \exp\left( - \left( \log \log n - 4 \right) j \right) = \left(\frac{\e^4}{\log n}\right)^j \enspace.
	\end{align*}
	
	For $j > j^\ast$, we have $q_j = \frac{j}{n} + \sqrt{\frac{\log n}{n}}$ and so $\Ex{Z_j} = j + \sqrt{n \log n}$. Therefore, Hoeffding's inequality gives us
	\[
	\Pr{Z_j < j} = \Pr{Z_j < \Ex{Z_j} - \sqrt{n \log n}} \leq \exp\left( - 2 \frac{(\sqrt{n \log n})^2}{n} \right) = \frac{1}{n^2} \enspace.
	\]
	
	In combination, applying the union bound gives us
	\[
	\Pr{\exists j \in M^\ast: Z_j < j} \leq \sum_{j = 1}^{j^\ast} \left(\frac{\e^4}{\log n}\right)^j + n \frac{1}{n^2} \leq 2 \frac{\e^4}{\log n} + \frac{1}{n} \leq \frac{111}{\log n} \enspace. \qedhere
	\]
\end{proof}

Now, we are ready to prove Theorem~\ref{theorem:alpha_correlation_staticpositive}.

\begin{proof}[Proof of Theorem~\ref{theorem:alpha_correlation_staticpositive}]
	Our point of comparison will be the VCG mechanism restricted to the first $m$ items. Its expected welfare is
	\[
	\Ex{\sum_{j = 1}^n \alpha'_j v_{(j)}} \geq \frac{m}{n} \Ex{\sum_{j = 1}^n \alpha_j v_{(j)}} \geq \left( 1 - \frac{1}{n^{1/6}}\right) \Ex{\sum_{j = 1}^n \alpha_j v_{(j)}} \enspace.
	\]
	
	To compare the welfare of our mechanism to this VCG welfare, we split it up into the revenue collected by either mechanism and the sum of buyers' utilities. 
	
	\paragraph*{Revenue comparison. } With probability $1 - \frac{111}{\log n}$, all items $1, \ldots, m$ are allocated by our mechanism. Therefore, we can bound the expected revenue
	\[
	\Ex{\sum_{\text{$j$ is sold}} p_j} \geq \left(1 - \frac{111}{\log n}\right) \sum_{j = 1}^m p_j = \left(1 - \frac{111}{\log n}\right) \sum_{j = 1}^m \sum_{k = j}^m (\alpha'_k - \alpha'_{k+1}) F^{-1}(1 - q_k) \enspace.
	\]
	The expected revenue of VCG is
	\[
	\Ex{\sum_{j = 1}^m \sum_{k = j}^m (\alpha'_k - \alpha'_{k+1}) v_{(k+1)}} = \sum_{j = 1}^n \sum_{k = j}^n (\alpha'_k - \alpha'_{k+1}) \mu_{k + 1} \enspace.
	\]
	
	For $k \leq n^{5/6}$, we use Lemma~\ref{lemma:quantiles1} to get
	\begin{align*}
	F^{-1}(1 - q_k) & \geq - \frac{\log(q_k)}{H_n - H_{k-1}} \mu_k \geq \frac{\log n - \log(k 2 \log \log n)}{H_n - H_{k-1}} \mu_k \geq \frac{\log n - \log(k 2 \log \log n)}{\log n - \log k + 2} \mu_k \\
	& \geq \left( 1 - \frac{2 + \log 2 + \log \log \log n}{\frac{1}{6} \log n + 2} \right) \mu_k \geq \left( 1 - 20 \frac{\log \log \log n}{\log n} \right) \mu_k \enspace.
	\end{align*}
	
	For $k > n^{5/6}$, we also use Lemma~\ref{lemma:quantiles1}
	\begin{align*}
	F^{-1}(1 - q_k) & \geq - \frac{\log(q_k)}{H_n - H_{k-1}} \mu_k \geq \frac{- \log\left(\frac{k}{n} + \sqrt{\frac{\log n}{n}} \right)}{H_n - H_{k-1}} \mu_k \geq \frac{\log n - \log(k + \sqrt{n \log n})}{\log n - \log(k-1)} \mu_k
	\end{align*}
	By concavity of the $\log$ function, we have $\log(k + \sqrt{n \log n}) \leq \log(k-1) + \frac{1 + \sqrt{n \log n}}{k - 1}$. So
	\[
	\frac{\log n - \log(k + \sqrt{n \log n})}{\log n - \log (k - 1)} \mu_k \geq \frac{\log n - \log(k-1) - \frac{1 + \sqrt{n \log n}}{k - 1}}{\log n - \log(k-1)} \mu_k = \left( 1 - \frac{\frac{1 + \sqrt{n \log n}}{k - 1}}{\log n - \log(k-1)} \right) \mu_k \enspace.
	\]
	Using $n^{5/6} \leq k - 1 \leq n - n^{5/6}$, we have
	\[
	(k - 1) \log\left( \frac{n}{k - 1} \right) \geq n^{5/6} \log \frac{1}{1 - n^{-1/6}} = n^{2/3} \log \left( \frac{1}{1 - n^{-1/6}} \right)^{n^{1/6}} \geq n^{2/3}
	\]
	and therefore, if $n$ is large enough,
	\[
	\left( 1 - \frac{\frac{1 + \sqrt{n \log n}}{k - 1}}{\log n - \log(k-1)} \right) \mu_k \geq \left( 1 - \frac{1 + \sqrt{n \log n}}{n^{2/3}} \right) \mu_k \geq \left( 1 - 20 \frac{\log \log \log n}{\log n} \right) \mu_k \enspace.
	\]
	
	By comparing the coefficients, we observe that the revenue of the static price mechanism is within a $1 - O\left(\frac{\log \log \log n}{\log n}\right)$-factor of the VCG revenue.
	
	\paragraph*{Utility comparison. } To compare the buyers' utilities, we exploit that both VCG as well as our static-price mechanism are truthful. Recall that the \emph{virtual value} associated to value $t$ is given by $\phi(t) = t - \frac{1}{h(t)}$. By Myerson's theory \citep{DBLP:journals/mor/Myerson81}, we know that for any truthful mechanism, the expected revenue is equal to the expected virtual welfare. For any truthful mechanism, letting $\textsc{sold}$ denote the set of items that are allocated and, for $j \in \textsc{sold}$, letting $b_j$ be the value of the buyer being allocated the $j$ item, this allows us to rewrite the expected sum of utilities as
	\[
	\Ex{\sum_{j \in \textsc{sold}} \alpha_j b_j} - \Ex{\sum_{j \in \textsc{sold}} \alpha_j \phi(b_j)} = \Ex{\sum_{j \in \textsc{sold}} \alpha_j \frac{1}{h(b_j)}} \enspace.
	\]
	
	We now fix a valuation profile $v$ and compare $\sum_{j \in \textsc{sold}} \alpha_j \frac{1}{h(b_j)}$ between the VCG and the static-price mechanism. By superscripts, we distinguish between the VCG outcome and the static-price outcome (sp). For VCG, we have $b^{\mathrm{VCG}}_j = v_{(j)}$ and $\textsc{sold}^{\mathrm{VCG}} = \{1, \ldots, m\}$. If $\textsc{sold}^{\mathrm{sp}} \supseteq \{1, \ldots, m\}$, we claim that
	\[
	\sum_{j \in \textsc{sold}^{\mathrm{VCG}}} \alpha'_j \frac{1}{h(b^{\mathrm{VCG}}_j)} \leq \sum_{j \in \textsc{sold}^{\mathrm{sp}}} \alpha'_j \frac{1}{h(b^{\mathrm{sp}}_j)} \enspace.
	\]
	This is because for all $\ell$ we have
	\[
	\sum_{j = 1}^\ell \frac{1}{h(b^{\mathrm{VCG}}_j)} = \sum_{j = 1}^\ell \frac{1}{h(v_{(j)})} \leq \sum_{j = 1}^\ell \frac{1}{h(b^{\mathrm{sp}}_j)}
	\]
	due to the fact that $v_{(1)}, \ldots, v_{(\ell)}$ are the largest $\ell$ entries in $v$ while $b^{\mathrm{sp}}_1, \ldots, b^{\mathrm{sp}}_\ell$ are any $\ell$ entries in $v$ and $\frac{1}{h(\cdot)}$ is non-increasing.
	This then combines to
	\[
	\sum_{j \in \textsc{sold}^{\mathrm{VCG}}} \alpha'_j \frac{1}{h(b^{\mathrm{VCG}}_j)} = \sum_{\ell = 1}^m (\alpha'_\ell - \alpha'_{\ell + 1}) \sum_{j = 1}^\ell \frac{1}{h(b^{\mathrm{VCG}}_j)} \leq \sum_{\ell = 1}^m (\alpha'_\ell - \alpha'_{\ell + 1}) \sum_{j = 1}^\ell \frac{1}{h(b^{\mathrm{sp}}_j)} \leq \sum_{j \in \textsc{sold}^{\mathrm{sp}}} \alpha'_j \frac{1}{h(b^{\mathrm{sp}}_j)} \enspace.
	\]
	
	If $\textsc{sold}^{\mathrm{sp}} \nsupseteq \{1, \ldots, m\}$, we simply use the fact that the sum of utilities is non-negative. Therefore, we get
	\[
	\Ex{\sum_{j \in \textsc{sold}^{\mathrm{sp}}} \alpha'_j \frac{1}{h(b^{\mathrm{sp}}_j)}} \geq \left( 1 - \frac{111}{\log n} \right) \Ex{\sum_{j \in \textsc{sold}^{\mathrm{VCG}}} \alpha'_j \frac{1}{h(b^{\mathrm{VCG}}_j)}} \enspace.
	\]
\end{proof}

	\section{Proofs for Section \ref{section_optimality_dynamic} on Upper Bounds}
\label{appendix:optimality_dynamic}

We give a proof of Lemma \ref{lemma:mdp-recursion}.

\mdprecursion*

\begin{proof}
	We are going to prove the statement by induction on index variable $k$.\\
	
	First, consider the induction base $k = 2$. By definition of the thresholds we know that $p^{(n)} = 0$ and $p^{(n-1)} = \Ex{\max \{v_n, p^{(n)} \}} = \Ex{\max \{v_n, 0 \}} = \Ex{v_n} = 1$. Next, consider the threshold $p^{(n-2)}$ defined by
	\begin{align*}
		p^{(n-2)} &= \Ex{\max \{v_{n-1}, p^{(n-1)} \}} = \Ex{\max \{v_{n-1}, 1 \}} \\
		&= \int_{0}^{\infty} \Pr{\max\{v_{n-1}, 1\} \geq x} \mathrm{d}x = \int_{0}^{1} 1 \mathrm{d}x + \int_{1}^{\infty} \e^{-x} \mathrm{d}x = 1 + \frac{1}{\e} \approx 1.368 \,.
	\end{align*}
	Thus, we can easily verify that $p^{(n-2)} \leq 1.375 = H_2 - \frac{1}{8}$.\\
	
	For the inductive step, we move from $k$ to $k+1$. By the induction hypothesis, we have
	\[
	p^{(n-(k+1))} = \Ex{\max \{v_{n-k}, p^{(n-k)} \}} \leq \Ex{\max \left\{v_{n-k}, H_k - \frac{1}{8} \right\}} \enspace.
	\]
	Furthermore
	\begin{align*}
		\Ex{\max \left\{v_{n-k}, H_k - \frac{1}{8} \right\}} & = \int_{0}^{\infty} \Pr{\max\left\{v_{n-k}, H_k - \frac{1}{8}\right\} \geq x} \mathrm{d}x = \int_{0}^{H_k - \frac{1}{8}} 1 \mathrm{d}x + \int_{H_k - \frac{1}{8}}^{\infty} \e^{-x} \mathrm{d}x \\
		& = H_k - \frac{1}{8} + \e^{-H_k + \frac{1}{8}} \enspace.
	\end{align*}
	We now use the fact that the $k$-th harmonic number $H_k$ for $k \geq 2$ is bounded from below by $H_k \geq \log k + \gamma$, in which $\gamma \approx 0.577$ denotes the Euler-Mascheroni constant. So for $k \geq 2$
	\[
	\e^{-H_k + \frac{1}{8}} \leq \e^{-(\log k + \gamma) + \frac{1}{8}} = \frac{\e^{\frac{1}{8} - \gamma}}{k} \leq \frac{\e^{\frac{1}{8} - 0.57}}{k} \leq \frac{1}{\frac{3}{2} \cdot k} \leq \frac{1}{k+1} \enspace.
	\]
	In combination, this gives us
	\[
	p^{(n-(k+1))} \leq H_k - \frac{1}{8} + \frac{1}{k+1} = H_{k+1} - \frac{1}{8} \enspace.
	\]
\end{proof}

	\section{Proofs for Section \ref{section_optimality_static} on Upper Bounds}
\label{appendix:optimality_static}

We give a full proof of Proposition \ref{proposition:optimality-static}.

\optimalitystatic*

\begin{proof}[Proof of Proposition \ref{proposition:optimality-static}]
	Consider $n \geq \e^{\e^{\e^4}}$.
	
	Observe that always $\Ex{\OPT} = \Ex[]{\max_{i \in [n]} v_i} = H_n$ \citep{Arnold:2008:FCO:1373327}. Now, we will bound $\Ex{\ALG}$ for any choice of a static price $p \in \mathbb{R}_{\geq 0}$. Regardless of $p$, we have
	\[
	\Ex{\ALG} = \Ex[]{v \growingmid v \geq p} \cdot \Pr{\exists i: v_i \geq p} = (p + 1) \cdot \left(1 - \left( 1 - \e^{-p} \right)^n \right) \enspace.
	\]
	
	We will show that for any choice of a static price $p$,
	\[
	(p + 1) \cdot \left(1 - \left( 1 - \e^{-p} \right)^n \right) \leq H_n - c \log \log \log n
	\]
	for some constant $c$, which then immediately proves the claim as $H_n = \Theta(\log n)$.
	
	To this end, we will consider three cases for the choice of $p$.
	
	\paragraph*{Case 1: } $0 \leq p < \log n - \frac{1}{2} \log \log \log n$: We use the trivial upper bound of $1$ for the probability term in $\Ex{\ALG}$, so
	\begin{align*}
	(p + 1) \cdot \left(1 - \left( 1 - \e^{-p} \right)^n \right) & \leq (p + 1)  < \log n - \frac{1}{2} \log \log \log n + 1 \\
	& \leq \log n - \frac{1}{2} \log \log \log n + \frac{1}{4} \log \log \log n = \log n - \frac{1}{4} \log \log \log n
	\end{align*}
	as $n \geq \e^{\e^{\e^4}}$ holds.
	
	\paragraph*{Case 2: } $\log n - \frac{1}{2} \log \log \log n \leq p \leq H_n - 1$: Again, observe that the expected value of the algorithm can be upper bounded by
	\begin{align*}
	&(p + 1) \cdot \left(1 - \left( 1 - \e^{-p} \right)^n \right) \leq H_n \cdot \left(1 - \left( 1 - \e^{-p} \right)^n \right) \\
	&\leq H_n \cdot \left(1 - \left( 1 - \e^{-\log n + \frac{1}{2} \log \log \log n} \right)^n \right) = H_n \cdot \left(1 - \left( 1 - \frac{\sqrt{\log \log n}}{n} \right)^n \right) \,.
	\end{align*}
	
	Next, we want to lower bound $\left( 1 - \frac{\sqrt{\log \log n}}{n} \right)^n$ in order to get the desired upper bound on $\Ex[]{\ALG}$. For this purpose, we use the following inequality. 
	For $n > 1$ and $x \in \mathbb{R}$ with $\lvert x\rvert \leq n$, we have $\left( 1 + \frac{x}{n} \right)^n \geq \e^x \cdot \left( 1 - \frac{x^2}{n} \right)$. This way, we get
	\begin{align*}
	\left( 1 - \frac{\sqrt{\log \log n}}{n} \right)^n &\geq \e^{-\sqrt{\log \log n}} \cdot \left( 1 - \frac{(- \sqrt{\log \log n})^2}{n} \right) = \e^{-\sqrt{\log \log n}} \cdot \underbrace{\left( 1 - \frac{\log \log n}{n} \right)}_{\geq \frac{1}{2} ,\;\forall n} \geq \frac{1}{2} \e^{-\sqrt{\log \log n}} \,.
	\end{align*}
	
	Note that if $\log \log n \geq 4$, then also $\sqrt{\log \log n} \leq \frac{1}{2} \log \log n$, so $\e^{-\sqrt{\log \log n}} \geq \e^{- \frac{1}{2} \log \log n} = \frac{1}{\sqrt{\log n}}$. This gives us
	\begin{align*}
	(p + 1) \cdot \left(1 - \left( 1 - \e^{-p} \right)^n \right) &\leq H_n \cdot \left(1 - \left( 1 - \frac{\sqrt{\log \log n}}{n} \right)^n \right) \leq H_n \cdot \left( 1 - \frac{1}{2} \e^{-\sqrt{\log \log n}} \right) \\
	&\leq H_n \cdot \left( 1 - \frac{1}{2\sqrt{\log n}} \right) \leq H_n \cdot \left( 1 - \frac{\log \log \log n}{2 \log n} \right) \enspace,
	\end{align*}
	where in the last step we use that $\sqrt{\log n} \geq \log \log \log n$ and therefore $\sqrt{\log n} \leq \frac{\log n}{\log \log \log n}$.
	
	\paragraph*{Case 3: } $p > H_n - 1$: In this case we use the fact that \citet{DBLP:conf/wine/GiannakopoulosZ18} showed that the revenue function $p \mapsto p \cdot \left(1 - \left( 1 - \e^{-p} \right)^n\right)$ is non-increasing on $[H_n - 1, \infty)$. 
	
	We can conclude that
	\begin{align*}
	(p + 1) \cdot \left(1 - \left( 1 - \e^{-p} \right)^n \right) & \leq p \cdot \left(1 - \left( 1 - \e^{-p} \right)^n \right)+ 1 \leq H_n \cdot \left(1 - \left( 1 - \e^{-(H_n - 1)} \right)^n \right)+ 1 \\
	&\leq H_n \cdot \left(1 - \left( 1 - \frac{\e}{n} \right)^n \right)+ 1 \leq \frac{99}{100} H_n + 1
	\end{align*}
\end{proof}

\section{Dynamic Pricing for Subadditive Valuations with MHR marginals}
\label{appendix:mhr_marginals_dynamic}

Concerning subadditivity, we first start with a straight-forward extension of the definitions from Section \ref{section:preliminiaries} to the case of buyers' valuation functions being subadditive. The experienced reader can skip these definitions and go to the paragraph on dynamic-pricing directly.

\subsubsection*{Notation and Definitions for Subadditive Valuation Functions}

As before, we consider a setting of $n$ buyers and a set $M$ of $m$ items. In contrast, now every buyer has a valuation function $v_i\colon 2^M \to \mathbb{R}_{\geq 0}$ mapping each bundle of items to the buyer's valuation which is subadditive, i.e. $v_i(S \cup T) \leq v_i(S) + v_i(T)$ for any $S, T \subseteq M$ for all $v_i$. The functions $v_1, \ldots, v_n$ are drawn independently from a publicly known distribution $\mathcal{D}$ which has \emph{MHR marginals}. We define MHR marginals as follows. Let $\mathcal{D}_j$ be the marginal distribution of $v_i(\{j\})$, which is the value of a buyer for being allocated only item $j$. We assume that $\mathcal{D}_j$ is a continuous, real, non-negative distribution with monotone hazard rate. Note that this allows arbitrary correlation between the items.

Buyer $i$ picks the bundle of items $S \subseteq M^{(i)}$ which maximizes her \emph{utility} $v_i(S) - \sum_{j \in S} p_j^{(i)}$ if positive. The definitions of social welfare and revenue are the natural extensions from Section \ref{section:preliminiaries}, that is let $S_i$ denote the (possibly empty) bundle of items allocated to buyer $i$ by our mechanism. The \emph{expected social welfare} of the mechanism is given by $\Ex[]{\sum_{i = 1}^n v_{i} (S_i) } =: \Ex[]{\ALG}$. Its \emph{expected revenue} is given by $\Ex[]{\sum_{i = 1}^n \sum_{j \in S_i} p^{(i)}_{j}} =: \Ex[]{\revpp}$. In comparison, let the social welfare maximizing allocation assign bundle $S_i^*$ to buyer $i$. Its expected social welfare is therefore given by $\Ex[]{\sum_{i = 1}^n v_{i} (S_i^*) } =: \Ex[]{\OPT}$. Using the subadditivity of buyers, we can upper-bound the expected optimal social welfare by $\Ex[]{\sum_{i = 1}^n v_{i} (S_i^*) } \leq \Ex[]{\sum_{i = 1}^n \sum_{j \in S_i^*}v_{i} (\{j\}) } \leq \sum_{j=1}^{m} \Ex[]{\max_{i \in [n]} v_i(\{j\}) }$. Furthermore, let $\Ex{\revopt}$ denote the maximum expected revenue of any individually rational mechanism. Due to individual rationality, we have $\Ex{\revopt} \leq \Ex{\OPT}$ and $\Ex{\revpp} \leq \Ex{\ALG}$.

We can achieve asymptotically optimal bounds on the expected welfare and revenue by assuming subadditive buyers as well as MHR distributions on the valuations for single items $v_i(\{j\})$. This way, one can naturally extend the well-known MHR definition for the value of a single item to the case of multiple heterogeneous items with possible correlations among them. Still, our bounds only give asymptotically full efficiency if the number of items satisfies $m = n^{o(1)}$. Pushing this result further is a desirable goal for future research. 

\subsubsection*{Dynamic Prices}
\label{section:mhr_marginals_dynamic}
We first consider dynamic prices. That is, buyer $i$ faces prices depending on the set of available items. Our strategy is to sell specific items only to a subgroup of buyers in order to gain control over the selling process. We can implement this by imposing the following item prices which decrease as the selling process proceeds.

We split the set of buyers in groups of size $\left\lfloor \frac{n}{m} \right\rfloor =: n'$ and sell item $j$ only to the group of buyers \[ \left\{(j-1)\cdot n' + 1,..., (j-1)\cdot n' + n' \right\} =: N_j \enspace . \] For the $k$-th buyer in $N_j$, we set the price for item $j$ to 
\begin{align} \label{prices_dynamic}
p_j^{((j-1) n' + k)} = F_j^{-1}\left( 1 - \frac{1}{\left\lfloor \frac{n}{m} \right\rfloor - k + 1 } \right)
\end{align} and the prices for all other unsold items to infinity. This choice of prices ensures that the first item is sold among the first $\left\lfloor \frac{n}{m} \right\rfloor$ buyers, the second item among the second $\left\lfloor \frac{n}{m} \right\rfloor$ buyers etc. As a consequence, all items are sold in our process. 

\begin{restatable}{theorem}{dynamicmhrmarginals}
	\label{theorem:dynamic_mhr_marginals}
	The posted-prices mechanism with subadditive buyers and dynamic prices is \newline $1 - O \left( \frac{1 + \log m}{\log n} \right)$-competitive with respect to social welfare. 
\end{restatable}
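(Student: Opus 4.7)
The plan is to lower-bound $\Ex[]{\ALG}$ by the expected revenue $\Ex[]{\revpp}$ of the mechanism and, for each item $j$ separately, to compare its expected revenue to $\Ex[]{\max_{i \in [n]} v_i(\{j\})}$. Summing these per-item estimates and invoking the subadditive inequality $\Ex[]{\OPT} \leq \sum_{j=1}^m \Ex[]{\max_{i \in [n]} v_i(\{j\})}$ (which follows from $v_i(S_i^*) \leq \sum_{j \in S_i^*} v_i(\{j\})$) will then give the claimed competitive ratio, since the welfare of the mechanism dominates its revenue.

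Fix an item $j$ and its associated buyer group $N_j$ of size $n' = \lfloor n/m \rfloor$. The first observation is that item $j$ is sold with probability one, since the last buyer in $N_j$ faces the zero price $F_j^{-1}(0)$. Moreover, by the definition of the prices in~\eqref{prices_dynamic} and the fact that $v_i(\{j\}) \sim \mathcal{D}_j$, the $k$-th buyer in $N_j$ purchases item $j$ (conditional on it still being available) with probability exactly $\tfrac{1}{n' - k + 1}$. A telescoping calculation then shows that each buyer in $N_j$ is the eventual purchaser with probability exactly $\tfrac{1}{n'}$, so that the expected revenue from item $j$ equals
\[
\Ex[]{\mathrm{rev}_j} \;=\; \frac{1}{n'} \sum_{k=1}^{n'} F_j^{-1}\!\left(1 - \tfrac{1}{n' - k + 1}\right) \;=\; \frac{1}{n'} \sum_{\ell=1}^{n'} F_j^{-1}\!\left(1 - \tfrac{1}{\ell}\right).
\]

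The main technical step is to lower-bound this sum in terms of $\Ex[]{\max_{i \in [n']} v_i(\{j\})}$. Applying Lemma~\ref{lemma:quantiles1} with sample size $n'$ in place of $n$ and order-statistic index equal to $1$ (so that $\mu_1 = \Ex[]{\max_{i \in [n']} v_i(\{j\})}$) yields $F_j^{-1}(1 - 1/\ell) \geq \tfrac{\log \ell}{H_{n'}} \Ex[]{\max_{i \in [n']} v_i(\{j\})}$ for every $\ell \in \{1, \dots, n'\}$; the hypothesis $1/\ell \geq \exp(-H_{n'})$ is implied by $\ell \leq n'$. Summing over $\ell$ and using $\sum_{\ell=1}^{n'} \log \ell \geq n' \log n' - n' + 1$ gives
\[
\Ex[]{\mathrm{rev}_j} \;\geq\; \frac{\log n' - 1 + 1/n'}{H_{n'}} \Ex[]{\max_{i \in [n']} v_i(\{j\})} \;=\; \left(1 - O\!\left(\tfrac{1}{\log n'}\right)\right) \Ex[]{\max_{i \in [n']} v_i(\{j\})}.
\]
Lemma~\ref{Lemma:Babaioff_bound_maximum} then accounts for restricting item $j$ to only the $n'$ buyers of $N_j$: it gives $\Ex[]{\max_{i \in [n']} v_i(\{j\})} \geq \tfrac{\log n'}{\log n} \Ex[]{\max_{i \in [n]} v_i(\{j\})} = (1 - O(\log m / \log n)) \Ex[]{\max_{i \in [n]} v_i(\{j\})}$, using $\log n' = \log n - \log m + O(1)$. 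Multiplying the two factors, summing over $j$, and using $\Ex[]{\ALG} \geq \Ex[]{\revpp}$ yields the claimed $1 - O\bigl((1 + \log m)/\log n\bigr)$-competitive ratio.

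The principal obstacle is making the per-item revenue bound tight up to lower-order terms. Pointwise application of Lemma~\ref{lemma:quantiles1} at each quantile $1/\ell$ introduces an extra $1/H_{n'}$ factor, and the averaging over $\ell$ must compensate for this through the elementary estimate $\sum_{\ell=1}^{n'} \log \ell \approx n' H_{n'}$; any slack here would blow the per-group loss up from $O(1/\log n')$ to a constant. Note also that the bound is meaningful only in the regime $m = n^{o(1)}$: if $\log m = o(\log n)$ then $\log n' = \Theta(\log n)$ and the two loss factors combine into $O((1 + \log m)/\log n)$, whereas for $\log m \asymp \log n$ the ratio becomes vacuous, matching the caveat flagged in the section's opening discussion.
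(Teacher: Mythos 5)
Your proposal is correct and follows essentially the same route as the paper: lower-bounding welfare by revenue, computing the per-buyer purchase probability of $\tfrac{1}{n'}$ by telescoping, lower-bounding the quantile prices via Lemma~\ref{lemma:quantiles1} with the first order statistic, applying the integral estimate $\sum_{\ell}\log\ell \geq n'\log n' - n' + 1$, and then using Lemma~\ref{Lemma:Babaioff_bound_maximum} together with subadditivity of the valuations. The only difference is cosmetic (you reindex the prices by $\ell = n'-k+1$ and phrase the per-item revenue as a single averaged sum), so no further comment is needed.
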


\begin{proof}
	We start by considering the case of selling one item among $n' = \left\lfloor \frac{n}{m} \right\rfloor$ buyers where the prices for the item are as in equation \ref{prices_dynamic}. Note that we simplify notation in this context and omit the index of the item. Let $X_i$ be a random variable which is equal to $1$ if buyer i buys the item and $0$ otherwise. Note that by our choice of prices, the item is sold in step $i$  with probability $\frac{1}{n' -i +1}$ what leads to \[ \Ex[]{X_i} = \Pr{X_i = 1} = \frac{1}{n' - i + 1} \cdot \prod_{i' = 1}^{i-1} \left( 1 - \frac{1}{n' - i' + 1} \right) = \frac{1}{n'} \ .\]
	
	Further, buyer $i$ only buys the item if $v_{i}$ exceeds the price. Using Lemma \ref{lemma:quantiles1} allows to calculate 
	\begin{align*}
	p^{(i)} = F^{-1}\left( 1 - \frac{1}{n' - i + 1 } \right) \geq \frac{\log \left( n' - i + 1 \right)}{H_{n'}} \Ex[]{\max_{i \in [n']} v_i} \ . 
	\end{align*}
	Note that this bound is deterministic. An application of the integral estimation $\sum_{i=1}^{n'} \log \left( n' -i+1 \right) \geq n' \log n' - n' + 1$ as well as bounding the harmonic number $H_{n'} \leq \log n' +1$ lead to a lower bound for the expected social welfare: 
	\begin{align*}
	\sum_{i = 1}^{n'} \Ex[]{p^{(i)} X_i} & \geq \sum_{i = 1}^{n'} \frac{\log \left( n' - i + 1  \right)}{n' H_{n'}} \Ex[]{\max_{i \in [n']} v_i} \geq \frac{n' \log n' - n' + 1}{n' H_{n'}} \Ex[]{\max_{i \in [n']} v_i} \\ &\geq \frac{\log n' - 1 + \frac{1}{n'}}{\log n' + 1} \Ex[]{\max_{i \in [n']} v_i} \geq \left( 1 - \frac{2}{\log n'} \right) \Ex[]{\max_{i \in [n']} v_i}
	\end{align*} 
	
	Now, we apply Lemma \ref{Lemma:Babaioff_bound_maximum} which states that the quotient of the expectation of the maximum of $n'$ and $n$ i.i.d. random variables is lower bounded by $\log n' / \log n$ for $n' \leq n$. This leads to the bound \[ \Ex[]{\max_{i \in [n']} v_i} \geq \frac{\log n'}{\log n} \Ex[]{\max_{i \in [n]} v_i} = \left( 1 - O \left( \frac{1 + \log m}{\log n} \right) \right) \Ex[]{\max_{i \in [n]} v_i} \ . \]
	
	In order to generalize this to the case of $m$ items, our pricing strategy ensures that we can apply the received bound for every item separately. To this end, note that only buyers in $N_j$ will consider buying item $j$. Further, also by our prices, every buyer will buy at most one item. By the introduction of indicator random variables $X_{ij}$ indicating if buyer $i$ buys item $j$, we can conclude 
	\begin{align*}
	\Ex[]{\revpp} & = \sum_{j=1}^{m} \sum_{i=1}^{n} \Ex[]{p^{(i)}_j X_{i, j}} = \sum_{j=1}^{m} \sum_{i \in N_j} \Ex[]{p^{(i)}_j X_{i, j}} \\ & \geq \left( 1 - \frac{2}{\log n'} \right) \sum_{j=1}^{m} \Ex[]{\max_{i \in [n']} v_{i}(\{j\}) } \\ & \geq \left( 1 - O \left( \frac{1 + \log m}{\log n} \right) \right) \sum_{j=1}^{m} \Ex[]{\max_{i \in [n]} v_{i}(\{j\})} \\ & \geq \left( 1 - O \left( \frac{1 + \log m}{\log n} \right) \right) \Ex[]{\OPT}  \ .
	\end{align*}
\end{proof}

\begin{corollary}
	The expected revenue of the posted-prices mechanism with subadditive buyers and dynamic prices is a $1 - O \left( \frac{1 + \log m}{\log n}\right)$-fraction of the expected optimal revenue. 
\end{corollary}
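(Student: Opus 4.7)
The plan is that this corollary follows almost immediately from Theorem~\ref{theorem:dynamic_mhr_marginals} together with the fact that, for any individually rational mechanism, the expected revenue is at most the expected social welfare. This observation is already noted in the notation paragraph preceding the theorem: every individually rational mechanism satisfies $\Ex{\revopt} \leq \Ex{\OPT}$, simply because each buyer's payment can never exceed her realized value for the allocated bundle.

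I would first recall that the proof of Theorem~\ref{theorem:dynamic_mhr_marginals} does not merely bound $\Ex{\ALG}$ but in fact shows the stronger inequality
\[
\Ex{\revpp} \geq \left( 1 - O\!\left( \frac{1 + \log m}{\log n} \right) \right) \Ex{\OPT},
\]
since the welfare lower bound is obtained by summing up $\Ex{p_j^{(i)} X_{i,j}}$ over all buyer-item pairs. Chaining this with $\Ex{\revopt} \leq \Ex{\OPT}$ immediately yields
\[
\Ex{\revpp} \geq \left( 1 - O\!\left( \frac{1 + \log m}{\log n} \right) \right) \Ex{\revopt},
\]
which is the claimed revenue competitiveness.

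There is no real obstacle here: the substantive work was already carried out in Theorem~\ref{theorem:dynamic_mhr_marginals}, which bounds the mechanism's revenue (not just its welfare) against $\Ex{\OPT}$, and the step $\Ex{\revopt} \leq \Ex{\OPT}$ is a consequence of individual rationality that has been recorded as a standing assumption of the model. The only point worth flagging in the write-up is that the inequality provided by the theorem is a revenue-to-welfare bound; hence the corollary is truly a direct consequence and no additional estimates, MHR-specific arguments, or pricing re-analysis are required.
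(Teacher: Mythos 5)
Your proposal is correct and matches the paper's intended argument exactly: the proof of Theorem~\ref{theorem:dynamic_mhr_marginals} already establishes $\Ex{\revpp} \geq \left(1 - O\left(\frac{1+\log m}{\log n}\right)\right)\Ex{\OPT}$, and combining this with the individual-rationality bound $\Ex{\revopt} \leq \Ex{\OPT}$ recorded in the notation paragraph yields the corollary. No further argument is needed.
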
 

Note that the assumption of buyers' valuations being identically distributed is actually a too strong requirement for these results. For the proofs in this chapter it would be sufficient to consider buyers having identical marginals on single item sets, but correlations between items might be buyer-specific.

\section{Static Pricing for Subadditive Valuations with MHR marginals}
\label{appendix:mhr_marginals_static}

The notation for subadditive valuation functions as well as the definition of MHR marginals are given in Appendix \ref{appendix:mhr_marginals_dynamic}. For the case of static prices, we give a $1 - O \left( \frac{\log\log\log n}{\log n} + \frac{\log m}{\log n}\right)$-competitive mechanism. 

The general design idea for our mechanism is as follows. Setting fairly low prices will put high probability on the event of selling all items. Although we cannot control which buyer will buy which bundle of items, we can extract all social welfare of the posted prices mechanism as revenue. Therefore, having prices which still ensure that the revenue can be lower bounded by a suitable fraction of the optimal social welfare will lead to the desired bound.

For any item, we set the price of item $j$ to \[ p_j = F_j^{-1} \left( 1 - q \right) \textnormal{, where } q = \frac{m \log \log n}{n}  \] and $F_j$ denotes the marginal distribution of $v_i(\{j\})$. Observe the similarity to the pricing structure in Section \ref{section:independent_valuations_static}. This allows us to prove the following theorem.

\begin{restatable}{theorem}{staticmhrmarginals}
	\label{theorem:static_mhr_marginals}
	The posted-prices mechanism with subadditive buyers and static prices is \newline $1 - O \left( \frac{\log\log\log n}{\log n} + \frac{\log m}{\log n}\right)$-competitive with respect to social welfare. 
\end{restatable}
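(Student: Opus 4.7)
The plan is to lower-bound the expected revenue of the mechanism; by individual rationality this is itself a lower bound on welfare, so $\Ex{\ALG} \geq \Ex{\revpp} \geq \sum_{j=1}^m p_j \Pr{\text{$j$ is sold}}$. Recalling the subadditive envelope bound $\Ex{\OPT} \leq \sum_j \Ex{\max_{i \in [n]} v_i(\{j\})}$ stated in the preliminaries of Appendix~\ref{appendix:mhr_marginals_dynamic}, the theorem will follow once I establish, for every $j$, both (a) a selling-probability bound $\Pr{\text{$j$ is sold}} \geq 1 - O(1/\log n)$ and (b) a price bound $p_j \geq \left(1 - O\!\left(\tfrac{\log m + \log\log\log n}{\log n}\right)\right) \Ex{\max_{i \in [n]} v_i(\{j\})}$.

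For (a) the plan is to mimic exactly the unit-demand static argument of Theorem~\ref{theorem_independent_static}, using the following subadditivity-powered replacement of the ``only wants $j$'' event. If buyer $i$ satisfies $v_i(\{j\}) \geq p_j$ and $v_i(\{j'\}) < p_{j'}$ for every $j' \neq j$, then by subadditivity any bundle $S$ with $j \notin S$ has $v_i(S) \leq \sum_{k \in S} v_i(\{k\}) < \sum_{k \in S} p_k$ and hence strictly negative utility, while $\{j\}$ itself has non-negative utility; therefore buyer $i$ must buy a bundle containing $j$. Treating the marginals as in Theorem~\ref{theorem_independent_static} gives a per-buyer lower bound of $q(1-q)^{m-1}$ on this event, and with $q = \frac{m \log \log n}{n}$ Bernoulli's inequality yields, for $m = n^{o(1)}$, a per-buyer probability of buying $j$ of at least $\frac{\log \log n}{n}(1-o(1))$. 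Hence the probability that $j$ is never sold is at most $\left(1 - \tfrac{(1-o(1))\log \log n}{n}\right)^n = O\!\left(\tfrac{1}{\log n}\right)$.

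For (b) the plan is to invoke the quantile bound for MHR distributions. Since $p_j = F_j^{-1}(1-q)$ is exactly the $(1-q)$-quantile, Lemma~\ref{lemma:quantiles1} applied with index $j=1$ (so that $\mu_1 = \Ex{\max_{i \in [n]} v_i(\{j\})}$) yields
\[
p_j \;\geq\; -\frac{\log q}{H_n}\Ex{\max_{i \in [n]} v_i(\{j\})} \;=\; \frac{\log n - \log m - \log\log\log n}{H_n}\Ex{\max_{i \in [n]} v_i(\{j\})},
\]
and $H_n \leq \log n + 1$ gives the claimed factor. Multiplying the two bounds and summing over $j$ then produces the overall $1 - O\!\left(\tfrac{\log\log\log n}{\log n} + \tfrac{\log m}{\log n}\right)$ competitive ratio. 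Because this bounds revenue against the welfare benchmark, the same calculation automatically yields the corresponding revenue guarantee as well.

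The main obstacle is step (a): the clean per-buyer product estimate $q(1-q)^{m-1}$ implicitly treats the marginals $v_i(\{1\}), \dots, v_i(\{m\})$ within a single buyer as independent, matching the unit-demand static analysis in Section~\ref{section:independent_valuations_static}. The remaining pieces (the subadditivity observation identifying the sufficient event, the quantile estimate from Lemma~\ref{lemma:quantiles1}, and the final summation) are all straightforward once that selling probability is in hand.
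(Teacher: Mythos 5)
Your price bound (b) and the final assembly match the paper's proof exactly, but step (a) contains a genuine gap that you yourself flag: the estimate $q(1-q)^{m-1}$ for the event ``$v_i(\{j\}) \geq p_j$ and $v_i(\{j'\}) < p_{j'}$ for all $j' \neq j$'' requires the single-item marginals of one buyer to be independent across items. The subadditive setting of Appendix~\ref{appendix:mhr_marginals_static} explicitly allows \emph{arbitrary} correlation between items (only the marginals are assumed MHR), and under adversarial correlation your sufficient event can have probability far below $q(1-q)^{m-1}$ --- even probability zero (e.g., if $v_i(\{j\}) \geq p_j$ always forces $v_i(\{j'\}) \geq p_{j'}$ for some other $j'$). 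A union bound only gives $q - (m-1)q$, which is useless for $m \geq 2$, so the per-item selling probability cannot be controlled this way without the independence assumption imported from Section~\ref{section:independent_valuations_static}.

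The paper avoids this by not bounding $\Pr{\text{$j$ sold}}$ item by item at all. Instead it lower-bounds the probability that \emph{all} items are sold simultaneously and writes $\Ex{\revpp} \geq \Pr{\text{all items sold}} \cdot \sum_{j} p_j$. The only probabilistic input is the marginal fact $\Pr{v_i(\{j\}) > p_j} = q$ for each single item, which gives $\Pr{\text{buyer $i$ buys at least one item} \growingmid M^{(i)} \neq \emptyset} \geq q$ regardless of correlations; a coupling with $n$ independent coins of bias $q$ and a Chernoff bound (with $\Ex{Z} = nq = m \log\log n$) then shows at least $m$ buyers make a purchase, hence all $m$ items clear, with probability $1 - O(1/\log n)$. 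If you replace your step (a) with this ``all-items-sold'' argument, the rest of your proposal goes through verbatim; as written, it only proves the theorem under an additional independence hypothesis that the statement does not assume.
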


\begin{proof}
	Lower bounding the expected revenue by a suitable fraction of the optimal social welfare will allow to prove the theorem.
	
	Recall that by our assumption $v_i(\{j\})$ is an MHR random variable for each $j \in [m]$. Therefore, we can apply Lemma~\ref{lemma:quantiles1} to get a bound on $p_j$.
	\begin{align*}
	p_j & = F_j^{-1} \left(1-q\right) \geq \frac{- \log q}{H_n} \Ex[]{\max_{i \in [n]} v_i(\{j\})} = \frac{\log n - \log \log \log n - \log m}{H_n} \Ex[]{\max_{i \in [n]} v_i(\{j\})} \\ & \geq \left( 1 - \frac{\log \log \log n + \log m + 1}{\log n} \right) \Ex[]{\max_{i \in [n]} v_i(\{j\})} .
	\end{align*} \newline
	
	Now, we aim for a lower bound on the probability that all items are sold in our mechanism. To this end, let $M^{(i)}$ denote the (random) set of items that are still unsold as buyer $i$ arrives. Observe that buyer $i$ will buy at least one item if $v_i(\{j\}) > p_j$ for some $j \in M^{(i)}$. We defined the prices such that $\Pr{v_i(\{j\}) > p_j} = q$. Consequently, $\Pr{\text{buyer $i$ buys an item} \growingmid M^{(i)} \neq \emptyset} \geq q$ for all $i$.
	
	To bound the probability of selling all items, consider the following thought experiment: For every buyer $i$, we toss a coin which shows head with probability $q$. Denote by $Z$ the random variable counting the number of occurring heads in $n$ coin tosses. By the above considerations, the probability for tossing head in our thought experiment is a lower bound on the probability that buyer $i$ buys at least one item as long as there are items remaining. As a consequence, the probability for the event of seeing at least $m$ times head is a lower bound on the probability of selling all items in our mechanism.
	
	Using that \[ \Ex[]{Z} = nq = m \log \log n \ ,\] a Chernoff bound with $\delta = 1 - \frac{1}{\log \log n}$ yields
	\begin{align*}
	\Pr{Z < m} &= \Pr{Z < (1 - \delta) \Ex{Z}} \leq \exp\left( - \frac{1}{2} \delta^2 \Ex{Z} \right) \\[0.5em]
	&= \exp\left( - \frac{1}{2} \frac{(\log \log n - 1)^2}{\log \log n} m \right) \stackrel{(1)}{\leq} \exp\left( - \log \log n + 2 \right) = \frac{\e^2}{\log n} \,,
	\end{align*}
	where in (1) we assumed that $m \geq 2$. Observe that the case of $m = 1$ is covered by our results in Section \ref{section:alpha_correlation_static}.
	
	Combining all these, we can lower-bound the expected social welfare of the posted prices mechanism by 
	\begin{align*}
	\Ex[]{\ALG} & \geq \Ex[]{\revpp} \geq \Pr{\textnormal{all items are sold}} \left(\sum_{j=1}^{m} p_j\right) \geq \Pr{Z \geq m} \sum_{j=1}^{m} F_j^{-1} \left( 1- q \right) \\ & \geq \left( 1 - \frac{\e^2}{\log n} \right) \left( 1 - \frac{\log \log \log n + \log m + 1}{\log n} \right) \sum_{j=1}^{m} \Ex[]{\max_{i \in [n]} v_i(\{j\})} \\  & \geq \left( 1 - O \left( \frac{\log\log\log n}{\log n} + \frac{\log m}{\log n}\right) \right) \Ex[]{\OPT} \ .
	\end{align*}	
	
\end{proof}

Observe that the proof of Theorem \ref{theorem:static_mhr_marginals} only requires to bound the expected revenue of our mechanism. Bounding the revenue in the optimal allocation by the expected optimal social welfare, we can state the following corollary.

\begin{corollary}
	The expected revenue achieved by the posted-prices mechanism with subadditive buyers and static prices yields a $1 - O \left( \frac{\log\log\log n}{\log n} + \frac{\log m}{\log n}\right)$-fraction of the expected optimal revenue. 
\end{corollary}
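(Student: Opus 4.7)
The plan is to bound the expected revenue $\Ex{\revpp}$ of the mechanism from below by a $\bigl(1 - O\bigl(\frac{\log\log\log n}{\log n} + \frac{\log m}{\log n}\bigr)\bigr)$-fraction of $\Ex{\OPT}$; since $\Ex{\ALG} \geq \Ex{\revpp}$ by individual rationality, this gives the theorem. I would lower bound the revenue by the crude estimate $\Pr{\text{all items sold}} \cdot \sum_j p_j$, and control each factor separately using the MHR property of the marginals $F_j$.

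First, I would lower bound each $p_j = F_j^{-1}(1-q)$ via Lemma~\ref{lemma:quantiles1} (applied with $j=1$, so $H_n - H_{j-1} = H_n$). Plugging in $-\log q = \log n - \log m - \log\log\log n$ and using $H_n \leq \log n + 1$ yields
\begin{align*}
p_j \;\geq\; \frac{\log n - \log m - \log\log\log n}{H_n}\, \Ex{\max_{i \in [n]} v_i(\{j\})} \;\geq\; \left(1 - \frac{\log\log\log n + \log m + 1}{\log n}\right) \Ex{\max_{i \in [n]} v_i(\{j\})}.
\end{align*}
Summing over $j \in [m]$ and invoking subadditivity ($\Ex{\OPT} \leq \sum_j \Ex{\max_i v_i(\{j\})}$) converts $\sum_j p_j$ into a bound of the right form in terms of $\Ex{\OPT}$.

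The main step is to show that all $m$ items are sold with probability $1 - O(1/\log n)$. The key observation is: whenever $M^{(i)} \neq \emptyset$, pick any fixed remaining item $j^\ast \in M^{(i)}$; if the fresh draw satisfies $v_i(\{j^\ast\}) > p_{j^\ast}$ then the singleton $\{j^\ast\}$ already yields strictly positive utility, so buyer $i$'s utility-maximizing bundle must be nonempty and she purchases at least one item. Since $\Pr{v_i(\{j^\ast\}) > p_{j^\ast}} = q$ regardless of the history, the number of buyers who purchase at least one item stochastically dominates $Z := \sum_{i=1}^n Y_i$ with i.i.d.\ $Y_i \sim \mathrm{Bernoulli}(q)$. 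If $Z \geq m$, then all $m$ items must have been sold (each purchasing buyer removes at least one item, and purchases can only fail once inventory is empty).

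A Chernoff bound with $\Ex{Z} = nq = m \log\log n$ and $\delta = 1 - \tfrac{1}{\log\log n}$ gives, for $m \geq 2$,
\begin{align*}
\Pr{Z < m} \;\leq\; \exp\!\left(-\tfrac{1}{2}\delta^2 \Ex{Z}\right) \;=\; \exp\!\left(-\tfrac{(\log\log n - 1)^2}{2 \log\log n}\, m\right) \;\leq\; \frac{\e^2}{\log n},
\end{align*}
while the case $m = 1$ is already covered by the single-item analysis in Section~\ref{section:alpha_correlation_static}. Combining the two estimates,
\begin{align*}
\Ex{\ALG} \;\geq\; \Ex{\revpp} \;\geq\; \Pr{Z \geq m}\sum_{j=1}^m p_j \;\geq\; \left(1 - \frac{\e^2}{\log n}\right)\!\left(1 - \frac{\log\log\log n + \log m + 1}{\log n}\right)\Ex{\OPT},
\end{align*}
which is of the claimed form. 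The only subtle point is the stochastic domination step: one must verify that the ``at least $q$ chance per step'' bound truly holds conditional on any adaptive history of prior purchases, which is ensured by fixing the witness $j^\ast$ before looking at $v_i$ and relying on independence of $v_i$ from the past.
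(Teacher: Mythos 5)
Your argument is correct and is essentially the paper's own proof of Theorem~\ref{theorem:static_mhr_marginals}: the same price lower bound via Lemma~\ref{lemma:quantiles1}, the same coin-tossing domination argument with the Chernoff bound, and the same revenue estimate $\Pr{\text{all sold}}\cdot\sum_j p_j$. The only step you leave implicit for the corollary as stated is that a lower bound against $\Ex{\OPT}$ also yields one against $\Ex{\revopt}$, since $\Ex{\revopt} \leq \Ex{\OPT}$ for any individually rational mechanism — exactly the one-line observation the paper uses to deduce the corollary from the theorem's proof.
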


\end{document}